\providecommand{\tabularnewline}{\\}
\theoremstyle{definition}
\newtheorem{defn}{\protect\definitionname}
\theoremstyle{definition}
\newtheorem{example}{\protect\examplename}
\theoremstyle{plain}
\newtheorem{prop}{\protect\propositionname}
\theoremstyle{plain}
\newtheorem{lem}{\protect\lemmaname}
\theoremstyle{plain}
\newtheorem{thm}{\protect\theoremname}
  \newenvironment{proof}[1][\proofname]{\par
    \normalfont\topsep6\p@\@plus6\p@\relax
    \trivlist
    \itemindent\parindent
    \item[\hskip\labelsep
          \scshape
      #1]\ignorespaces
  }{%
    \endtrivlist\@endpefalse
  }
  \providecommand{\proofname}{Proof}
\theoremstyle{remark}
\newtheorem{rem}{\protect\remarkname}
\theoremstyle{plain}
\newtheorem{cor}{\protect\corollaryname}
\def\squarecorner#1{
    %
    \pgf@x=\the\wd\pgfnodeparttextbox%
    \pgfmathsetlength\pgf@xc{\pgfkeysvalueof{/pgf/inner xsep}}%
    \advance\pgf@x by 2\pgf@xc%
    \pgfmathsetlength\pgf@xb{\pgfkeysvalueof{/pgf/minimum width}}%
    \ifdim\pgf@x<\pgf@xb%
        \pgf@x=\pgf@xb%
    \fi%
    %
    \pgf@y=\ht\pgfnodeparttextbox%
    \advance\pgf@y by\dp\pgfnodeparttextbox%
    \pgfmathsetlength\pgf@yc{\pgfkeysvalueof{/pgf/inner ysep}}%
    \advance\pgf@y by 2\pgf@yc%
    \pgfmathsetlength\pgf@yb{\pgfkeysvalueof{/pgf/minimum height}}%
    \ifdim\pgf@y<\pgf@yb%
        \pgf@y=\pgf@yb%
    \fi%
    %
    \ifdim\pgf@x<\pgf@y%
        \pgf@x=\pgf@y%
    \else
        \pgf@y=\pgf@x%
    \fi
    %
    \pgf@x=#1.5\pgf@x%
    \advance\pgf@x by.5\wd\pgfnodeparttextbox%
    \pgfmathsetlength\pgf@xa{\pgfkeysvalueof{/pgf/outer xsep}}%
    \advance\pgf@x by#1\pgf@xa%
    \pgf@y=#1.5\pgf@y%
    \advance\pgf@y by-.5\dp\pgfnodeparttextbox%
    \advance\pgf@y by.5\ht\pgfnodeparttextbox%
    \pgfmathsetlength\pgf@ya{\pgfkeysvalueof{/pgf/outer ysep}}%
    \advance\pgf@y by#1\pgf@ya%
}
    \savedanchor\northeast{\squarecorner{}}
    \savedanchor\southwest{\squarecorner{-}}
\newcommand{\Rmnum}[1]{\expandafter\@slowromancap\romannumeral #1@}
\providecommand{\corollaryname}{Corollary}
\providecommand{\definitionname}{Definition}
\providecommand{\examplename}{Example}
\providecommand{\lemmaname}{Lemma}
\providecommand{\propositionname}{Proposition}
\providecommand{\remarkname}{Remark}
\providecommand{\theoremname}{Theorem}
\begin{document}
\title{Distributed Stabilization of Signed Networks via Self-loop Compensation}
\author{Haibin~Shao,~\IEEEmembership{Member,~IEEE,} and Lulu Pan, \IEEEmembership{Member,~IEEE}\thanks{This work was supported by the National Science Foundation of China
(Grant No.62103278, 61973214,  61963030).\textcolor{black}{{} }}\thanks{H.\,Shao and L.\,Pan are with the Department of Automation, Shanghai
Jiao Tong University, and Key Laboratory of System Control and Information
Processing, Ministry of Education of China, and Shanghai Engineering
Research Center of Intelligent Control and Management, Shanghai, 200240,
China. }\thanks{\textcolor{black}{Corr}esponding author: Lulu Pan (llpan@sjtu.edu.cn).}}
\maketitle
\begin{abstract}
This paper examines the stability and distributed stabilization of
signed multi-agent networks. Here, positive semidefiniteness is not
inherent for signed Laplacians, which renders the stability and consensus
of this category of networks intricate. First, we examine the stability
of signed networks by introducing a novel graph-theoretic objective
\emph{negative cut set}, which implies that manipulating negative
edge weights cannot change a unstable network into a stable one. Then,
inspired by the diagonal dominance and stability of matrices, a local
state damping mechanism is introduced using \emph{self-loop compensation}.
The self-loop compensation is only active for those agents who are
incident to negative edges and can stabilize signed networks in a
fully distributed manner. Quantitative connections between self-loop
compensation and the stability of the compensated signed network are
established for a tradeoff between compensation efforts and network
stability. Necessary and/or sufficient conditions for predictable
cluster consensus of compensated signed networks are provided. The
optimality of self-loop compensation is discussed. Furthermore, we
extend our results to directed signed networks where the symmetry
of signed Laplacian is not free. The correlation between the stability
of the compensated dynamics obtained by self-loop compensation and
eventually positivity is further discussed. Novel insights into the
stability of multi-agent systems on signed networks in terms of self-loop
compensation are offered. Simulation examples are provided to demonstrate
the theoretical results. 
\end{abstract}

\begin{IEEEkeywords}
Signed Laplacian; self-loop compensation; positive semidefinite; positive
feedback; structural balance; virtual leader.
\end{IEEEkeywords}

\section{Introduction}

Achieving consensus via diffusive local interactions is an important
paradigm in distributed algorithms on networks, where graph Laplacian
plays a central role in analysis and design of the network performance\ \cite{mesbahi2010graph,kia2019tutorial,qin2016recent}.
A neat feature of graph Laplacian lies in its positive semidefiniteness
which determines the stability and consensus of multi-agent networks
\cite{olfati2004consensus,godsil2001algebraic}. 

For a long time, the consensus problem has been extensively examined
in networks with only cooperative interactions\ \cite{mesbahi2010graph}.
However, antagonistic interactions can also exist which can capture,
for instance, the antagonism in social networks which can lead to
opinion polarization, or critical lines in power networks which may
cause small-disturbance instability \cite{song2019extension,song2017network,ding2016negative,proskurnikov2015opinion}.
 Recently, there has been a growing interest in multi-agent systems
on signed networks \cite{meng2018uniform,chen2020spectral,altafini2013consensus,xia2011clustering,shi2019dynamics,pan2020sciencechina,pan2018bipartite,Pan2021Tac,wang2020characterizing,cheng2020seeking,shi2021sub}.
In this line of work, although the optimal selection of edge weights
for a consensus algorithm can be involving negative weights \cite{xiao2004fast},
a notable observation is that the consensus or even stability of the
multi-agent network cannot be guaranteed by negating the weight of
only one edge from positive to negative \cite{zelazo2017robustness,pan2016laplacian}. 

From a \emph{microscopic} view, the negative edges in diffusively
coupled networks turn out to be a mechanism of \emph{positive feedback},
that is, the state deviations amongst neighboring agents reinforce;
while the positive edges are a mechanism of \emph{negative feedback},
namely, the state deviations between neighboring agents reduce \cite{shi2019dynamics}.
Notably, the positive and negative feedbacks are also ubiquitously
observed in biological networks functioning as mechanisms of activation
and inhibition  \cite{sepulchre2019control,markowetz2007inferring,wong2004combining,albert2005scale,barabasi2011network}.
From a \emph{macroscopic} view, the stability guarantee when implementing
the diffusive interaction protocol on signed networks turns out to
be a critical prerequisite with respect to the network functionality
and performance \cite{zelazo2014definiteness,meng2018convergence}.
Moreover, the typical collective behaviors of signed networks can
often be cluster consensus whose explicit characterization is also
challenging \cite{pan2016laplacian,pan2021cluster,xia2011clustering,zelazo2017robustness}.
In fact, the stability and cluster consensus problems of signed networks
can be attributed to examining the spectral properties of signed Laplacian,
i.e., the Laplacian matrix of signed networks \cite{hershkowitz1992recent,pan2016laplacian,bronski2014spectral,chen2017spectral}. 

The diagonal dominance of graph Laplacian (a global property) naturally
guarantees the stability of the multi-agent system on unsigned networks
\cite{pan2016laplacian,altafini2013consensus}. For signed networks,
however, there can exist mutually cancellation amongst positive and
negative weights when summing them together for the diagonal entries
of signed Laplacian, whose inherent properties under unsigned networks
such as diagonal dominance and positive semidefiniteness of Laplacian
are no longer valid. Therefore, the analysis of signed Laplacian cannot
fall into the traditional analysis framework such as $M$-matrix theory
or Gershgorin disc analysis \cite{horn1990matrix,olfati2004consensus}. 

In \cite{zelazo2017robustness}, it is shown that a signed Laplacian
with only one negative edge is positive semidefinite if the magnitude
of the negative edge weight is less than or equal to the reciprocal
of the effective resistance between the nodes of the negative edge
over the positive subgraph.\textbf{ }This result was then extended
therein to the network with multiple negative weights under the restriction
that different negative edges are not on the same cycle. The aforementioned
result has been subsequently re-examined from the perspective of geometrical
and passivity-based approaches \cite{chen2016definiteness}. The spectral
properties of signed Laplacians with connections to eventual positivity
were recently examined in \cite{altafini2019investigating,chen2020spectral,altafini2014predictable}.
Optimal edge weight allocation for positive semidefiniteness of signed
Laplacians was investigated in \cite{wei2018optimal}. The stability
of the signed Laplacian matrix is also examined in terms of matrix
signatures \cite{pan2016laplacian,bronski2014spectral} and the Schur
stability criterion \cite{pirani2014stability}. 

Although notable results have been developed on the interplay between
the global characterization of edge weights and stability/consensus
of signed networks from the perspective of such as effective resistance
in electrical networks (only applicable for undirected networks) or
eventually positivity, the related results only provide the theoretical
guarantee on positive semidefiniteness of signed Laplacian using global
information, and this information often cannot be efficiently obtained
by each agent only using local observations. So far, to our best knowledge,
very few works appeal for measures that should be taken to stabilize
a signed network (undirected or directed) in a fully distributed manner,
which is exactly the pursuit of this work.

Matrix stability theory  plays a central role in characterizing stability
criteria of dynamical systems \cite{kaszkurewicz2012matrix,arcak2011diagonal,hershkowitz1992recent}.
A Hermitian diagonally dominant matrix with real non-negative diagonal
entries is positive semidefinite, which is a crucial fact in the stability
guarantee of Laplacian of unsigned networks. However, for signed networks,
a remarkable fact is that the magnitude of weights associated with
negative edges between neighboring agents may lay no influence on
shifting an unstable signed network with cut set into stable \cite{song2017characterization,zelazo2017robustness,pan2016laplacian}.
These facts motivate us to develop a suitably distributed paradigm
to guarantee the stability and consensus of signed networks. To regain
the diagonal dominance of signed Laplacian, the manipulation of its
diagonal entries turns out to be effective since it can be realized
by each agent properly adopting feedback gains on their state in the
respective protocol, namely, this process can be implemented in a
fully distributed manner. We shall refer to the aforementioned operation
as \emph{self-loop compensation} in the following discussions. The
self-loop compensation is essentially introducing a state damping
mechanism for agents that are incident to negative edges. However,
this damping mechanism can be time- and resource-consuming which are
undesired, especially for networks whose units are highly energy-constrained
(e.g., a swarm of micro drone). It is therefore necessary to examine
an efficient selection of self-loop compensation for stabilizing signed
networks.

The main contributions of this paper can be summarized as follows.

1) A graph-theoretic characterization of stability of signed Laplacian
is provided from the perspective of negative cut set\emph{ }which
not only provides insight into the design of stable signed networks
but also indicates that manipulation of edge weights cannot change
an unstable signed network containing negative cut set into a stable
one.

2) Compared with the existing results that provide global characterizations
of signed Laplacian's positive semi-definiteness, this paper examines
the self-loop compensation mechanism in the design of local interaction
protocol amongst agents and further explores the interplay between
self-loop compensation of signed Laplacian and stability/consensus
of the multi-agent system on signed networks. 

3) We show that only agents who are incident to negative edges need
to be compensated. Analytical connections between the self-loop compensation
and the collective behaviors of the compensated signed network are
established. Necessary and sufficient conditions for (cluster) consensus
of compensated signed networks are provided as well as the explicit
characterization of their respective steady-states. The optimality
of self-loop compensation is also discussed. 

4) Both undirected and directed signed networks are investigated.
It is shown that structurally imbalanced networks need less self-loop
compensation to be stable than structurally balanced ones. The correlation
between the stability of the compensated dynamics obtained by self-loop
compensation and eventually positivity is finally discussed.

The remainder of this paper is organized as follows. Notions and graph
theory are introduced in \S 2. We then provide the discussion of
Laplacian dynamics on signed networks from the perspective of positive
and negative feedback loops in \S 3. A graph-theoretic stability
analysis using negative cut set is provided in \S 4, followed by
the introduction of self-loop compensation and results concerning
the magnitude of self-loop compensation and stability and (cluster)
consensus of the compensated signed network in \S 5. Furthermore,
we discuss the correlation between the compensated signed Laplacian
and eventually positivity in \S 6. Concluding remarks are finally
provided in \S 7.

\section{Preliminaries}

\subsection{Notations}

Let $\mathbb{R}$, $\mathbb{C}$, and $\mathbb{N}$ be the set of
real, complex, and natural numbers, respectively. Denote $\underline{s}=\left\{ 1,\ldots,s\right\} $
for an $s\in\mathbb{N}$. Let $\mathfrak{D}_{\ge0}$ denote the class
of real diagonal matrices $D=(d_{ij})\in\mathbb{R}^{n\times n}$ such
that $d_{ii}\ge0$ for $i\in\underline{n}$ and $d_{ij}=0$ for $i\not=j$.
Let $\text{{\bf Re}}(\cdot)$ denote the real part of a complex number.
Denote the entry of a matrix $M\in\mathbb{R}^{n\times n}$ located
in $i$-th row and $j$-th column as $[M]_{ij}$. Denote the eigenvalues
of the matrix $M=[m_{ij}]\in\mathbb{R}^{n\times n}$ as $\lambda_{i}(M)$
where $i\in\underline{n}$. The spectrum of the matrix $M$ is the
set of all its eigenvalues, denoted by $\boldsymbol{\lambda}(M)$.
The spectral radius of a square matrix $M$ is the largest absolute
value of its eigenvalues, denoted by $\rho(M)$. Let $\mathds{1}_{n}$
and $\mathbf{0}_{n}$ denote $n\times1$ vector of all ones and zeros,
respectively. We write $M\succ0$ ($M\succeq0$) if a symmetric matrix
$M\in\mathbb{R}^{n\times n}$ is positive definite (semidefinite).
A matrix $M=[m_{ij}]\in\mathbb{R}^{p\times q}$ is positive (nonnegative),
then we denote $M>0$ ($M\geq0$). For a pair of matrices $M\in\mathbb{R}^{p\times q}$
and $M^{\prime}\in\mathbb{R}^{p\times q}$, we write $M\geq M^{\prime}$
($M>M^{\prime}$) if $M-M^{\prime}\ge0$ ($M-M^{\prime}>0$). The
absolute value of a matrix $M=[m_{ij}]\in\mathbb{R}^{p\times q}$
is denoted by $|M|=[|m_{ij}|]\in\mathbb{R}^{p\times q}$. For two
vectors $\boldsymbol{x},\boldsymbol{y}\in\mathbb{R}^{q}$, if there
exist $i^{\prime},i^{\prime\prime}\in\underline{q}$ ($i^{\prime}\text{\ensuremath{\not}=}i^{\prime\prime}$)
such that $[\boldsymbol{x}]_{i^{\prime}}>[\boldsymbol{y}]_{i^{\prime}}$
and $[\boldsymbol{x}]_{i^{\prime\prime}}<[\boldsymbol{y}]_{i^{\prime\prime}}$,
then we write $\boldsymbol{x}\veebar\boldsymbol{y}$. 

\subsection{Graph Theory}

Let $\mathcal{G}=(\mathcal{V},\mathcal{E},W)$ denote a network with
the node set $\mathcal{V}=\left\{ 1,2,\ldots,n\right\} $, the edge
set $\mathcal{E}\subseteq\mathcal{V}\times\mathcal{V}$, and the adjacency
matrix $W=[w_{ij}]\in\mathbb{R}^{n\times n}$. Here, adjacency matrix
$W$ satisfies $w_{ij}\not=0$ if and only if $(i,j)\in\mathcal{E}$
and $w_{ij}=0$ otherwise. A network is undirected when $(i,j)\in\mathcal{E}$
if and only if $(j,i)\in\mathcal{E}$; otherwise, it is directed.
A network $\mathcal{G}$ is referred to as signed network if there
exist an edge $(i,j)\in\mathcal{E}$ such that $w_{ij}<0$, otherwise,
$\mathcal{G}$ is an unsigned network. A network is connected if any
two distinct nodes are reachable from one another via paths. The neighbor
set of an agent $i\in\mathcal{V}$ is $\mathcal{N}_{i}=\left\{ j\in\mathcal{V}|(i,j)\in\mathcal{E}\right\} $.
The neighbor set of an agent $i\in\mathcal{V}$ can be divided by
$\mathcal{N}_{i}=\mathcal{N}_{i}^{+}\cup\mathcal{N}_{i}^{-}$ where
$\mathcal{N}_{i}^{+}=\left\{ j\in\mathcal{V}|(i,j)\in\mathcal{E}\thinspace\text{and}\thinspace w_{ij}>0\right\} $
and $\mathcal{N}_{i}^{-}=\left\{ j\in\mathcal{V}|(i,j)\in\mathcal{E}\thinspace\text{and}\thinspace w_{ij}<0\right\} .$
Notably, \emph{structurally balance} is an important structure of
signed networks \cite{harary1953notion,cartwright1956structural}.
A signed network $\mathcal{G}=(\mathcal{V},\mathcal{E},W)$ is structurally
balanced if there is a bipartition of the node set $\mathcal{V}$,
say $\mathcal{V}_{1}$ and $\mathcal{V}_{2}$ such that $\mathcal{V}=\mathcal{V}_{1}\cup\mathcal{V}_{2}$
and $\mathcal{V}_{1}\cap\mathcal{V}_{2}=\emptyset$, satisfies that
the weights on the edges within each subset is positive, but negative
for edges between the two subsets\ \cite{harary1953notion}. A signed
network is structurally imbalanced if it is not structurally balanced.

\section{Signed Laplacian Dynamics}

Consider the signed Laplacian $\mathcal{L}(\mathcal{G})=[l_{ij}]\in\mathbb{R}^{n\times n}$
where $l_{ij}=\sum_{k=1}^{n}w_{ik}$ for $i=j$ and $l_{ij}=-w_{ij}$
for $i\ne j$. Note that $\mathcal{L}(\mathcal{G})\mathds{1}_{n}=\mathbf{0}_{n}$
holds which implies the signed Laplacian always has a zero eigenvalue.
We know that the signed Laplacian $\mathcal{L}(\mathcal{G})$ implies
the following interaction protocol amongst agents in a local level,

\begin{equation}
\dot{x}_{i}(t)=\sum_{j\text{\ensuremath{\in}\ensuremath{\mathcal{N}}}_{i}}w_{ij}(x_{j}(t)-x_{i}(t)),\thinspace i\in\mathcal{V},\label{eq:consensus-protocol-local}
\end{equation}
where $x_{i}(t)\in\mathbb{R}$ denotes the state of agent $i\in\mathcal{V}$.
Moreover, the overall dynamics of (\ref{eq:consensus-protocol-local})
can be characterized by the following signed Laplacian dynamics, 
\begin{equation}
\dot{\boldsymbol{x}}(t)=-\mathcal{L}(\mathcal{G})\boldsymbol{x}(t),\label{eq:consensus-protocol-overall}
\end{equation}
where $\boldsymbol{x}(t)=[x_{1}(t),x_{2}(t),\ldots,x_{n}(t)]^{T}$.
\begin{figure}[htbp]
\centering{}\includegraphics[width=9cm]{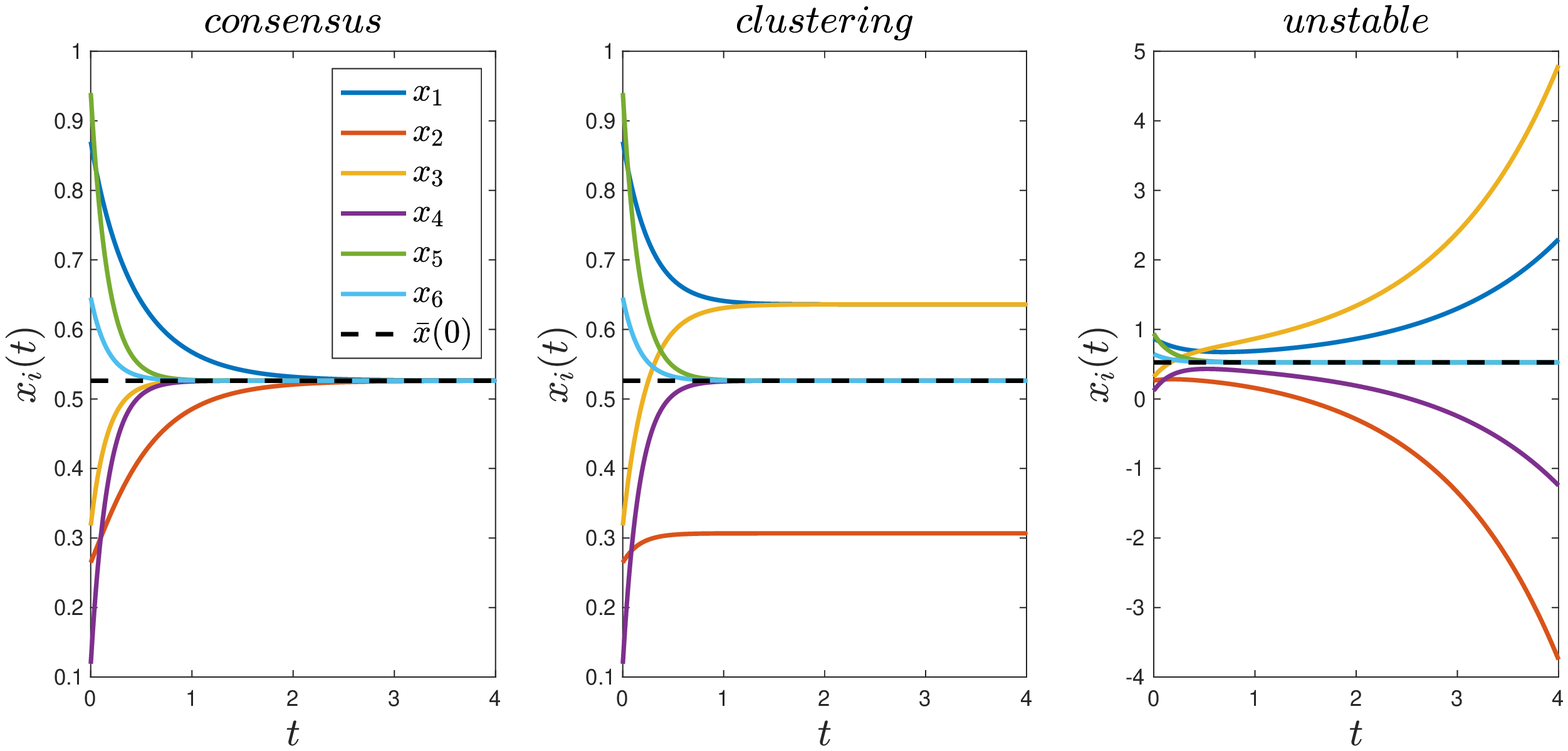}\\
\,\,\begin{tikzpicture}[scale=1]
	
    \node (a) at (1,-0.5) [] {$(a)$};	

	\node (n1) at (0.5,2) [circle,draw] {{\tiny 1}};
    \node (n2) at (1.5,2) [circle,draw] {{\tiny 2}};
    \node (n3) at (2,1) [circle,draw] {{\tiny 3}};
	\node (n4) at (1.5,0) [circle,draw] {{\tiny 4}};
    \node (n5) at (0.5,0) [circle,draw] {{\tiny 5}};
    \node (n6) at (0,1) [circle,draw] {{\tiny 6}};

		\draw [-,thick,dashed] (n1) -- (n2); 	
    	\draw [-,thick] (n1) -- (n3); 	
        \draw [-,thick] (n1) -- (n4);  	
	    \draw [-,thick] (n1) -- (n5); 	
	    \draw [-,thick] (n1) -- (n6);
	    \draw [-,thick] (n2) -- (n3);	
	    \draw [-,thick] (n2) -- (n4);  	
	    \draw [-,thick] (n2) -- (n5); 	
	    \draw [-,thick] (n2) -- (n6);
	    \draw [-,thick] (n3) -- (n4);	
	    \draw [-,thick] (n3) -- (n5);  	
	    \draw [-,thick] (n3) -- (n6);
	    \draw [-,thick] (n4) -- (n5);	
	    \draw [-,thick] (n4) -- (n6);  	
	    \draw [-,thick] (n5) -- (n6);	

\end{tikzpicture}
\,\,\,\begin{tikzpicture}[scale=1]

    \node (b) at (1,-0.5) [] {$(b)$};	
	
	\node (n1) at (0.5,2) [circle,draw] {{\tiny 1}};
    \node (n2) at (1.5,2) [circle,draw] {{\tiny 2}};
    \node (n3) at (2,1) [circle,draw] {{\tiny 3}};
	\node (n4) at (1.5,0) [circle,draw] {{\tiny 4}};
    \node (n5) at (0.5,0) [circle,draw] {{\tiny 5}};
    \node (n6) at (0,1) [circle,draw] {{\tiny 6}};

		\draw [-,thick,dashed] (n1) -- (n2); 	
    	\draw [-,thick] (n1) -- (n3); 	
        \draw [-,thick] (n1) -- (n4);  	
	    \draw [-,thick] (n1) -- (n5); 	
	    \draw [-,thick] (n1) -- (n6);
	    \draw [-,thick,dashed] (n2) -- (n3);	
	    \draw [-,thick] (n2) -- (n4);  	
	    \draw [-,thick] (n2) -- (n5); 	
	    \draw [-,thick] (n2) -- (n6);
	    \draw [-,thick] (n3) -- (n4);	
	    \draw [-,thick] (n3) -- (n5);  	
	    \draw [-,thick] (n3) -- (n6);
	    \draw [-,thick] (n4) -- (n5);	
	    \draw [-,thick] (n4) -- (n6);  	
	    \draw [-,thick] (n5) -- (n6);	

\end{tikzpicture}
\,\,\,\begin{tikzpicture}[scale=1]

    \node (c) at (1,-0.5) [] {$(c)$};	
	
	\node (n1) at (0.5,2) [circle,draw] {{\tiny 1}};
    \node (n2) at (1.5,2) [circle,draw] {{\tiny 2}};
    \node (n3) at (2,1) [circle,draw] {{\tiny 3}};
	\node (n4) at (1.5,0) [circle,draw] {{\tiny 4}};
    \node (n5) at (0.5,0) [circle,draw] {{\tiny 5}};
    \node (n6) at (0,1) [circle,draw] {{\tiny 6}};

		\draw [-,thick,dashed] (n1) -- (n2); 	
    	\draw [-,thick] (n1) -- (n3); 	
        \draw [-,thick] (n1) -- (n4);  	
	    \draw [-,thick] (n1) -- (n5); 	
	    \draw [-,thick] (n1) -- (n6);
	    \draw [-,thick,dashed] (n2) -- (n3);	
	    \draw [-,thick] (n2) -- (n4);  	
	    \draw [-,thick] (n2) -- (n5); 	
	    \draw [-,thick] (n2) -- (n6);
	    \draw [-,thick,dashed] (n3) -- (n4);	
	    \draw [-,thick] (n3) -- (n5);  	
	    \draw [-,thick] (n3) -- (n6);
	    \draw [-,thick] (n4) -- (n5);	
	    \draw [-,thick] (n4) -- (n6);  	
	    \draw [-,thick] (n5) -- (n6);	

\end{tikzpicture}
\caption{The collective behavior of the signed network (\ref{eq:consensus-protocol-overall})
on a 6-node complete network with different selections of negative
edges, namely, consensus, clustering and being unstable when the negative
edges are chosen to be $\left\{ (1,2)\right\} $ (a), $\left\{ (1,2),\thinspace(2,3)\right\} $
(b) and $\left\{ (1,2),\thinspace(2,3),\thinspace(3,4)\right\} $
(c), respectively.}
\label{fig:6-node-negative-coupling}
\end{figure}

In contrast to unsigned networks, typical collective behaviors of
signed network (\ref{eq:consensus-protocol-overall}) are more intricate,
include achieving (average) consensus, cluster consensus or even being
unstable (see Figure \ref{fig:6-node-negative-coupling} for illustrative
examples of complete signed networks). Especially, the case of being
unstable is not desirable for the most applications. 

A list of all types of interactions in terms of the sign of edge weight
is shown in Figure \ref{fig:type-of-interactions}(a) as well as the
illustration of their respective influence on the state evolution
of the associated agents in Figure \ref{fig:type-of-interactions}(b).
The state trajectories for a pair of neighboring agents $i$ and $j$
with different types of interactions are further provided in the right
panel in Figure \ref{fig:type-of-interactions}, stable (top) and
unstable (bottom) trajectories, respectively, where the state  evolution
are all initiated from $x_{i}(0)=-1$ and $x_{j}(0)=1$. Essentially,
a negative edge $(i,j)$ raises a positive feedback term in (\ref{eq:consensus-protocol-local})
which amplifies the difference between $x_{i}(t)$ and $x_{j}(t)$
(type \Rmnum{3} in Figure \ref{fig:type-of-interactions}(a)), while
a positive edge $(i,j)$ raises a negative feedback term in (\ref{eq:consensus-protocol-local})
which reduces the difference between $x_{i}(t)$ and $x_{j}(t)$ (type
\Rmnum{2} in Figure \ref{fig:type-of-interactions}(a)). Therefore,
a source of complexity in the collective behaviors of signed network
(\ref{eq:consensus-protocol-overall}) lies in the mix of both negative
and positive feedback loops in one network, which can exhibit diverse
collective behaviors. This observation motivates the recent work regarding
multi-agent systems on signed networks \cite{zelazo2014definiteness,zelazo2017robustness,chen2016definiteness,chen2016characterizing,pan2016laplacian,shi2019dynamics}. 

In the following discussions, we shall first provide a novel insight
into the stability of signed network (\ref{eq:consensus-protocol-overall})
and argue that for a signed network containing negative cut set, further
actions are necessary to stabilize the unstable signed network. Then,
we propose a distributed stabilization protocol for signed networks
via self-loop compensation which is essentially introducing a damping
term in the interaction protocol of those agents that are incident
to negative edges.

\begin{figure*}[t]
\begin{centering}
\begin{tikzpicture}[scale=0.85]


\node (n1) at (0,0) [circle,draw] {$i$};
\node (n2) at (2,0) [circle,draw] {{\small $j$}};

\node (type) at (-1,0) [] {Type \Rmnum{1}: \quad \quad};

	\path[]
	(n1) [->,thick,  bend left=12] edge node[midway, above] {} (n2); 

	\path[]
	(n2) [->,thick,  bend left=12] edge node[midway, below] {} (n1); 


 \node (n1) at (0+4,0) [circle,draw] {};
    \node (x1) at (-0.5+4,0) []  {$\dot{x}_i$};

    \node (n2) at (2+4,0) [circle,draw] {};
    \node (x2) at (2.5+4,0) []  {$\dot{x}_j$};

    \node (n1v0) at (0.8+4,0) [] {}; 
	\path[]
    (n1v0) [<-,thick] edge node[right] {} (n1);

    \node (n2v0) at (1.2+4,0) [] {}; 
	\path[]
    (n2v0) [<-,thick] edge node[right] {} (n2);



\node (n1) at (0,0-1) [circle,draw] {$i$};
\node (n2) at (2,0-1) [circle,draw] {{\small $j$}};

\node (type) at (-1,0-1) [] {Type \Rmnum{2}: \quad \quad};

	\path[]
	(n2) [->,thick,  bend left=12] edge node[midway, below] {} (n1); 


    \node (n1) at (0+4,0-1) [circle,draw] {};
    \node (x1) at (-0.5+4,0-1) []  {$\dot{x}_i$};

    \node (n2) at (2+4,0-1) [circle,draw] {};
    \node (x2) at (2.5+4,0-1) []  {$\dot{x}_j$};

    \node (n1v0) at (0.8+4,0-1) [] {}; 
	\path[]
    (n1v0) [<-,thick] edge node[right] {} (n1);



\node (n1) at (0,-2) [circle,draw] {$i$};
\node (n2) at (2,-2) [circle,draw] {{\small $j$}};

\node (type) at (-1,-2) [] {Type \Rmnum{3}: \quad \quad};

	\path[]
	(n2) [->,thick,dashed,  bend left=12] edge node[midway, below] {} (n1); 


    \node (n1) at (0+4,-2) [circle,draw] {};
    \node (x1) at (0.5+4,-2) []  {$\dot{x}_i$};

    \node (n2) at (2+4,-2) [circle,draw] {};
    \node (x2) at (2.5+4,-2) []  {$\dot{x}_j$};

    \node (n1v0) at (-0.8+4,-2) [] {}; 
	\path[]
    (n1v0) [<-,thick,dashed] edge node[right] {} (n1);


\node (n1) at (0,0-3) [circle,draw] {$i$};
\node (n2) at (2,0-3) [circle,draw] {{\small $j$}};

\node (type) at (-1,0-3) [] {Type \Rmnum{4}: \quad \quad};

	\path[]
	(n1) [->,thick,dashed,  bend left=12] edge node[midway, above] {} (n2); 

	\path[]
	(n2) [->,thick,dashed,  bend left=12] edge node[midway, below] {} (n1); 


  \node (n1) at (0+4,0-3) [circle,draw] {};
    \node (x1) at (0.5+4,0-3) []  {$\dot{x}_i$};

    \node (n2) at (2+4,0-3) [circle,draw] {};
    \node (x2) at (1.5+4,0-3) []  {$\dot{x}_j$};

    \node (n1v0) at (-0.8+4,0-3) [] {}; 
	\path[]
    (n1v0) [<-,thick,dashed] edge node[right] {} (n1);

    \node (n2v0) at (2.8+4,0-3) [] {}; 
	\path[]
    (n2v0) [<-,thick,dashed] edge node[right] {} (n2);


\node (n1) at (0,0-4) [circle,draw] {$i$};
\node (n2) at (2,0-4) [circle,draw] {{\small $j$}};

\node (type) at (-1,0-4) [] {Type \Rmnum{5}: \quad \quad};

	\path[]
	(n1) [->,thick,  bend left=12] edge node[midway, above] {} (n2); 

	\path[]
	(n2) [->,thick,dashed,  bend left=12] edge node[midway, below] {} (n1); 


   \node (n1) at (0+4,0-4) [circle,draw] {};
    \node (x1) at (0.5+4,0-4) []  {$\dot{x}_i$};

    \node (n2) at (2+4,0-4) [circle,draw] {};
    \node (x2) at (2.5+4,0-4) []  {$\dot{x}_j$};

    \node (n1v0) at (-0.8+4,0-4) [] {}; 
	\path[]
    (n1v0) [<-,thick,dashed] edge node[right] {} (n1);

    \node (n2v0) at (1.2+4,0-4) [] {}; 
	\path[]
    (n2v0) [<-,thick] edge node[right] {} (n2);

\node (a) at (0.9,1-5.9) [] {$(a)$};

\node (b) at (5,1-5.9) [] {$(b)$};

\end{tikzpicture}\,\,\,\,\,\,\,\,\,\,\includegraphics[width=7cm]{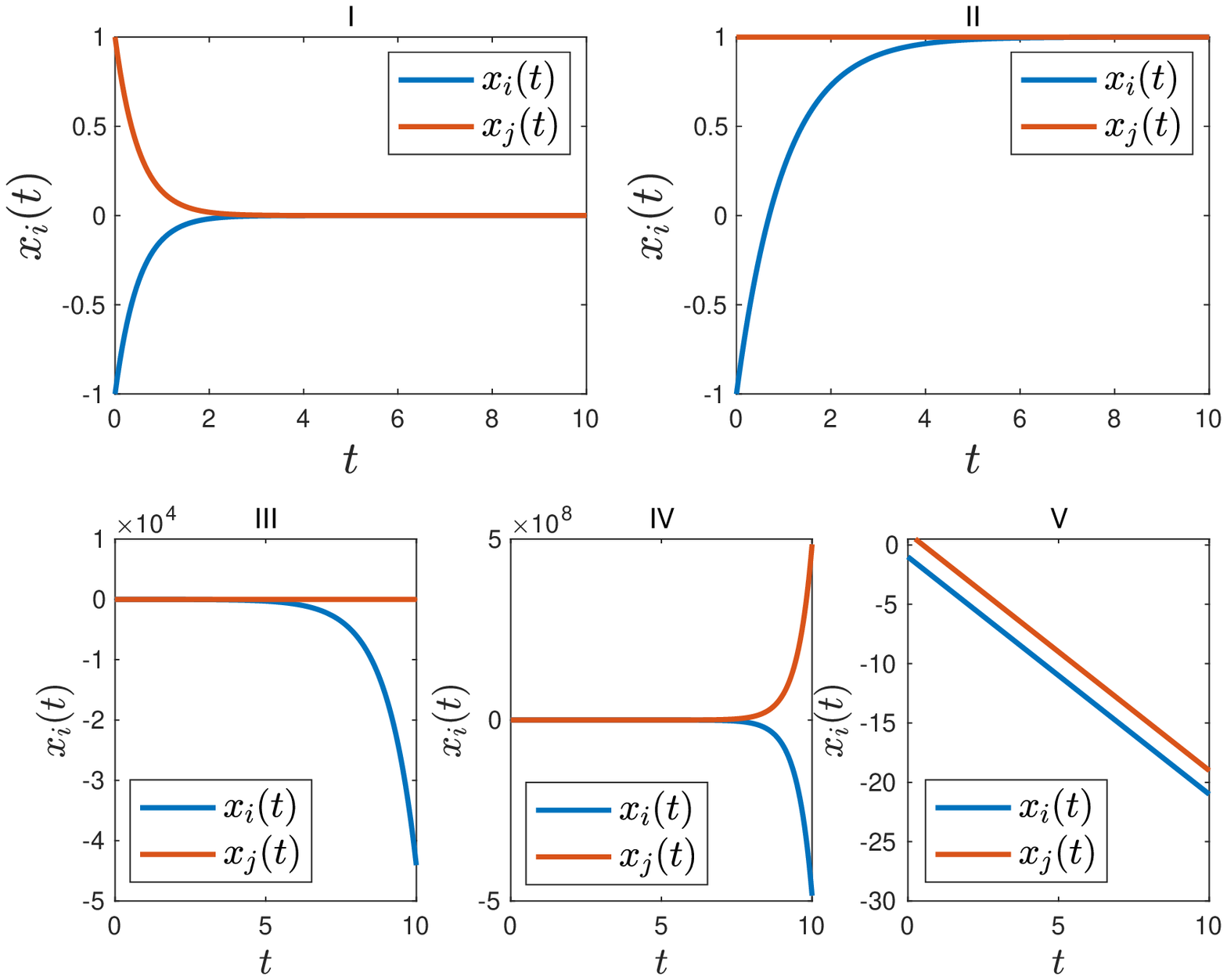}
\par\end{centering}
\caption{Types of pairwise interactions amongst a pair of neighboring agents
(a) and their respective influence on the state evolution of the associated
agents (b). Stable pairwise interactions in Figure \ref{fig:type-of-interactions},
namely, types \Rmnum{1} and \Rmnum{2} (top) and unstable pairwise
interactions in Figure \ref{fig:type-of-interactions}, namely, types
\Rmnum{3}, \Rmnum{4} and \Rmnum{5} (bottom). }

\label{fig:type-of-interactions}
\end{figure*}

\section{Negative Cut Set}

We first examine a graph-theoretic object that determines the stability
of the signed network (\ref{eq:consensus-protocol-overall}). In graph
theory, a cut edge (cut set) is an edge (a subset of edges) in $\mathcal{G}$
whose removal increases the number of connected component of $\mathcal{G}$.
Notably, cut edges in a graph are closely related to the transient
stability analysis of power systems \cite{song2017characterization}.
\begin{defn}
Let $\mathcal{G}=(\mathcal{V},\mathcal{E},W)$ be a signed network.
If all negative edges in $\mathcal{G}$ form a cut set $\mathcal{C}^{-}$,
then $\mathcal{C}^{-}$ is referred to as a \emph{negative cut set}. 
\end{defn}
\begin{figure}[tbh]
\begin{centering}
\begin{tikzpicture}[scale=0.9, >=stealth',   pos=.8,  photon/.style={decorate,decoration={snake,post length=1mm}} ]
	\node (n1) at (0,0) [circle,draw] {1};
	\node (n2) at (1.5,0) [circle,draw] {2};
    \node (n3) at (3,0) [circle,draw] {3};
    \node (n4) at (4.5,0) [circle,draw] {4};
	\node (n5) at (0,1.5) [circle,draw] {5};
	\node (n6) at (1.5,1.5) [circle,draw] {6};
    \node (n7) at (3,1.5) [circle,draw] {7};
    \node (n8) at (4.5,1.5) [circle,draw] {8};	

    \node (G0) at (-1,0.75) [] {$\mathcal{G}_0:$};	

	\path[]

	(n1) [<->,thick] edge node[below] {} (n5)
	(n2) [<->,thick] edge node[below] {} (n6)
	(n3) [<->,thick] edge node[below] {} (n8)
	(n6) [<->,thick] edge node[below] {} (n7)
	(n2) [<->,thick] edge node[below] {} (n7);

	\path[]

	(n2) [<->,thick,dashed] edge node[below] {} (n3)
	(n1) [<->,thick,dashed] edge node[below] {} (n6)
	(n3) [<->,thick,dashed] edge node[below] {} (n4);

\end{tikzpicture}
\par\end{centering}
\caption{A signed network $\mathcal{G}_{0}$ where solid lines and dotted lines
represent positive and negative edges, respectively.}
\label{fig:8-node-signed-network-G}
\end{figure}
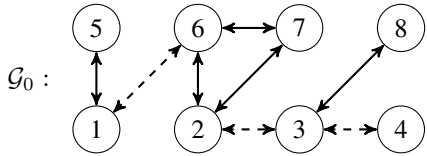

 Let $\mathcal{G}^{+}$ and $\mathcal{G}^{-}$ denote the positive
and negative components of $\mathcal{G}$, respectively, which are
 subgraphs of $\mathcal{G}$ with the same vertex set but with either
positive and negative edges, respectively. Let $W^{+}=(w_{ij}^{+})\in\mathbb{R}^{n\times n}$
and $W^{-}=(w_{ij}^{-})\in\mathbb{R}^{n\times n}$ be the adjacency
matrix of $\mathcal{G}^{+}$ and $\mathcal{G}^{-}$, respectively,
where $w_{ij}^{+}=\max\left\{ w_{ij},0\right\} $ and $w_{ij}^{-}=\min\left\{ w_{ij},0\right\} $.
Denote by $\tau(\mathcal{G}^{+})$ and $\tau(\mathcal{G}^{+})$ as
the number of connected component in $\mathcal{G}^{+}$ and $\mathcal{G}^{-}$,
respectively. 
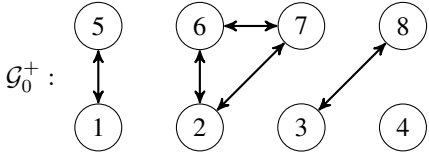
\begin{figure}[tbh]
\begin{centering}
\begin{tikzpicture}[scale=0.9, >=stealth',   pos=.8,  photon/.style={decorate,decoration={snake,post length=1mm}} ]
	\node (n1) at (0,0) [circle,draw] {1};
	\node (n2) at (1.5,0) [circle,draw] {2};
    \node (n3) at (3,0) [circle,draw] {3};
    \node (n4) at (4.5,0) [circle,draw] {4};
	\node (n5) at (0,1.5) [circle,draw] {5};
	\node (n6) at (1.5,1.5) [circle,draw] {6};
    \node (n7) at (3,1.5) [circle,draw] {7};
    \node (n8) at (4.5,1.5) [circle,draw] {8};	

   \node (G+) at (-1,0.75) [] {$\mathcal{G}_0^{+}:$};	

	\path[]

	(n1) [<->,thick] edge node[below] {} (n5)
	(n2) [<->,thick] edge node[below] {} (n6)
	(n3) [<->,thick] edge node[below] {} (n8)
	(n6) [<->,thick] edge node[below] {} (n7)
	(n2) [<->,thick] edge node[below] {} (n7);

\end{tikzpicture}
\par\end{centering}
\caption{The positive component $\mathcal{G}_{0}^{+}$ of $\mathcal{G}_{0}$
in Figure \ref{fig:8-node-signed-network-G}.}
\label{fig:8-node-signed-network-G-plus}
\end{figure}
\begin{figure}[tbh]
\begin{centering}
\begin{tikzpicture}[scale=0.9, >=stealth',   pos=.8,  photon/.style={decorate,decoration={snake,post length=1mm}} ]
	\node (n1) at (0,0) [circle,draw] {1};
	\node (n2) at (1.5,0) [circle,draw] {2};
    \node (n3) at (3,0) [circle,draw] {3};
    \node (n4) at (4.5,0) [circle,draw] {4};
	\node (n5) at (0,1.5) [circle,draw] {5};
	\node (n6) at (1.5,1.5) [circle,draw] {6};
    \node (n7) at (3,1.5) [circle,draw] {7};
    \node (n8) at (4.5,1.5) [circle,draw] {8};	

  \node (G-) at (-1,0.75) [] {$\mathcal{G}_0^{-}:$};	

	\path[]

	(n2) [<->,thick,dashed] edge node[below] {} (n3)
	(n1) [<->,thick,dashed] edge node[below] {} (n6)
	(n3) [<->,thick,dashed] edge node[below] {} (n4);

\end{tikzpicture}
\par\end{centering}
\caption{The negative component $\mathcal{G}_{0}^{-}$ of $\mathcal{G}_{0}$
in Figure \ref{fig:8-node-signed-network-G}.}
\label{fig:8-node-signed-network-G-negative}
\end{figure}
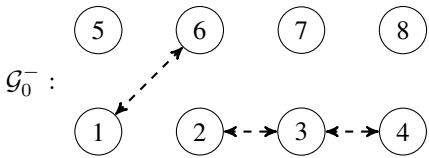

\begin{example}
For instance, consider the $\mathcal{G}_{0}$ in Figure \ref{fig:8-node-signed-network-G}
with positive and negative components $\mathcal{G}_{0}^{+}$ and $\mathcal{G}_{0}^{-}$
shown in Figures \ref{fig:8-node-signed-network-G-plus} and \ref{fig:8-node-signed-network-G-negative},
respectively. One can see that $\tau(\mathcal{G}_{0}^{+})=4$ and
$\tau(\mathcal{G}_{0}^{-})=5$ in this example. 
\end{example}
Let $i_{-}(\mathcal{L})$ denote the number negative eigenvalues of
signed Laplacian $\mathcal{L}(\mathcal{G})$ in (\ref{eq:consensus-protocol-overall}). 
\begin{prop}
The signed network (\ref{eq:consensus-protocol-overall}) is stable
if and only if $i_{-}(\mathcal{L})=0.$
\end{prop}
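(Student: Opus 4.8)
The plan is to reduce the question to the spectral properties of the symmetric matrix $\mathcal{L}(\mathcal{G})$. Since (\ref{eq:consensus-protocol-overall}) is a linear time-invariant system $\dot{\boldsymbol{x}}(t)=A\boldsymbol{x}(t)$ with $A=-\mathcal{L}(\mathcal{G})$, its stability is governed entirely by the spectrum of $A$. For an undirected signed network the weights satisfy $w_{ij}=w_{ji}$, so $\mathcal{L}(\mathcal{G})$ is symmetric; consequently all its eigenvalues $\lambda_{i}(\mathcal{L})$ are real and it admits an orthonormal eigenbasis, i.e., $A$ is diagonalizable. This symmetry is the single structural fact that drives the whole argument.

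First I would invoke the standard criterion for Lyapunov (marginal) stability of $\dot{\boldsymbol{x}}(t)=A\boldsymbol{x}(t)$: trajectories remain bounded for every initial condition if and only if every eigenvalue of $A$ has non-positive real part and every eigenvalue lying on the imaginary axis is semisimple. Because $A=-\mathcal{L}(\mathcal{G})$ is symmetric and hence diagonalizable, \emph{every} eigenvalue --- in particular the zero eigenvalue guaranteed by $\mathcal{L}(\mathcal{G})\mathds{1}_{n}=\mathbf{0}_{n}$ --- is automatically semisimple, so the defective Jordan-block obstruction never arises. The stability question therefore collapses to a pure sign condition on the real spectrum of $A$.

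Next I would translate this sign condition back to $\mathcal{L}(\mathcal{G})$. Since the eigenvalues of $A=-\mathcal{L}(\mathcal{G})$ are exactly $-\lambda_{i}(\mathcal{L})$, the requirement ``$\textbf{Re}(\lambda_{i}(A))\le 0$ for all $i$'' is equivalent to ``$\lambda_{i}(\mathcal{L})\ge 0$ for all $i$,'' namely that $\mathcal{L}(\mathcal{G})$ carries no negative eigenvalue, which by definition reads $i_{-}(\mathcal{L})=0$. For the converse direction, if $i_{-}(\mathcal{L})\ge 1$ then $\mathcal{L}(\mathcal{G})$ has some eigenvalue $\lambda<0$ with eigenvector $v$, so $A$ has the strictly positive eigenvalue $-\lambda>0$ and the mode $e^{-\lambda t}v$ grows without bound, rendering (\ref{eq:consensus-protocol-overall}) unstable. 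Combining the two implications yields the stated equivalence.

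The step requiring the most care is fixing precisely what ``stable'' means in this setting: because $\mathcal{L}(\mathcal{G})$ always carries a zero eigenvalue, the system can never be asymptotically stable to the origin, and the intended notion is marginal stability, i.e. boundedness of trajectories, with convergence onto a consensus or cluster subspace. Once this convention is made explicit, the symmetry of $\mathcal{L}(\mathcal{G})$ does all the heavy lifting --- simultaneously forcing the spectrum to be real and ruling out non-semisimple (growing polynomial) modes --- so beyond the bookkeeping there is no substantive obstacle.
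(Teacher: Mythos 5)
Your proof is correct. The paper states this proposition without any proof, treating it as a standard fact; your argument --- restricting to the undirected case so that $\mathcal{L}(\mathcal{G})$ is symmetric, hence real-spectral and diagonalizable, then invoking the marginal-stability criterion and noting that semisimplicity of the ever-present zero eigenvalue rules out polynomially growing modes --- is exactly the standard justification the paper implicitly relies on, including the correct reading of ``stable'' as boundedness of trajectories rather than asymptotic stability.
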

We have the estimation of number of negative eigenvalues of signed
Laplacian in terms of number of positive and negative components in
the following result. 
\begin{lem}
\cite{bronski2014spectral} \label{lem:bronski2014spectral} Let $\mathcal{G}$
be a connected signed network. Then $\tau(\mathcal{G}^{+})-1\leq i_{-}(\mathcal{L})\leq n-\tau(\mathcal{G}^{-})$.
\end{lem}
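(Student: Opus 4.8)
The plan is to write the signed Laplacian as a difference of two ordinary (positive semidefinite) graph Laplacians and then read off both bounds from an inertia/subspace argument. Splitting each weight as $w_{ij}=w_{ij}^{+}+w_{ij}^{-}$ with $w_{ij}^{+}=\max\{w_{ij},0\}$ and $w_{ij}^{-}=\min\{w_{ij},0\}$ yields $\mathcal{L}(\mathcal{G})=\mathcal{L}^{+}+\mathcal{L}^{-}$, where $\mathcal{L}^{+}$ is the Laplacian of the positive component $\mathcal{G}^{+}$ (so $\mathcal{L}^{+}\succeq0$) and $\mathcal{L}^{-}=-L^{-}$ is minus the positively-weighted Laplacian $L^{-}$ of $\mathcal{G}^{-}$ (so $\mathcal{L}^{-}\preceq0$). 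I would first record the standard facts that $\dim\ker\mathcal{L}^{+}=\tau(\mathcal{G}^{+})$ and $\dim\ker\mathcal{L}^{-}=\tau(\mathcal{G}^{-})$, with these kernels spanned by the indicator vectors of the connected components of $\mathcal{G}^{+}$ and $\mathcal{G}^{-}$, together with the elementary identity that $x^{T}\mathcal{L}^{+}x=0$ (respectively $x^{T}\mathcal{L}^{-}x=0$) holds exactly when $x$ is constant on each component of $\mathcal{G}^{+}$ (respectively $\mathcal{G}^{-}$).

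For the \emph{upper} bound I would exhibit a large subspace on which $\mathcal{L}$ is positive semidefinite. On $\ker\mathcal{L}^{-}$, which has dimension $\tau(\mathcal{G}^{-})$, one has $x^{T}\mathcal{L}x=x^{T}\mathcal{L}^{+}x\ge0$. Invoking the standard inertia fact that the number of nonnegative eigenvalues of a symmetric matrix equals the maximal dimension of a subspace on which it is positive semidefinite, this gives $n-i_{-}(\mathcal{L})\ge\tau(\mathcal{G}^{-})$, i.e. $i_{-}(\mathcal{L})\le n-\tau(\mathcal{G}^{-})$. Equivalently, this is the subadditivity $i_{-}(B+C)\le i_{-}(B)+i_{-}(C)$ applied to $B=\mathcal{L}^{+}$ and $C=\mathcal{L}^{-}$.

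For the \emph{lower} bound I would instead exhibit a subspace on which $\mathcal{L}$ is negative \emph{definite}. On $\ker\mathcal{L}^{+}$ one has $x^{T}\mathcal{L}x=x^{T}\mathcal{L}^{-}x\le0$, with equality precisely when $x\in\ker\mathcal{L}^{-}$ as well. The crux is to compute $\ker\mathcal{L}^{+}\cap\ker\mathcal{L}^{-}$: any vector in this intersection is constant on every component of both $\mathcal{G}^{+}$ and $\mathcal{G}^{-}$, hence equal across every edge of $\mathcal{G}$, and since $\mathcal{G}$ is connected this forces $x\in\mathrm{span}(\mathds{1}_{n})$, so the intersection is exactly one-dimensional. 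Restricting to $S=\ker\mathcal{L}^{+}\cap\mathds{1}_{n}^{\perp}$, which has dimension $\tau(\mathcal{G}^{+})-1$ because $\mathds{1}_{n}\in\ker\mathcal{L}^{+}$, every nonzero $x\in S$ satisfies $x^{T}\mathcal{L}x=x^{T}\mathcal{L}^{-}x<0$. Thus $\mathcal{L}$ is negative definite on $S$ and $i_{-}(\mathcal{L})\ge\dim S=\tau(\mathcal{G}^{+})-1$.

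The main obstacle — and precisely the source of the $-1$ in the lower bound — is the intersection computation $\ker\mathcal{L}^{+}\cap\ker\mathcal{L}^{-}=\mathrm{span}(\mathds{1}_{n})$, which is where connectivity of $\mathcal{G}$ is indispensable; without it the intersection could be larger and the negative-definite subspace correspondingly smaller. The remaining ingredients (the description of Laplacian kernels, the identity $\mathcal{L}\mathds{1}_{n}=\mathbf{0}_{n}$, and the inertia characterization of the positive-semidefinite subspace) are standard and plug directly into the two subspace arguments above.
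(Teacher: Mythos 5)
Your proof is correct. There is, however, nothing in the paper to compare it against: the paper states this lemma as an imported result, citing Bronski and DeVille, and gives no proof of its own. Your argument -- splitting $\mathcal{L}=\mathcal{L}^{+}+\mathcal{L}^{-}$ with $\mathcal{L}^{+}\succeq0$ and $\mathcal{L}^{-}\preceq0$, then extracting the upper bound from the subspace $\ker\mathcal{L}^{-}$ (dimension $\tau(\mathcal{G}^{-})$, on which $\mathcal{L}$ acts as $\mathcal{L}^{+}\succeq0$) and the lower bound from $S=\ker\mathcal{L}^{+}\cap\mathds{1}_{n}^{\perp}$ (dimension $\tau(\mathcal{G}^{+})-1$, on which $\mathcal{L}$ is negative definite) -- is essentially the standard proof, in the spirit of the cited reference. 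The two load-bearing steps are both handled correctly: (i) the intersection identity $\ker\mathcal{L}^{+}\cap\ker\mathcal{L}^{-}=\mathrm{span}\{\mathds{1}_{n}\}$, which is precisely where connectivity of $\mathcal{G}$ enters and which upgrades $x^{T}\mathcal{L}x\le0$ on $\ker\mathcal{L}^{+}$ to strict negativity on $S$; and (ii) the inertia facts that a symmetric matrix which is positive semidefinite on a $k$-dimensional subspace has at most $n-k$ negative eigenvalues, while one that is negative definite on a $k$-dimensional subspace has at least $k$. One remark worth making explicit: your argument uses symmetry of $\mathcal{L}$ throughout (quadratic forms, inertia, the fact that $x^{T}L^{-}x=0$ forces $L^{-}x=0$), so the lemma as proved applies to undirected signed networks -- which is indeed the setting in which the paper invokes it; the directed case in the paper is treated later by entirely different (Perron--Frobenius) tools.
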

\begin{thm}
\label{thm:negative-cut-set} Let $\mathcal{G}=(\mathcal{V},\mathcal{E},W)$
be a connected signed network. 

1) If $\mathcal{G}^{+}$ is not connected, then the signed network
(\ref{eq:consensus-protocol-overall}) is not stable. 

2) If $\mathcal{G}$ has a negative cut set $\mathcal{C}^{-}$ whose
elements are all cut edges, then $i_{-}(\mathcal{L})=|\mathcal{C}^{-}|$.

3) If $\mathcal{G}$ has a negative cut set $\mathcal{C}^{-}$ whose
elements are not all cut edges, then $i_{-}(\mathcal{L})\leq|\mathcal{C}^{-}|$.
\end{thm}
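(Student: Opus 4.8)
The plan is to reduce all three assertions to the eigenvalue sandwich of Lemma~\ref{lem:bronski2014spectral}, namely $\tau(\mathcal{G}^{+})-1\le i_{-}(\mathcal{L})\le n-\tau(\mathcal{G}^{-})$, by translating each hypothesis into a statement about the component counts $\tau(\mathcal{G}^{+})$ and $\tau(\mathcal{G}^{-})$ and then reading off the appropriate side of the bound. The only non-spectral ingredient I expect to need is elementary counting: a graph on $n$ vertices with $m$ edges has at least $n-m$ connected components, with equality precisely when it is a forest.

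For Part~1, if $\mathcal{G}^{+}$ is disconnected then $\tau(\mathcal{G}^{+})\ge 2$, so the left inequality of Lemma~\ref{lem:bronski2014spectral} gives $i_{-}(\mathcal{L})\ge\tau(\mathcal{G}^{+})-1\ge 1>0$, and Proposition~1 then forces instability. For the upper bound $i_{-}(\mathcal{L})\le|\mathcal{C}^{-}|$ I would observe that $\mathcal{G}^{-}$ is a graph on $n$ vertices with exactly $|\mathcal{C}^{-}|$ edges, whence $\tau(\mathcal{G}^{-})\ge n-|\mathcal{C}^{-}|$; substituting into the right inequality of Lemma~\ref{lem:bronski2014spectral} yields $i_{-}(\mathcal{L})\le n-\tau(\mathcal{G}^{-})\le|\mathcal{C}^{-}|$. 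Since this uses nothing about the negative edges being cut edges, it settles Part~3 outright and simultaneously supplies the $\le$ half of Part~2.

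It then remains to produce the matching lower bound $i_{-}(\mathcal{L})\ge|\mathcal{C}^{-}|$ in Part~2, under the hypothesis that every negative edge is a cut edge. The key structural claim is that $|\mathcal{C}^{-}|=\tau(\mathcal{G}^{+})-1$. I would prove it by contracting each connected component of $\mathcal{G}^{+}$ to a single vertex; the resulting multigraph $\widehat{\mathcal{G}}$ has exactly $\tau(\mathcal{G}^{+})$ vertices, and its edges are precisely the negative edges of $\mathcal{G}$. Since $\mathcal{G}$ is connected, $\widehat{\mathcal{G}}$ is connected; and since each negative edge is a bridge of $\mathcal{G}$, no negative edge lies on a cycle of $\mathcal{G}$, which should force $\widehat{\mathcal{G}}$ to be acyclic. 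Hence $\widehat{\mathcal{G}}$ is a tree on $\tau(\mathcal{G}^{+})$ vertices with $\tau(\mathcal{G}^{+})-1$ edges, giving $|\mathcal{C}^{-}|=\tau(\mathcal{G}^{+})-1$. The left inequality of Lemma~\ref{lem:bronski2014spectral} then delivers $i_{-}(\mathcal{L})\ge\tau(\mathcal{G}^{+})-1=|\mathcal{C}^{-}|$, and combining with the upper bound pins down $i_{-}(\mathcal{L})=|\mathcal{C}^{-}|$.

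The main obstacle is this structural step in Part~2, specifically the passage from ``each negative edge is a bridge of $\mathcal{G}$'' to ``$\widehat{\mathcal{G}}$ is acyclic.'' The care required is in the lift: a cycle in $\widehat{\mathcal{G}}$ corresponds to a closed walk in $\mathcal{G}$ alternating negative edges with positive paths interior to the contracted components, and I must argue that such a walk genuinely contains a cycle passing through some negative edge, contradicting the bridge property (this also rules out parallel negative edges between two components, so $\widehat{\mathcal{G}}$ is in fact simple). Once this lift is made rigorous the remainder is bookkeeping, and the two sides of Lemma~\ref{lem:bronski2014spectral} squeeze the argument shut.
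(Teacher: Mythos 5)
Your proposal is correct, and its skeleton is the same as the paper's: both reduce everything to the sandwich $\tau(\mathcal{G}^{+})-1\le i_{-}(\mathcal{L})\le n-\tau(\mathcal{G}^{-})$ of Lemma~\ref{lem:bronski2014spectral} and then squeeze. Part~1 is identical. The differences are in how the component counts are obtained, and they are worth noting. For the upper bound you use the elementary inequality $\tau(\mathcal{G}^{-})\ge n-|\mathcal{C}^{-}|$, valid for any edge set, which settles Part~3 and the $\le$ half of Part~2 in one stroke; the paper instead computes $\tau(\mathcal{G}^{-})=n-|\mathcal{C}^{-}|$ exactly (using that each negative component is a tree when all negative edges are cut edges) and then handles Part~3 by a case analysis that refers back to this computation --- a case analysis whose second case (the claim $\mathcal{V}(\{c_{i}^{-}\}_{i\in\underline{p}})<\tau_{i}+1$ yielding strict inequality) is loosely stated, and which your uniform counting bound sidesteps entirely. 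For the lower bound in Part~2, the paper argues that removing the $|\mathcal{C}^{-}|$ cut edges one at a time increases the component count by exactly one each time, giving $\tau(\mathcal{G}^{+})=|\mathcal{C}^{-}|+1$; you get the same identity by contracting the positive components and showing the contracted multigraph is a tree. These are two phrasings of the same combinatorial fact, and your flagged obstacle (lifting a cycle of $\widehat{\mathcal{G}}$ to a cycle of $\mathcal{G}$ through a negative edge, contradicting the bridge property) is exactly the point that makes either phrasing rigorous --- the paper's sequential-removal argument implicitly relies on the companion fact that a cut edge of $\mathcal{G}$ remains a cut edge in every subgraph containing it. Net effect: your route buys a cleaner, assumption-free upper bound and a uniform treatment of Parts~2 and~3, at the cost of one structural lemma about the contraction; the paper's route is more computational but proves slightly more (the exact value of $\tau(\mathcal{G}^{-})$).
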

\begin{proof}
1) Note that if $\mathcal{G}^{+}$ is not connected, then $\tau(\mathcal{G}^{+})\ge2$.
Applying Lemma \ref{lem:bronski2014spectral}, we have $i_{-}(\mathcal{L})\ge\tau(\mathcal{G}^{+})-1\ge1$,
implying that the signed network (\ref{eq:consensus-protocol-overall})
is not stable.

2) We first examine the positive component $\mathcal{G}^{+}$. Note
that there are $|\mathcal{C}^{-}|$ negative cut edges in $\mathcal{C}^{-}$.
Then, the positive component $\mathcal{G}^{+}$ can be obtained from
removing edges in $\mathcal{C}^{-}$ one by one, and each removal
increases the number of connected component by one. Thus, $\tau(\mathcal{G}^{+})=|\mathcal{C}^{-}|+1.$

We proceed to examine the negative component $\mathcal{G}^{-}$. Let
$p$ denote the number of connected components with at least one negative
edge in $\mathcal{G}^{-}$ and each of these connected components
$c_{i}^{-}$ contains $\tau_{i}>0$ negative edges where $i\in\underline{p}$.
Note that all edges in $c_{i}^{-}$ are negative and each connected
component $c_{i}^{-}$ is a tree. Then the number of nodes contained
in $c_{i}^{-}$ is $\tau_{i}+1$. Let $\mathcal{V}\left(\left\{ c_{i}^{-}\right\} _{i\in\underline{p}}\right)$
denote the number of nodes contained in $\left\{ c_{i}^{-}\right\} _{i\in\underline{p}}$,
then the number of remaining isolated nodes (those nodes that are
not incident to a negative edge in $\mathcal{G}$ and thus each of
then is a connected component in $\mathcal{G}^{-}$) in $\mathcal{G}^{-}$
is 
\[
n-\mathcal{V}\left(\left\{ c_{i}^{-}\right\} _{i\in\underline{p}}\right)=n-{\displaystyle \sum_{i=1}^{p}(\tau_{i}+1)}.
\]
Then, the number of negative components of $\mathcal{G}$ can be computed
by
\begin{align*}
\tau(\mathcal{G}^{-}) & =n-\mathcal{V}\left(\left\{ c_{i}^{-}\right\} _{i\in\underline{p}}\right)+p\\
 & =n-{\displaystyle \sum_{i=1}^{p}(\tau_{i}+1)}+p=n-\sum_{i=1}^{p}\tau_{i}.
\end{align*}
We note that $\sum_{i=1}^{p}\tau_{i}=|\mathcal{G}^{-}|$ and then
by applying Lemma \ref{lem:bronski2014spectral}, we have 
\[
|\mathcal{C}^{-}|+1-1\leq i_{-}(\mathcal{L})\leq n-(n-|\mathcal{C}^{-}|),
\]
which implies $i_{-}(\mathcal{L})=|\mathcal{C}^{-}|$.

3) If $\mathcal{G}$ has a negative cut set $\mathcal{C}^{-}$ whose
elements are not all cut edges, then there are two cases. Case 1:
there are no cycles in $\mathcal{C}$, then according to the proof
of 2), $i_{-}(\mathcal{L})\leq|\mathcal{C}^{-}|$. Case 2: there exists
at least one cycle in $\mathcal{C}^{-}$, then according to the proof
of 2), one has $\mathcal{V}\left(\left\{ c_{i}^{-}\right\} _{i\in\underline{p}}\right)<\tau_{i}+1$,
therefore, $i_{-}(\mathcal{L})<|\mathcal{C}^{-}|$.
\end{proof}
\begin{rem}
According to Theorem \ref{thm:negative-cut-set}, the negative cut
set turns out to be an incentive for the instability of the signed
network (\ref{eq:consensus-protocol-overall}). Specifically, Theorem
implies the lower bound and upper bound of $i_{-}(\mathcal{L})$ in
Lemma \ref{lem:bronski2014spectral} coincide when $\mathcal{G}$
has a negative cut set whose elements are all cut edges. Notable,
similar arguments have been obtained using linear matrix inequality
(LMI) techniques in \cite{zelazo2017robustness}. However Theorem
\ref{thm:negative-cut-set} provides a novel insight into the correlation
between negative cut set and stability of signed network (\ref{eq:consensus-protocol-overall})
from a graph-theoretic perspective.
\end{rem}
We provide an example to illustrate Theorem \ref{thm:negative-cut-set}.
\begin{example}
Continue to consider the signed network $\mathcal{G}_{0}$ in Figure
\ref{fig:8-node-signed-network-G}. Compute the spectrum of signed
Laplacian in this example yields 
\[
\boldsymbol{\lambda}(\mathcal{L}(\mathcal{G}_{0}))=\left\{ -2.39,-1.27,-0.53,0,1.32,1.59,2.47,2.82\right\} .
\]
On the other hand, $\mathcal{G}_{0}$ contains a negative cut set
as follows,
\[
\mathcal{C}^{-}=\left\{ (1,6),(2,3),(3,4)\right\} ,
\]
and by Theorem \ref{thm:negative-cut-set}, one has 
\[
i_{-}(\mathcal{L}(\mathcal{G}_{0}))=\tau(\mathcal{G}_{0}^{+})-1=8-\tau(\mathcal{G}_{0}^{-})=3.
\]
\end{example}
\begin{cor}[Tree networks]
\label{cor:tree-unstable} Let $\mathcal{T}$ be a signed tree network.
Then, $i_{-}(\mathcal{L}(\mathcal{T}))$ is equal to the number of
negative edges in $\mathcal{T}$.
\end{cor}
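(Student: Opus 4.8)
The plan is to reduce the claim to part 2) of Theorem~\ref{thm:negative-cut-set} by verifying that in a tree the set of all negative edges is automatically a negative cut set consisting entirely of cut edges.

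First I would recall two elementary facts about a signed tree $\mathcal{T}=(\mathcal{V},\mathcal{E},W)$ on $n$ nodes: it is connected and acyclic, and consequently every edge is a cut edge. Indeed, since $\mathcal{T}$ contains no cycle, the two endpoints of any edge $e$ are joined by a unique path (the edge itself); deleting $e$ therefore destroys the only route between them and splits that component into two, so removing $e$ strictly increases the number of connected components. This establishes that each individual edge---and in particular each negative edge---is a cut edge.

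Next I would identify the negative cut set. Because $\mathcal{T}$ is a \emph{signed} network, by definition it possesses at least one negative edge; I collect all of them into $\mathcal{C}^{-}$. Removing any single edge of $\mathcal{C}^{-}$ already increases the number of connected components, hence $\mathcal{C}^{-}$ is a cut set, and by the definition of negative cut set it qualifies as one. Moreover, by the first step every element of $\mathcal{C}^{-}$ is a cut edge, so $\mathcal{C}^{-}$ is precisely a negative cut set whose elements are all cut edges, placing us squarely in the hypothesis of part 2) rather than the weaker inequality of part 3).

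Finally, invoking part 2) of Theorem~\ref{thm:negative-cut-set} with this $\mathcal{C}^{-}$ yields $i_{-}(\mathcal{L}(\mathcal{T}))=|\mathcal{C}^{-}|$, and $|\mathcal{C}^{-}|$ is exactly the number of negative edges in $\mathcal{T}$, which is the assertion. I do not anticipate a genuine obstacle: the corollary is essentially a specialization of Theorem~\ref{thm:negative-cut-set}, and the only point requiring care is the (standard) observation that acyclicity forces every edge of a tree to be a cut edge, which is what guarantees that all negative edges land in the equality case of the theorem.
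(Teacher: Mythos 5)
Your proposal is correct and matches the paper's intent: the corollary is stated as an immediate specialization of part 2) of Theorem~\ref{thm:negative-cut-set}, and your verification that acyclicity makes every edge of a tree a cut edge (so the negative edges form a negative cut set consisting entirely of cut edges) is exactly the observation needed to invoke that part. Nothing is missing.
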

\begin{cor}[Structurally balanced networks]
\label{cor:SB-unstable}Let $\mathcal{G}$ be a structurally balanced
signed network. Then, $i_{-}(\mathcal{L}(\mathcal{G}))$ is equal
to the number of negative edges in $\mathcal{G}$.
\end{cor}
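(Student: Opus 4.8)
The plan is to combine the bipartition that \emph{defines} structural balance with the negative-cut-set machinery of Theorem~\ref{thm:negative-cut-set} and the spectral bounds of Lemma~\ref{lem:bronski2014spectral}, reducing the claim to the tree case already settled in Corollary~\ref{cor:tree-unstable}. Let $\mathcal{V}=\mathcal{V}_{1}\cup\mathcal{V}_{2}$ be a balancing bipartition. By definition every positive edge lies inside $\mathcal{V}_{1}$ or inside $\mathcal{V}_{2}$, while every negative edge joins the two classes; hence deleting all negative edges disconnects $\mathcal{V}_{1}$ from $\mathcal{V}_{2}$, so the negative edges form a cut set and therefore a negative cut set $\mathcal{C}^{-}$ with $|\mathcal{C}^{-}|$ equal to the number $m^{-}$ of negative edges. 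This is the structural hook that lets Theorem~\ref{thm:negative-cut-set} apply.

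The second step is to try to pin the two bounds of Lemma~\ref{lem:bronski2014spectral} at the common value $m^{-}$. Because the positive edges are confined to the two classes, the positive component $\mathcal{G}^{+}$ is obtained from $\mathcal{G}$ by deleting exactly the $m^{-}$ negative edges; mimicking part~2) of Theorem~\ref{thm:negative-cut-set}, I would remove these edges one at a time and track the increase in the component count, aiming for $\tau(\mathcal{G}^{+})-1=m^{-}$ and hence the lower bound $i_{-}(\mathcal{L})\ge m^{-}$. Dually, I would check whether the negative component $\mathcal{G}^{-}$ is acyclic on its nonisolated vertices; if so, $n-\tau(\mathcal{G}^{-})=m^{-}$ gives the upper bound $i_{-}(\mathcal{L})\le m^{-}$. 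Should both identities hold, the two bounds coincide and $i_{-}(\mathcal{L})=m^{-}$ follows verbatim as in Corollary~\ref{cor:tree-unstable}.

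The step I expect to be the real obstacle is justifying these counting identities once negative edges are allowed to sit on cycles: in a general structurally balanced network several negative edges may close a cycle through positive edges inside a single class, in which case a single edge removal need not raise the component count of $\mathcal{G}^{+}$, and Theorem~\ref{thm:negative-cut-set} then yields only $i_{-}(\mathcal{L})\le m^{-}$. To try to recover the equality I would switch to a congruence argument: balance furnishes a signature matrix $S=\mathrm{diag}(s_{1},\ldots,s_{n})$ with $s_{i}\in\{-1,+1\}$ and $SWS=|W|$, and since $S=S^{T}$ and $S^{2}=I$, Sylvester's law of inertia gives $i_{-}(\mathcal{L})=i_{-}(S\mathcal{L}S)=i_{-}(D-|W|)$, where $D=\mathrm{diag}\big(\sum_{k}w_{1k},\ldots,\sum_{k}w_{nk}\big)$ is the signed-degree diagonal and $\mathcal{L}=D-W$. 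The decisive and hardest part is then to compute the inertia of $D-|W|$ on the transformed all-positive graph and to verify that the cycle contributions leave the negative index unchanged at $m^{-}$; this is precisely the point at which the bridge hypothesis underlying the tree case must be shown to be inessential, and where I would concentrate the remaining effort.
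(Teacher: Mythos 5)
You have correctly located the crux, but the step you defer --- showing that the ``cycle contributions leave the negative index unchanged at $m^-$'' --- is not merely hard, it is impossible: the equality fails whenever negative edges lie on cycles, and your own congruence reduction is the quickest way to see it. Take the unit-weight $4$-cycle with edges $(1,2),(3,4)$ positive and $(2,3),(1,4)$ negative. This network is structurally balanced with bipartition $\{1,2\}\cup\{3,4\}$ and $m^-=2$. Every signed degree is $1-1=0$, so with $S=\mathrm{diag}(1,1,-1,-1)$ your identity $S\mathcal{L}S=D-|W|$ gives $S\mathcal{L}S=-A$, where $A$ is the adjacency matrix of the unsigned $4$-cycle; its spectrum is $\{2,0,0,-2\}$, so $\mathcal{L}$ has spectrum $\{-2,0,0,2\}$ and $i_-(\mathcal{L})=1<2=m^-$. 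This is consistent with Lemma~\ref{lem:bronski2014spectral}, which here gives $1\le i_-(\mathcal{L})\le 2$: the lower bound $\tau(\mathcal{G}^+)-1$ equals $1$, not $m^-$, precisely because the two negative edges are not bridges. What your first two paragraphs actually establish is the one-sided bound $i_-(\mathcal{L})\le m^-$, via part 3) of Theorem~\ref{thm:negative-cut-set}; equality requires every negative edge to be a cut edge, which is the hypothesis of part 2) and of the tree case, Corollary~\ref{cor:tree-unstable}.

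You should also know that the paper offers no proof of this corollary: it is stated as an immediate consequence of Theorem~\ref{thm:negative-cut-set}, and that implicit derivation has exactly the hole you identified --- structural balance makes the set of negative edges a cut \emph{set}, but not a set of cut \emph{edges}. So your diagnosis of where the difficulty sits is sharper than the paper's own treatment; the correct conclusion, however, is not to push harder on the inertia computation of $D-|W|$, but to recognize that the statement is false as written and holds only under the additional hypothesis that each negative edge is a bridge (as in trees, or more generally under the assumptions of part 2) of Theorem~\ref{thm:negative-cut-set}).
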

Notably, according to Corollary \ref{cor:tree-unstable} and Corollary
\ref{cor:SB-unstable}, the multi-agent systems (\ref{eq:consensus-protocol-overall})
on tree or structurally balanced signed networks are always unstable.

Moreover, we proceed to examine how the negative edges influence the
stability of multi-agent systems on circle networks which play an
important role in the performance of consensus networks \cite{zelazo2013performance}.
\begin{thm}[Circle networks]
 \label{thm:circle} Let $\mathcal{G}$ be a circle network with
negative edge set $\mathcal{E}^{-}$. Then, 
\end{thm}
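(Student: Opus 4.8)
The plan is to localize $i_{-}(\mathcal{L})$ to two consecutive integers using Lemma \ref{lem:bronski2014spectral}, and then use a weighted matrix-tree identity to decide which one occurs from the sign of $\sum_{e\in\mathcal{E}}1/w_{e}$. Throughout write $k=|\mathcal{E}^{-}|$ and let $w_{e}$ denote the signed weight of edge $e$; I expect the statement to assert $i_{-}(\mathcal{L})=k$ when $\sum_{e}1/w_{e}>0$ and $i_{-}(\mathcal{L})=k-1$ when $\sum_{e}1/w_{e}\le0$.

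First I would compute the two component counts for a circle. Deleting the $k$ negative edges from the cycle leaves $k$ disjoint arcs, so $\tau(\mathcal{G}^{+})=k$ when $1\le k<n$ (and $\tau(\mathcal{G}^{+})=n$ when $k=n$); dually, the negative edges themselves form a forest on $n$ vertices, so $\tau(\mathcal{G}^{-})=n-k$ when $k<n$ (and $\tau(\mathcal{G}^{-})=1$ when $k=n$). Feeding these into Lemma \ref{lem:bronski2014spectral} gives
\[
k-1\le i_{-}(\mathcal{L})\le k,
\]
so that $i_{-}(\mathcal{L})\in\{k-1,k\}$ in every case, the boundary case $k=n$ collapsing both bounds to $n-1$.

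The decisive step is to read off the parity of $i_{-}(\mathcal{L})$ from the sign of the product of the nonzero eigenvalues of $\mathcal{L}$. Since $\mathcal{L}=BW_{d}B^{T}$ with $B$ the incidence matrix and $W_{d}=\operatorname{diag}(w_{e})$, the weighted matrix-tree theorem gives that this product equals $n\,\tau_{w}(\mathcal{G})$, where $\tau_{w}(\mathcal{G})=\sum_{T}\prod_{e\in T}w_{e}$ is summed over spanning trees $T$. For a circle the spanning trees are exactly the cycle with one edge removed, so
\[
\tau_{w}(\mathcal{G})=\Big(\prod_{e\in\mathcal{E}}w_{e}\Big)\sum_{e\in\mathcal{E}}\frac{1}{w_{e}}.
\]
Because every negative eigenvalue is nonzero, the sign of this product is $(-1)^{i_{-}(\mathcal{L})}$ whenever it is nonzero, while $\operatorname{sign}\big(\prod_{e}w_{e}\big)=(-1)^{k}$; hence, provided $\sum_{e}1/w_{e}\neq0$,
\[
(-1)^{i_{-}(\mathcal{L})}=(-1)^{k}\,\operatorname{sign}\!\Big(\sum_{e\in\mathcal{E}}\frac{1}{w_{e}}\Big).
\]
Since $k-1$ and $k$ have opposite parities, the range from the previous step forces $i_{-}(\mathcal{L})=k$ when $\sum_{e}1/w_{e}>0$ and $i_{-}(\mathcal{L})=k-1$ when $\sum_{e}1/w_{e}<0$.

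The remaining obstacle, which I expect to be the delicate point, is the threshold $\sum_{e}1/w_{e}=0$, where $\tau_{w}(\mathcal{G})=0$ and the product argument is silent. The clean way to treat all cases uniformly is to combine $\mathcal{L}=BW_{d}B^{T}$ with Sylvester's law of inertia: the nonzero inertia of $\mathcal{L}$ equals that of the diagonal form $W_{d}$ restricted to the cut space $\operatorname{im}(B^{T})=z^{\perp}$, where $z\in\ker B$ is the cycle vector whose entries are all $\pm1$. A standard constrained-form computation, via the bordered matrix $\left[\begin{smallmatrix}W_{d}&z\\ z^{T}&0\end{smallmatrix}\right]$, shows that restricting the diagonal form to $z^{\perp}$ removes one positive square if $z^{T}W_{d}^{-1}z>0$, one negative square if $z^{T}W_{d}^{-1}z<0$, and one of each, producing a second zero eigenvalue of $\mathcal{L}$, if $z^{T}W_{d}^{-1}z=0$. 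Since $z_{e}^{2}=1$ yields $z^{T}W_{d}^{-1}z=\sum_{e}1/w_{e}$, this both reproduces the dichotomy and settles the threshold as $i_{-}(\mathcal{L})=k-1$. Finally I would observe that this criterion matches the effective-resistance condition of \cite{zelazo2017robustness}: for a single negative edge, $\sum_{e^{+}}1/w_{e}$ is the series resistance of the positive arc joining its endpoints, and $\sum_{e}1/w_{e}\le0$ is precisely the positive-semidefiniteness threshold.
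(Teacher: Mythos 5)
Your first paragraph is, in substance, the paper's own proof, and it already establishes everything the theorem actually asserts: like the paper, you count connected components of $\mathcal{G}^{+}$ and $\mathcal{G}^{-}$ and feed them into Lemma \ref{lem:bronski2014spectral}, obtaining $i_{-}(\mathcal{L})=n-1$ when $|\mathcal{E}^{-}|=n$ and $|\mathcal{E}^{-}|-1\le i_{-}(\mathcal{L})\le|\mathcal{E}^{-}|$ when $|\mathcal{E}^{-}|<n$. Your counting is tidier: you observe once that deleting the $k$ negative edges from a cycle leaves $k$ arcs and that $k<n$ edges of a cycle form a forest, whereas the paper runs through cases according to how the negative edges cluster into components (its expression $\tau(\mathcal{G}_{+})=\sum_{i=1}^{m}(n_{i}-1)+m-1$ in the last case evaluates to $k-1$ rather than the correct $k$, a slip that does not affect the stated bounds). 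Note, however, that the theorem stops at the two-sided bound; the exact dichotomy you guessed it would assert is not claimed in the paper. Your remaining paragraphs, which decide between $k-1$ and $k$ from the sign of $\sum_{e}1/w_{e}$, are a correct strengthening that the paper does not attempt: the weighted matrix-tree identity pins down the parity of $i_{-}(\mathcal{L})$ whenever $\sum_{e}1/w_{e}\neq0$, and the inertia of $W_{d}$ restricted to the hyperplane $z^{\perp}$ settles the threshold case $\sum_{e}1/w_{e}=0$ (where the zero eigenvalue doubles and $i_{-}=k-1$), consistently recovering the effective-resistance criterion of \cite{zelazo2017robustness} for a single negative edge. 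What your extra machinery buys is an exact eigenvalue count in place of a bound; what the paper's route buys is an argument that stays purely graph-theoretic and needs nothing beyond Lemma \ref{lem:bronski2014spectral}.
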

1) If $|\mathcal{E}^{-}|=n$, then $i_{-}(\mathcal{L}(\mathcal{G}))=n-1$.

2) If $|\mathcal{E}^{-}|<n$, then $|\mathcal{E}^{-}|-1\leq i_{-}(\mathcal{L}(\mathcal{G}))\leq|\mathcal{E}^{-}|$
.
\begin{proof}
If $|\mathcal{E}^{-}|=n$, note that $\tau(\mathcal{G}_{-})=1$ and
$\tau(\mathcal{G}_{+})=n$, then according to Lemma \ref{lem:bronski2014spectral},
one has $i_{-}(\mathcal{L}(\mathcal{G}))=n-1$. 

If $|\mathcal{E}^{-}|<n$, if the $|\mathcal{E}^{-}|$ edges form
one connected component, then $\tau(\mathcal{G}_{-})=n-|\mathcal{E}^{-}|$
and $\tau(\mathcal{G}_{+})=|\mathcal{E}^{-}|$, then according to
Lemma \ref{lem:bronski2014spectral}, one has $|\mathcal{E}^{-}|-1\leq i_{-}(\mathcal{L}(\mathcal{G}))\leq|\mathcal{E}^{-}|$. 

If $|\mathcal{E}^{-}|$ edges form $|\mathcal{E}^{-}|$ connected
components, then $\tau(\mathcal{G}_{-})=n-|\mathcal{E}^{-}|$ and
$\tau(\mathcal{G}_{+})=|\mathcal{E}^{-}|$, then according to Lemma
\ref{lem:bronski2014spectral}, one has $|\mathcal{E}^{-}|-1\leq i_{-}(\mathcal{L}(\mathcal{G}))\leq|\mathcal{E}^{-}|$.
If there are $m$ connected components with $n_{1},n_{2},\cdots,n_{m}$
edges in $\mathcal{E}^{-}$, respectively, and $n-{\displaystyle \sum_{i=1}^{m}}n_{i}$
disconnected components, then 

\[
\tau(\mathcal{G}_{-})=m+\left(n-\sum_{i=1}^{m}(n_{i}+1)\right)=n-|\mathcal{E}^{-}|,
\]
and 
\[
\tau(\mathcal{G}_{+})=\sum_{i=1}^{m}(n_{i}-1)+m-1=|\mathcal{E}^{-}|,
\]
then according to Lemma \ref{lem:bronski2014spectral}, one has $|\mathcal{E}^{-}|-1\leq i_{-}(\mathcal{L}(\mathcal{G}))\leq|\mathcal{E}^{-}|$. 
\end{proof}
According to the aforementioned discussions, if there exists a negative
cut set in a signed network, then the associated signed Laplacian
has at least one negative eigenvalue regardless of the magnitude of
negative weights, rendering the signed network (\ref{eq:consensus-protocol-overall})
unstable \cite{zelazo2017robustness,pan2016laplacian,zelazo2014definiteness}.\textcolor{magenta}{{}
}Moreover, Theorem \ref{thm:circle} also implies that a circle network
with more than two negative edges is not stable. Under this circumstance,
one may need to regain the stability of the unstable signed network
via local manipulation for its functionality \cite{song2017characterization,song2017network,song2019extension}.

Here, a reasonable and intuitive approach is to rebuild the diagonal
dominance of the signed Laplacian, more preferably, in a distributed
manner. From a local perspective, it turns out that the negative cut
set of a network plays a critical role in determining the number of
negative eigenvalues of signed Laplacian; from a global perspective,
the diagonal dominance of the signed Laplacian plays a central role
in determining the stability of signed network (\ref{eq:consensus-protocol-overall}),
which is often not free due to the existence of negative edges.

\section{Self-loop Compensation }

Upon the aforementioned analysis, retrieve diagonal dominance of signed
Laplacian via local-level adaptation is intuitively necessary to maintain
the stability of signed networks. To this end, an intuitive and straightforward
attempt is to compensate the diagonal entries of signed Laplacian
that can be fullfilled by introducing a damping term $c_{i}(t)=-k_{i}x_{i}(t)$
on top of (\ref{eq:consensus-protocol-local}),
\begin{align}
\dot{x}_{i}(t) & =\sum_{j\in\mathcal{N}_{i}}w_{ij}(x_{j}(t)-x_{i}(t))+c_{i}(t),i\in\mathcal{V},\label{eq:compensation-protocol}
\end{align}
where $k_{i}\ge0$ is the $i$-th entry in the compensation vector
$\boldsymbol{k}=[k_{1},k_{2},\cdots,k_{n}]^{T}\in\mathbb{R}^{n}$.
In view of this, the dynamics of signed network (\ref{eq:consensus-protocol-overall})
after self-loop compensation admits, 
\begin{equation}
\dot{\boldsymbol{x}}(t)=-\mathcal{L}^{\boldsymbol{k}}\boldsymbol{x}(t),\label{eq:feedback-overall}
\end{equation}
where $\mathcal{L}^{\boldsymbol{k}}=\mathcal{L}+\text{{\bf diag}}\{\boldsymbol{k}\}$.
Note that $\text{{\bf diag}}\{\boldsymbol{k}\}\succeq0$ is a positive
semidefinite diagonal matrix and $k_{i}=0$ implies agent $i$ does
not emply self-loop compensation. 

As we shall show that the self-loop compensation is triggered only
for agents who are incident to negative edges, denoted by, 
\begin{equation}
\mathcal{V}^{-}=\left\{ i\in\mathcal{V}\thinspace|\thinspace\exists j\in\mathcal{N}_{i}\thinspace\text{such that}\thinspace w_{ij}<0\right\} .
\end{equation}
Moreover, as we shall show in the upcoming discussions, all agents
that are incident to negative edges have to be compensated, otherwise,
the network may not be stable. 
\begin{rem}
The self-loop compensation is plausible since it can be applied to
guarantee the network stability and  even consensus in a fully distributed
fashion. For instance, if one conservatively chooses $\boldsymbol{k}=\boldsymbol{\delta}=[\delta_{1},\delta_{2},\cdots,\delta_{n}]^{T}\in\mathbb{R}^{n}$
where 
\begin{align}
\delta_{i} & =\sum_{j=1}^{n}\left(|w_{ij}|-w_{ij}\right),i\in\mathcal{V},\label{eq:compensation-vector}
\end{align}
The multi-agent system (\ref{eq:feedback-overall}) can achieve the
bipartite consensus or trivial consensus towards origin depending
on whether the underlying signed network is structurally balanced
\cite{altafini2013consensus}. Clearly, $\delta_{i}>0$ for all $i\in\mathcal{V}^{-}$
and $\delta_{i}=0$ for all $i\not\in\mathcal{V}^{-}$, and it is
notable that all necessary information to construct $\delta_{i}$
for each agent $i$ is locally accessible. In this case, one can employ
the elegant property of signed Laplacian (being positive semidefiniteness
with one zero eigenvalue) for consensus \cite{chen2020spectral,song2019extension,chen2016semidefiniteness,altafini2013consensus}.
Interestingly, $\boldsymbol{k}=\boldsymbol{\delta}$ also acts as
a critical boundary that renders $\mathcal{L}^{\boldsymbol{k}}$ (weak)
diagonally dominant \cite{horn1990matrix}.
\end{rem}

\begin{rem}
In the setting of (\ref{eq:compensation-protocol}), one can view
the magnitude of self-loop compensation $k_{i}$ as the weight $w_{ii}$
associated with self-loop edge in the related loopy graph \cite{song2017local}.
Notable physical interpretations of self-loop in networks can be conductance,
loads or  dissipation in the context of electrical networks \cite{dorfler2012kron,song2017local}.
Another  interpretation of self-loop compensation can be exerting
a virtual leader on these agents that are incident to negative edges,
steering whom towards origin (see Figure \ref{fig:self-compensation}).
The self-loop compensation term signifies the tendency of $x_{i}(t)$
that evolves towards the origin ($x_{0}=0$) \cite{cao2012distributed}.
However, different from the unsigned networks, whether the compensated
signed network (\ref{eq:feedback-overall}) can track the virtual
leader (origin) depends on the structure of the signed network as
well as the compensation vector.
\begin{figure}[h]
\begin{centering}
\begin{tikzpicture}[scale=1]
	
    \node (n2) at (1,1) [circle,draw] {$i$};

	\path[]

	(n2) edge [thick,loop above=15] node {$w_{ii}=k_i>0$} (n2);

\end{tikzpicture}\,\,\,\,\,\,\,\,\,\,\,\,\,\,\,\,\,\,\,\,\,\,\,\,\,\,\,\,\begin{tikzpicture}[scale=1]
	
    \node (n2) at (1.5,1.5) [circle,draw] {};
    \node (x2) at (1.5,2) []  {$x_i$};

    \node (n0) at (3.5,1.5) [circle,draw] {};
    \node (x0) at (3.6,2) []  {$x_0=0$};

    \node (n2v0) at (2.25,1.5) [] {}; 
	\path[]
    (n2v0) [<-,thick] edge node[right] {} (n2);


\end{tikzpicture}
\par\end{centering}
\caption{The self-loop compensation for an agent $i\in\mathcal{V}$.}

\label{fig:self-compensation}
\end{figure}
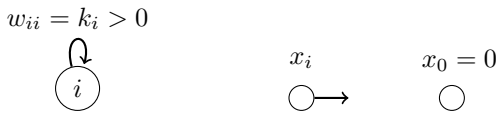
\end{rem}
However, it is worth noting that one cannot compensate the diagonal
entries of a signed Laplacian excessively, since the damping mechanism
can be time- and resource-consuming and may also lead to a loss of
the behavioral diversity of the compensated network--only has the
trivial consensus towards origin or even trigger unstability of the
network (see Figure (\ref{fig:cs-27})). We provide an illustrative
example to show this point as follows.

\begin{figure}[tbh]
\begin{centering}
\begin{tikzpicture}[scale=0.9, >=stealth',   pos=.8,  photon/.style={decorate,decoration={snake,post length=1mm}} ]
	\node (n1) at (0,0) [circle,draw] {1};
	\node (n2) at (1.5,0) [circle,draw] {2};
    \node (n3) at (3,0) [circle,draw] {3};
    \node (n4) at (4.5,0) [circle,draw] {4};
	\node (n5) at (0,1.5) [circle,draw] {5};
	\node (n6) at (1.5,1.5) [circle,draw] {6};
    \node (n7) at (3,1.5) [circle,draw] {7};
    \node (n8) at (4.5,1.5) [circle,draw] {8};	

    \node (G1) at (-1,0.75) [] {$\mathcal{G}_1:$};	

	\path[]
	(n5) [->,thick, bend right=8] edge node[below] {} (n6);
	\path[]
	(n6) [->,thick, bend right=8] edge node[below] {} (n5);

	\path[]
	(n5) [->,thick, bend right=8] edge node[below] {} (n1);
	\path[]
	(n1) [->,thick, bend right=8] edge node[below] {} (n5);

	\path[]
	(n6) [->,thick, bend right=8] edge node[below] {} (n1);
	\path[]
	(n1) [->,thick, bend right=8] edge node[below] {} (n6);

	\path[]
	(n6) [->,thick, bend right=8] edge node[below] {} (n2);
	\path[]
	(n2) [->,thick, bend right=8] edge node[below] {} (n6);

	\path[]
	(n6) [->,thick,dashed, bend right=8] edge node[below] {} (n7);
	\path[]
	(n7) [->,thick,dashed, bend right=8] edge node[below] {} (n6);

	\path[]
	(n2) [->,thick,dashed, bend right=8] edge node[below] {} (n7);
	\path[]
	(n7) [->,thick,dashed, bend right=8] edge node[below] {} (n2);

	\path[]
	(n3) [->,thick, dashed, bend right=8] edge node[below] {} (n2);
	\path[]
	(n2) [->,thick, dashed, bend right=8] edge node[below] {} (n3);

	\path[]
	(n3) [->,thick, bend right=8] edge node[below] {} (n7);
	\path[]
	(n7) [->,thick, bend right=8] edge node[below] {} (n3);

	\path[]
	(n7) [->,thick, bend right=8] edge node[below] {} (n8);
	\path[]
	(n8) [->,thick, bend right=8] edge node[below] {} (n7);

	\path[]
	(n3) [->,thick, bend right=8] edge node[below] {} (n8);
	\path[]
	(n8) [->,thick, bend right=8] edge node[below] {} (n3);

	\path[]
	(n3) [->,thick, bend right=8] edge node[below] {} (n4);
	\path[]
	(n4) [->,thick, bend right=8] edge node[below] {} (n3);

\end{tikzpicture}\\
\vskip 0.5cm
\begin{tikzpicture}[scale=0.9, >=stealth',   pos=.8,  photon/.style={decorate,decoration={snake,post length=1mm}} ]
	\node (n1) at (0,0) [circle,draw] {1};
	\node (n2) at (1.5,0) [circle,draw] {2};
    \node (n3) at (3,0) [circle,draw] {3};
    \node (n4) at (4.5,0) [circle,draw] {4};
	\node (n5) at (0,1.5) [circle,draw] {5};
	\node (n6) at (1.5,1.5) [circle,draw] {6};
    \node (n7) at (3,1.5) [circle,draw] {7};
    \node (n8) at (4.5,1.5) [circle,draw] {8};	

    \node (G2) at (-1,0.75) [] {$\mathcal{G}_2:$};

	\path[]
	(n5) [->,thick, bend right=8] edge node[below] {} (n6);
	\path[]
	(n6) [->,thick, bend right=8] edge node[below] {} (n5);

	\path[]
	(n5) [->,thick, bend right=8] edge node[below] {} (n1);
	\path[]
	(n1) [->,thick, bend right=8] edge node[below] {} (n5);

	\path[]
	(n6) [->,thick, bend right=8] edge node[below] {} (n1);
	\path[]
	(n1) [->,thick, bend right=8] edge node[below] {} (n6);

	\path[]
	(n6) [->,thick, bend right=8] edge node[below] {} (n2);
	\path[]
	(n2) [->,thick, bend right=8] edge node[below] {} (n6);

	\path[]
	(n6) [->,thick,dashed, bend right=8] edge node[below] {} (n7);
	\path[]
	(n7) [->,thick,dashed, bend right=8] edge node[below] {} (n6);

	\path[]
	(n3) [->,thick,dashed, bend right=8] edge node[below] {} (n2);
	\path[]
	(n2) [->,thick,dashed, bend right=8] edge node[below] {} (n3);

	\path[]
	(n3) [->,thick, bend right=8] edge node[below] {} (n7);
	\path[]
	(n7) [->,thick, bend right=8] edge node[below] {} (n3);

	\path[]
	(n7) [->,thick, bend right=8] edge node[below] {} (n8);
	\path[]
	(n8) [->,thick, bend right=8] edge node[below] {} (n7);

	\path[]
	(n3) [->,thick, bend right=8] edge node[below] {} (n8);
	\path[]
	(n8) [->,thick, bend right=8] edge node[below] {} (n3);

	\path[]
	(n3) [->,thick, bend right=8] edge node[below] {} (n4);
	\path[]
	(n4) [->,thick, bend right=8] edge node[below] {} (n3);

	\path[]
	(n2) [<-,thick, bend left=8] edge node[above] {} (n7);
	\path[]
	(n2) [->,thick, dashed, bend right=8] edge node[above] {} (n7);

\end{tikzpicture}
\par\end{centering}
\caption{A structurally balanced signed network $\mathcal{G}_{1}$ (top) and
a structurally imbalanced signed network $\mathcal{G}_{2}$ (bottom),
where solid lines and dotted lines represent positive edge and negative
edge, respectively.}
\label{fig:8-node-signed-network}
\end{figure}
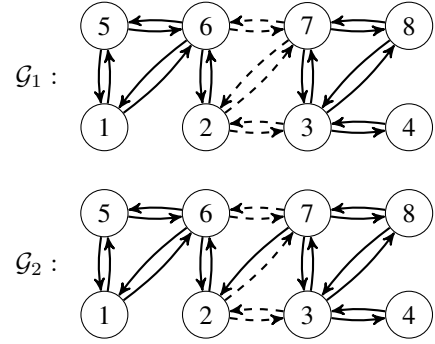

\begin{example}
\label{exa:CV-stability}In this example, we shall illustrate the
eigenvalue transition of signed Laplacian as a function of the magnitude
of compensation vector. Let the compensation vector be such that $\boldsymbol{k}(q)=q\boldsymbol{\delta}$
where $q\ge0$. Note that the scalar $q$ quantifies the magnitude
of the self-loop compensation $\boldsymbol{k}(q)$. Let $\mathcal{S}_{c}$
denote the set of agents employing self-loop compensation.

Case 1: $\mathcal{S}_{c}=\left\{ 2,3,6,7\right\} =\mathcal{V}^{-}$.
Remarkably, according to Figure \ref{fig:cs-2367}, that for structurally
balanced signed network $\mathcal{G}_{1}$ in Figure \ref{fig:8-node-signed-network},
the minimal magnitude of the compensation vector to render $-\mathcal{L}^{\boldsymbol{k}(q)}$
stable is realized exactly when $q=1$ (highlighted by the red vertical
lines in Figure \ref{fig:cs-2367}), in which case $\boldsymbol{k}(q)$
is equal to $\boldsymbol{\delta}$ in (\ref{eq:compensation-vector}).
However, for structurally imbalanced signed network $\mathcal{G}_{2}$
in Figure \ref{fig:8-node-signed-network}, the associated $-\mathcal{L}^{\boldsymbol{k}(q)}$
can be stable for some $q<1$ ($\boldsymbol{k}(q)<\boldsymbol{\delta}$).
Once can see the $-Re(\lambda_{1}(\mathcal{L}^{k}))$ is not a monotonic
function of $q$ for structurally imbalanced signed networks. This
implies the increment of magnitude of compensation vector does not
always enhance the stability of a compensated signed network. 
\begin{figure}[tbh]
\includegraphics[width=9cm]{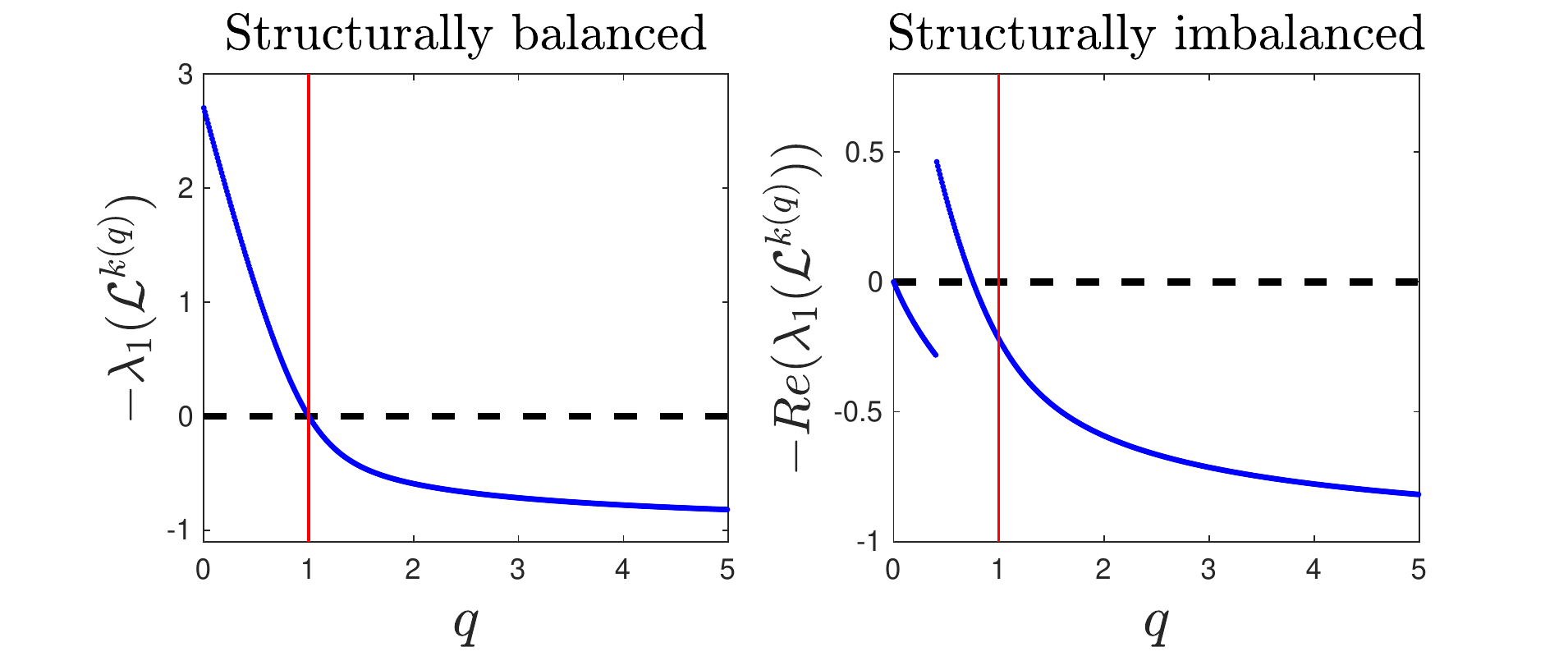}\caption{The smallest eigenvalue of $\mathcal{L}^{\boldsymbol{k}(q)}$ as a
function of $q$ for the structurally balanced network $\mathcal{G}_{1}$
and structurally imbalanced network $\mathcal{G}_{2}$ in Figure \ref{fig:8-node-signed-network},
respectively, where $\mathcal{S}_{c}=\left\{ 2,3,6,7\right\} $.}
\label{fig:cs-2367}
\end{figure}

Case 2: $\mathcal{S}_{c}=\left\{ 3,6,7\right\} \subset\mathcal{V}^{-}$.
\begin{figure}[tbh]
\includegraphics[width=9cm]{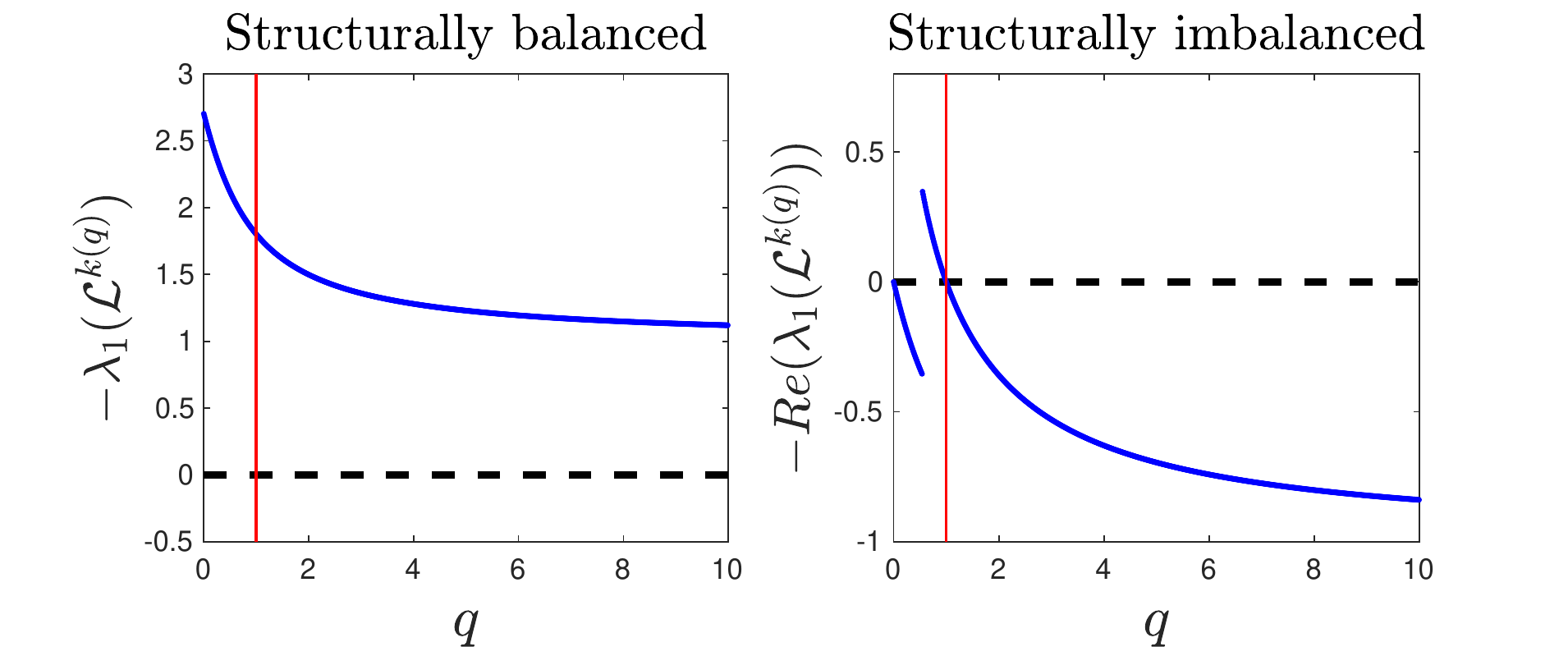}\caption{The smallest eigenvalue of $\mathcal{L}^{\boldsymbol{k}(q)}$ as a
function of $q$ for the structurally balanced network $\mathcal{G}_{1}$
and structurally imbalanced network $\mathcal{G}_{2}$ in Figure \ref{fig:8-node-signed-network},
respectively, where $\mathcal{S}_{c}=\left\{ 3,6,7\right\} $.}
\label{fig:cs-367}
\end{figure}
 In this case, only a subset of agents in $\mathcal{V}^{-}$ are compensated.
Note from Figure (\ref{fig:cs-367}) that the stability of compensated
signed network (\ref{eq:feedback-overall}) may exhibit polarization
phenomenon. Specifically, for structurally balanced signed networks,
the stability of (\ref{eq:feedback-overall}) cannot be enhanced by
the increment of $q$; however, stability of (\ref{eq:feedback-overall})
on structurally imbalanced signed networks can be guaranteed for a
large $q$ and also for some $q\in(0,1)$. 

Case 3: $\mathcal{S}_{c}=\left\{ 2,7\right\} \subset\mathcal{V}^{-}$.
In this case, fewer nodes in $\mathcal{V}^{-}$ are compensated, both
$-\lambda_{1}(\mathcal{L}^{\boldsymbol{k}(q)})$ and $-Re(\lambda_{1}(\mathcal{L}^{\boldsymbol{k}(q)}))$
goes to zero along with the increment of $q$ (top). Interestinglly,
when concentrating on the interval $q\in[0,2]$ for the case of structurally
imbalanced signed network, two intervals regarding the selection of
$q$ are observed which can render $-\mathcal{L}^{\boldsymbol{k}(q)}$
stable. 
\begin{figure}[tbh]
\includegraphics[width=9cm]{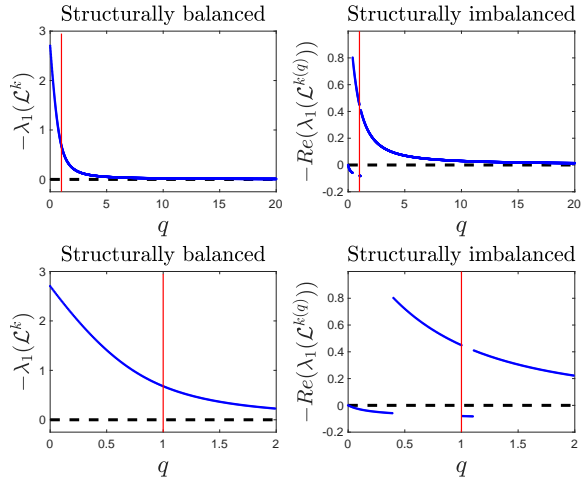}\caption{The smallest eigenvalue of $\mathcal{L}^{\boldsymbol{k}(q)}$ as a
function of $q$ for the structurally balanced network $\mathcal{G}_{1}$
and structurally imbalanced network $\mathcal{G}_{2}$ in Figure \ref{fig:8-node-signed-network},
respectively, where $\mathcal{S}_{c}=\left\{ 2,7\right\} $.}
\label{fig:cs-27}
\end{figure}
\end{example}

Inspired by the aforementioned discussions, it turns out that the
magnitude of the compensation vector $\boldsymbol{k}$ is closely
related to the structural balance of the underlying signed network.
Notably, the right panels in Figure \ref{fig:cs-2367}, Figure \ref{fig:cs-367}
and Figure \ref{fig:cs-27} also imply that the improper self-loop
compensation may also trigger unstability. Therefore, it is intricate
to examine an efficient selection of self-loop compensation for stabilizing
signed networks. Here, a natural question is how far away is a signed
network from being stable via self-loop compensation? In the following,
we shall examine the correlation between the compensation vector and
the stability of compensated network (\ref{eq:feedback-overall})
and explain the observations in Example \ref{exa:CV-stability} analytically.

\subsection{Undirected Signed Networks}

We first present fundamental facts related to structural balance of
a signed network. A Gauge transformation is performed by the matrix
$G=\text{{\bf diag}}\left\{ \sigma_{1},\sigma_{2},\cdots,\sigma_{n}\right\} :\mathbb{R}^{n\times n}\mapsto\mathbb{R}^{n\times n}$,
where $\sigma_{i}\in\{1,-1\}$ for all $i\in\underline{n}$\ \cite{altafini2013consensus}.
If a signed network $\mathcal{G}=(\mathcal{V},\mathcal{E},W)$ is
structurally balanced, then there exists a Gauge transformation $G$,
such that $GWG\ge0$. For each agent $i\in\mathcal{V}$ and an arbitrary
$x_{i}(0){\color{red}{\color{black}\in\mathbb{R}}}$, the multi-agent
system (\ref{eq:feedback-overall}) is said to admit bipartite consensus
if ${\color{black}{\color{blue}{\color{black}\lim}}}_{t\rightarrow\infty}|x_{i}(t)|=\alpha>0$.

We shall start our discussion from undirected signed networks. 

The stability of the Laplacian matrix of undirected signed networks
reduces to examining its positive semidefiniteness. Notably, for structurally
balanced, connected, undirected signed networks, Altafini has shown
that bipartite consensus can be achieved if $\boldsymbol{k}=\boldsymbol{\delta}$
\cite{altafini2013consensus}. Furthermore, we shall proceed to examine
the following three categories of compensation vectors.
\begin{thm}
\label{thm:SB-epsilon-gap-sufficient-necessary}Let $\mathcal{G}=(\mathcal{V},\mathcal{E},W)$
be a structurally balanced, connected, undirected signed network.
Then, the following statements hold.

1) If $\boldsymbol{k}\leq\boldsymbol{\delta}$ and $\boldsymbol{k}\not=\boldsymbol{\delta}$,
then the compensated network (\ref{eq:feedback-overall}) is unstable;

2) If $\boldsymbol{k}\ge\boldsymbol{\delta}$ and $\boldsymbol{k}\not=\boldsymbol{\delta}$,
then the compensated network (\ref{eq:feedback-overall}) achieves
trivial consensus;

3) If $\boldsymbol{k}\veebar\boldsymbol{\delta}$, then the compensated
network (\ref{eq:feedback-overall}) cannot achieve bipartite consensus.
\end{thm}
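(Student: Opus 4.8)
The plan is to handle all three parts through a single structural identity for $\mathcal{L}^{\boldsymbol{\delta}}$ and then read off each conclusion from either a quadratic form or a kernel computation. First I would record that, since $w_{ii}=0$, the compensation $\delta_{i}=\sum_{j}(|w_{ij}|-w_{ij})=2\sum_{j\in\mathcal{N}_{i}^{-}}|w_{ij}|$ exactly cancels the sign defect on the diagonal, so $[\mathcal{L}^{\boldsymbol{\delta}}]_{ii}=\sum_{j}|w_{ij}|$ while the off-diagonal entries remain $-w_{ij}$. Invoking structural balance, I pick the Gauge matrix $G=\text{{\bf diag}}\{\sigma_{1},\dots,\sigma_{n}\}$ with $GWG\ge0$, equivalently $\sigma_{i}\sigma_{j}w_{ij}=|w_{ij}|$ on every edge. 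A direct entrywise check then yields $G\mathcal{L}^{\boldsymbol{\delta}}G=\mathcal{L}_{|W|}$, the ordinary Laplacian of the connected unsigned graph with adjacency matrix $|W|$. Since $G=G^{-1}=G^{T}$, this is simultaneously a similarity and a congruence, so $\mathcal{L}^{\boldsymbol{\delta}}\succeq0$ with a simple zero eigenvalue whose eigenvector is $\boldsymbol{\sigma}:=G\mathds{1}_{n}$ (because $\mathds{1}_{n}$ spans $\ker\mathcal{L}_{|W|}$). Throughout I would exploit the decomposition $\mathcal{L}^{\boldsymbol{k}}=\mathcal{L}^{\boldsymbol{\delta}}+\text{{\bf diag}}\{\boldsymbol{k}-\boldsymbol{\delta}\}$ together with $\sigma_{i}^{2}=1$.

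For part 1, set $\boldsymbol{\epsilon}=\boldsymbol{\delta}-\boldsymbol{k}\ge0$ with $\boldsymbol{\epsilon}\ne\mathbf{0}_{n}$ and evaluate the quadratic form along the null direction: $\boldsymbol{\sigma}^{T}\mathcal{L}^{\boldsymbol{k}}\boldsymbol{\sigma}=\boldsymbol{\sigma}^{T}\mathcal{L}^{\boldsymbol{\delta}}\boldsymbol{\sigma}-\sum_{i}\epsilon_{i}\sigma_{i}^{2}=-\sum_{i}\epsilon_{i}<0$. Hence $\mathcal{L}^{\boldsymbol{k}}$ is not positive semidefinite, so it has a negative eigenvalue; since $-\mathcal{L}^{\boldsymbol{k}}$ is symmetric, the associated mode of (\ref{eq:feedback-overall}) diverges and the network is unstable. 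For part 2, put $\boldsymbol{\eta}=\boldsymbol{k}-\boldsymbol{\delta}\ge0$, $\boldsymbol{\eta}\ne\mathbf{0}_{n}$, so that $\boldsymbol{x}^{T}\mathcal{L}^{\boldsymbol{k}}\boldsymbol{x}=\boldsymbol{x}^{T}\mathcal{L}^{\boldsymbol{\delta}}\boldsymbol{x}+\sum_{i}\eta_{i}x_{i}^{2}\ge0$. Equality forces $\boldsymbol{x}\in\ker\mathcal{L}^{\boldsymbol{\delta}}=\mathrm{span}(\boldsymbol{\sigma})$, say $\boldsymbol{x}=c\boldsymbol{\sigma}$, whence $\sum_{i}\eta_{i}x_{i}^{2}=c^{2}\sum_{i}\eta_{i}$, which is positive unless $c=0$. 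Thus $\mathcal{L}^{\boldsymbol{k}}\succ0$, every eigenvalue of $-\mathcal{L}^{\boldsymbol{k}}$ is strictly negative, and $\boldsymbol{x}(t)\to\mathbf{0}_{n}$, i.e.\ trivial consensus.

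For part 3 the obstruction is again read off from $\boldsymbol{\sigma}$, but now through the kernel rather than the quadratic form. Any nontrivial bipartite-consensus limit of the linear flow (\ref{eq:feedback-overall}) is a finite equilibrium, hence lies in $\ker\mathcal{L}^{\boldsymbol{k}}$, and by structural balance its all-equal-magnitude profile is forced onto the direction $\boldsymbol{\sigma}$, giving a limit $\alpha\boldsymbol{\sigma}$ with $\alpha\ne0$. Such a limit would require $\mathcal{L}^{\boldsymbol{k}}\boldsymbol{\sigma}=\mathbf{0}_{n}$. But $\mathcal{L}^{\boldsymbol{k}}\boldsymbol{\sigma}=\mathcal{L}^{\boldsymbol{\delta}}\boldsymbol{\sigma}-\text{{\bf diag}}\{\boldsymbol{\epsilon}\}\boldsymbol{\sigma}=-[\epsilon_{i}\sigma_{i}]_{i\in\underline{n}}$ with $\boldsymbol{\epsilon}=\boldsymbol{\delta}-\boldsymbol{k}$, and $\boldsymbol{k}\veebar\boldsymbol{\delta}$ gives $\boldsymbol{\epsilon}\ne\mathbf{0}_{n}$, so $\mathcal{L}^{\boldsymbol{k}}\boldsymbol{\sigma}\ne\mathbf{0}_{n}$. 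Therefore no scalar multiple of $\boldsymbol{\sigma}$ lies in $\ker\mathcal{L}^{\boldsymbol{k}}$ and bipartite consensus is impossible.

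I expect the real work to concentrate in two places. The first is nailing the Gauge identity $G\mathcal{L}^{\boldsymbol{\delta}}G=\mathcal{L}_{|W|}$ cleanly, since this is the hinge on which the positive semidefiniteness, the one-dimensional kernel, and its explicit generator $\boldsymbol{\sigma}$ all rest, and it is precisely where structural balance, connectedness, and the specific form of $\boldsymbol{\delta}$ enter. The second, and the genuinely delicate step, is in part 3: justifying that every bipartite-consensus profile must align with $\pm\alpha\boldsymbol{\sigma}$ rather than with some exotic equal-magnitude sign pattern. The hard part will be arguing that an all-equal-magnitude equilibrium $\boldsymbol{v}$ with $\mathcal{L}^{\boldsymbol{k}}\boldsymbol{v}=\mathbf{0}_{n}$ is compatible only with the unique structural-balance bipartition; everything else in the argument reduces to one-line quadratic-form or kernel computations.
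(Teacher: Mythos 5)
Your parts 1) and 2) are sound. Part 1) is essentially the paper's own argument --- evaluate the quadratic form of $\mathcal{L}^{\boldsymbol{k}}$ at the zero-eigenvector of $\mathcal{L}^{\boldsymbol{\delta}}$, whose entries are all nonzero --- except that you handle an arbitrary deficiency $\boldsymbol{\epsilon}=\boldsymbol{\delta}-\boldsymbol{k}\ge0$ where the paper, without loss of generality, perturbs a single diagonal entry. Part 2) is actually more self-contained than the paper's treatment, which defers to leader-following consensus arguments; your positive-definiteness computation via $\ker\mathcal{L}^{\boldsymbol{\delta}}=\mathrm{span}\{\boldsymbol{\sigma}\}$ closes that case directly. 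The Gauge identity $G\mathcal{L}^{\boldsymbol{\delta}}G=\mathcal{L}_{|W|}$ you put at the center is correct and is exactly what underlies the paper's claim that the kernel vector of $\mathcal{L}^{\boldsymbol{\delta}}$ has no zero entries.

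Part 3), however, has a genuine gap, and it is the one you flag yourself. You only exclude bipartite-consensus limits aligned with $\boldsymbol{\sigma}=G\mathds{1}_{n}$: the computation $\mathcal{L}^{\boldsymbol{k}}\boldsymbol{\sigma}=-[\epsilon_{i}\sigma_{i}]\ne\mathbf{0}_{n}$ says nothing about a limit $\alpha\boldsymbol{s}$ for some other sign vector $\boldsymbol{s}\in\{\pm1\}^{n}$. The assertion that structural balance ``forces'' the limit onto the direction $\boldsymbol{\sigma}$ is unjustified: structural balance constrains $W$, not the kernel of $\mathcal{L}^{\boldsymbol{k}}$, and for $\boldsymbol{k}\ne\boldsymbol{\delta}$ that kernel bears no a priori relation to the bipartition of $\mathcal{G}$. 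Nor does your quadratic-form machinery close the hole: if $\mathcal{L}^{\boldsymbol{k}}\boldsymbol{s}=\mathbf{0}_{n}$, then $\boldsymbol{s}^{T}\mathcal{L}^{\boldsymbol{k}}\boldsymbol{s}=\boldsymbol{s}^{T}\mathcal{L}^{\boldsymbol{\delta}}\boldsymbol{s}+\sum_{i}(k_{i}-\delta_{i})=0$, and under $\boldsymbol{k}\veebar\boldsymbol{\delta}$ the sum $\sum_{i}(k_{i}-\delta_{i})$ can take either sign, so no contradiction with $\mathcal{L}^{\boldsymbol{\delta}}\succeq0$ follows.

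The missing idea is the paper's: take an \emph{arbitrary} Gauge matrix $G^{*}=\text{{\bf diag}}\{\boldsymbol{s}\}$ and argue entrywise rather than with the scalar quadratic form. For every sign vector $\boldsymbol{s}$ one has the entrywise bound $[G^{*}\mathcal{L}^{\boldsymbol{\delta}}G^{*}\mathds{1}_{n}]_{i}=\sum_{j}|w_{ij}|-\sum_{j\ne i}s_{i}s_{j}w_{ij}\ge0$, so at an index $i^{\prime}$ with $k_{i^{\prime}}>\delta_{i^{\prime}}$ (this is where the positive half of the $\veebar$ hypothesis enters --- your argument only ever used $\boldsymbol{\epsilon}\ne\mathbf{0}_{n}$) one gets
\[
[G^{*}\mathcal{L}^{\boldsymbol{k}}G^{*}\mathds{1}_{n}]_{i^{\prime}}\ge k_{i^{\prime}}-\delta_{i^{\prime}}>0.
\]
Hence no sign vector whatsoever lies in $\ker\mathcal{L}^{\boldsymbol{k}}$, which rules out every candidate bipartite limit, not just $\pm\alpha\boldsymbol{\sigma}$. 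With this replacement your part 3) is complete; without it, you have excluded only one of the possible sign patterns.
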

\begin{proof}
1) Assume that $\mathcal{L}^{\boldsymbol{k}}$ is positive semidefinite,
without  loss of generality, let $l_{11}^{\boldsymbol{k}}=\sum_{j=1}^{n}|w_{ij}|-\epsilon,$
where $\epsilon>0$, and $l_{ii}^{\boldsymbol{k}}=\sum_{j=1}^{n}|w_{ij}|$,
where $i\in\underline{n}$, and $i\neq1$. Thus $\mathcal{L}^{\boldsymbol{k}}=\mathcal{L}^{\boldsymbol{\delta}}-\Delta$,
where $\Delta=\text{{\bf diag}}\left\{ \epsilon,0,\ldots,0\right\} $.
Let $\boldsymbol{v}=[v_{1},v_{2},\ldots,v_{n}]^{T}\in\mathbb{R}^{n}$
be the eigenvector of $\mathcal{L}^{\boldsymbol{\delta}}$ corresponding
to the zero eigenvalue, note that $v_{i}\neq0$ for all $i\in\mathcal{V}$,
therefore $\boldsymbol{v}^{T}\mathcal{L}^{\boldsymbol{\delta}}\boldsymbol{v}=0$
and $\boldsymbol{v}^{T}\Delta\boldsymbol{v}>0.$ As a result, one
has,$\boldsymbol{v}^{T}\mathcal{L}^{\boldsymbol{\delta}}\boldsymbol{v}-\boldsymbol{v}^{T}\Delta\boldsymbol{v}<0$
which is a contradiction and $\mathcal{L}^{\boldsymbol{k}}$ has at
least one negative eigenvalue, i.e., the multi-agent system (\ref{eq:feedback-overall})
is unstable.

2) Note that $\mathcal{L}^{\boldsymbol{k}}=\mathcal{L}^{\boldsymbol{\delta}}+\text{{\bf diag}}\{\boldsymbol{k}-\boldsymbol{\delta}\}$,
then the proof follows from traditional treatments of leader-following
consensus problem via Gauge transformation \cite{cao2012distributed},
we shall omit it for space.

3) Assume that the multi-agent system (\ref{eq:feedback-overall})
achieves the bipartite consensus, then there exists a Gauge transformation
$G^{*}$ such that $\text{{\bf null}}(\mathcal{L}^{\boldsymbol{k}})=\text{{\bf span}}\{G^{*}\mathds{1}_{n}\}$. 

If $G^{*}=G$, then one has \textcolor{black}{$\underset{t\rightarrow\infty}{\text{{\bf lim}}}\boldsymbol{x}(t)=\frac{1}{n}G\mathds{1}_{n}\mathds{1}_{n}^{T}G\boldsymbol{x}(0)$
and $G\mathcal{L}^{\boldsymbol{k}}G\mathds{1}_{n}=\boldsymbol{0},$
which contradict with the fact $\boldsymbol{k}\neq\boldsymbol{\delta}$}.

If $G^{*}\neq G$, let $\mathcal{L}^{\boldsymbol{k}}=\mathcal{L}^{\boldsymbol{\delta}}+\varDelta$,
then one has 
\begin{align}
G^{*}\mathcal{L}^{\boldsymbol{k}}G^{*}\mathds{1}_{n} & =G^{*}(\mathcal{L}^{\boldsymbol{\delta}}+\varDelta)G^{*}\mathds{1}_{n}\nonumber \\
 & =G^{*}\mathcal{L}^{\boldsymbol{\delta}}G^{*}\mathds{1}_{n}+\varDelta\mathds{1}_{n},
\end{align}
due to $G^{*}\neq G$, thus $G^{*}\mathcal{L}^{\boldsymbol{\delta}}G^{*}\mathds{1}_{n}\geq\boldsymbol{0}$.
In addition, $\varDelta\mathds{1}_{n}$ has at least one element being
positive, Therefore, $G^{*}\mathcal{L}^{\boldsymbol{k}}G^{*}\mathds{1}_{n}\neq\boldsymbol{0}$,
i.e., $\text{{\bf null}}(\mathcal{L}^{\boldsymbol{k}})\neq\text{{\bf span}}\{G^{*}\mathds{1}_{n}\}$
and the multi-agent system (\ref{eq:feedback-overall}) cannot achieve
the bipartite consensus.
\end{proof}
According to Theorem \ref{thm:SB-epsilon-gap-sufficient-necessary},
one can obtain the following result regarding the bipartite consensus
of the compensated multi-agent network (\ref{eq:feedback-overall}).
\begin{cor}
\label{cor:1-1} Let $\mathcal{G}=(\mathcal{V},\mathcal{E},W)$ be
a structurally balanced, connected, undirected signed network. Then,
the compensated multi-agent network (\ref{eq:feedback-overall}) achieves
bipartite consensus if and only if $\boldsymbol{k}=\boldsymbol{\delta}$.
Moreover, the bipartite consensus value is 
\begin{equation}
\underset{t\rightarrow\infty}{\text{{\bf lim}}}\boldsymbol{x}(t)=\frac{1}{n}G\mathds{1}_{n}\mathds{1}_{n}^{T}G\boldsymbol{x}(0),
\end{equation}
where $G$ is the Gauge transformation associated with $\mathcal{G}$.
\end{cor}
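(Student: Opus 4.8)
The plan is to treat this corollary as a direct consequence of Theorem \ref{thm:SB-epsilon-gap-sufficient-necessary} together with Altafini's sufficiency result \cite{altafini2013consensus}, supplemented by an eigenspace computation for the steady state. I would organize the argument into three parts: sufficiency of $\boldsymbol{k}=\boldsymbol{\delta}$, necessity via a case exhaustion, and the explicit consensus value.

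For \emph{sufficiency}, I would simply invoke the result quoted just before the theorem: for a structurally balanced, connected, undirected signed network, $\boldsymbol{k}=\boldsymbol{\delta}$ renders $\mathcal{L}^{\boldsymbol{\delta}}$ positive semidefinite with a simple zero eigenvalue, so the compensated dynamics (\ref{eq:feedback-overall}) achieves bipartite consensus \cite{altafini2013consensus}. For \emph{necessity}, I would argue by exhausting all cases with $\boldsymbol{k}\neq\boldsymbol{\delta}$. Comparing the two vectors componentwise yields a trichotomy: either $\boldsymbol{k}\leq\boldsymbol{\delta}$ with $\boldsymbol{k}\neq\boldsymbol{\delta}$, or $\boldsymbol{k}\geq\boldsymbol{\delta}$ with $\boldsymbol{k}\neq\boldsymbol{\delta}$, or $\boldsymbol{k}\veebar\boldsymbol{\delta}$; indeed, writing $\boldsymbol{k}\neq\boldsymbol{\delta}$ means the index sets $\{i:k_i>\delta_i\}$ and $\{i:k_i<\delta_i\}$ are not both empty, and which of them is nonempty selects exactly one of the three regimes. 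Parts 1), 2), 3) of Theorem \ref{thm:SB-epsilon-gap-sufficient-necessary} then dispatch these respectively: in the first case (\ref{eq:feedback-overall}) is unstable so $|x_i(t)|$ cannot converge to a positive constant; in the second case the trajectories converge trivially to the origin, i.e. $|x_i(t)|\to 0$, again ruling out bipartite consensus which by definition requires $\lim_{t\to\infty}|x_i(t)|=\alpha>0$; and the third case precludes bipartite consensus directly. Hence bipartite consensus forces $\boldsymbol{k}=\boldsymbol{\delta}$.

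For the \emph{consensus value}, I would exploit the Gauge symmetry. Since $\mathcal{G}$ is structurally balanced there is a Gauge transformation $G$ with $GWG\geq 0$, so $GWG$ is the adjacency matrix of an unsigned connected network; let $\tilde{\mathcal{L}}$ be its Laplacian. Using $[GWG]_{ij}=\sigma_i\sigma_j w_{ij}=|w_{ij}|$ and $G^2=I$, a short computation shows $\mathcal{L}^{\boldsymbol{\delta}}=G\tilde{\mathcal{L}}G$, since the diagonal entries $\sum_j|w_{ij}|$ and the off-diagonal entries $-w_{ij}$ match on both sides. Because $\tilde{\mathcal{L}}$ is a connected unsigned Laplacian, $\text{{\bf null}}(\tilde{\mathcal{L}})=\text{{\bf span}}\{\mathds{1}_n\}$, and therefore $\text{{\bf null}}(\mathcal{L}^{\boldsymbol{\delta}})=\text{{\bf span}}\{G\mathds{1}_n\}$. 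As $\mathcal{L}^{\boldsymbol{\delta}}$ is symmetric positive semidefinite with a simple zero eigenvalue, the solution of (\ref{eq:feedback-overall}) converges to the orthogonal projection of $\boldsymbol{x}(0)$ onto this null space; evaluating that projector with $(G\mathds{1}_n)^T(G\mathds{1}_n)=n$ gives $\frac{1}{n}G\mathds{1}_n\mathds{1}_n^T G\boldsymbol{x}(0)$, as claimed.

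The main obstacle is not any single hard estimate but rather the bookkeeping in the necessity step: one must verify that the three regimes of Theorem \ref{thm:SB-epsilon-gap-sufficient-necessary} genuinely exhaust $\boldsymbol{k}\neq\boldsymbol{\delta}$ and, crucially, translate each conclusion (instability, trivial consensus, failure of bipartite consensus) into a violation of the defining condition $\lim_{t\to\infty}|x_i(t)|=\alpha>0$. The identity $\mathcal{L}^{\boldsymbol{\delta}}=G\tilde{\mathcal{L}}G$ and the projector computation are then routine once the Gauge transformation is in place.
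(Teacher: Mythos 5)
Your proposal is correct and takes essentially the same route as the paper: the paper presents Corollary \ref{cor:1-1} as an immediate consequence of the trichotomy in Theorem \ref{thm:SB-epsilon-gap-sufficient-necessary} combined with Altafini's sufficiency result for $\boldsymbol{k}=\boldsymbol{\delta}$, which is exactly your argument. Your additions --- verifying that the three regimes exhaust $\boldsymbol{k}\neq\boldsymbol{\delta}$ and deriving the steady state via the identity $\mathcal{L}^{\boldsymbol{\delta}}=G\tilde{\mathcal{L}}G$ and the orthogonal projection onto $\text{{\bf span}}\{G\mathds{1}_{n}\}$ --- are sound and merely make explicit what the paper leaves implicit.
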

\begin{rem}
One should differentiate Theorem \ref{thm:SB-epsilon-gap-sufficient-necessary}
from the related results on bipartitie consensus problem (e.g., Theorem
1 in \cite{altafini2013consensus}) in literatures, where necessary
and/or sufficient conditions are established for achieving bipartite
consensus for the following protocol, 
\begin{equation}
\dot{x}_{i}(t)=-\sum_{j\in\mathcal{N}_{i}}|w_{ij}|(x_{i}(t)-\text{{\bf sgn}}(w_{ij})x_{j}(t)),i\in\mathcal{V}.
\end{equation}
Essentially, Theorem \ref{thm:SB-epsilon-gap-sufficient-necessary}
indicates that the only case that compensated signed network (\ref{eq:feedback-overall})
admits bipartite consensus is $\boldsymbol{k}=\boldsymbol{\delta}$,
namely, other selections of compensation vector $\boldsymbol{k}\ge0$
will never lead the compensated signed network (\ref{eq:feedback-overall})
to a bipartite consensus solution. This is also valid for the selection
of compensation vector when $\mathcal{L}^{\boldsymbol{k}}$ is positive
semidefinite and has a simple zero eigenvalue, as stated below.
\end{rem}
\begin{cor}
\label{cor:1-2} Let $\mathcal{G}=(\mathcal{V},\mathcal{E},W)$ be
a structurally balanced, connected, undirected signed network. Then,
the $\mathcal{L}^{\boldsymbol{k}}\succeq0$ and has a simple zero
eigenvalue with eigenvector $G\mathbf{1}_{n}$ if and only if $\boldsymbol{k}=\boldsymbol{\delta}$,
where $G$ is the Gauge transformation associated with signed network
$\mathcal{G}$.
\end{cor}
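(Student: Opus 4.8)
The plan is to prove Corollary~\ref{cor:1-2} as a direct consequence of Corollary~\ref{cor:1-1} and Theorem~\ref{thm:SB-epsilon-gap-sufficient-necessary}, since the spectral property $\mathcal{L}^{\boldsymbol{k}}\succeq0$ with a simple zero eigenvalue is exactly the algebraic mechanism that drives bipartite consensus. First I would establish the sufficiency direction: assuming $\boldsymbol{k}=\boldsymbol{\delta}$, I invoke the well-known fact (already cited from \cite{altafini2013consensus}) that for a structurally balanced, connected, undirected signed network the Gauge transformation $G$ satisfies $G\mathcal{L}^{\boldsymbol{\delta}}G = L$, where $L$ is the standard Laplacian of the unsigned network $GWG\ge0$. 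Since $\mathcal{L}^{\boldsymbol{\delta}} = G L G$ with $L\succeq0$ and $G$ orthogonal ($G^2 = I$), it follows that $\mathcal{L}^{\boldsymbol{\delta}}\succeq0$. Connectivity of the underlying graph gives $L$ a simple zero eigenvalue with eigenvector $\mathds{1}_n$, hence $\mathcal{L}^{\boldsymbol{\delta}}$ has a simple zero eigenvalue with eigenvector $G\mathds{1}_n$. This settles the ``if'' part.

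For necessity, I would argue by contraposition using the trichotomy already embedded in Theorem~\ref{thm:SB-epsilon-gap-sufficient-necessary}. Suppose $\boldsymbol{k}\neq\boldsymbol{\delta}$; then exactly one of three comparisons holds relative to $\boldsymbol{\delta}$, namely $\boldsymbol{k}\le\boldsymbol{\delta}$ (with $\boldsymbol{k}\neq\boldsymbol{\delta}$), $\boldsymbol{k}\ge\boldsymbol{\delta}$ (with $\boldsymbol{k}\neq\boldsymbol{\delta}$), or $\boldsymbol{k}\veebar\boldsymbol{\delta}$. In the first case, part~1 of Theorem~\ref{thm:SB-epsilon-gap-sufficient-necessary} shows $\mathcal{L}^{\boldsymbol{k}}$ has a negative eigenvalue, so it fails to be positive semidefinite. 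In the remaining two cases I would show the zero eigenvalue cannot be both present and simple with eigenvector $G\mathds{1}_n$: writing $\mathcal{L}^{\boldsymbol{k}} = \mathcal{L}^{\boldsymbol{\delta}} + \mathrm{diag}\{\boldsymbol{k}-\boldsymbol{\delta}\}$, the product $G\mathcal{L}^{\boldsymbol{k}}G\mathds{1}_n = G\mathcal{L}^{\boldsymbol{\delta}}G\mathds{1}_n + \mathrm{diag}\{\boldsymbol{k}-\boldsymbol{\delta}\}\mathds{1}_n = (\boldsymbol{k}-\boldsymbol{\delta})$ (componentwise), which is nonzero whenever $\boldsymbol{k}\neq\boldsymbol{\delta}$. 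Hence $G\mathds{1}_n$ is not a null vector of $\mathcal{L}^{\boldsymbol{k}}$, contradicting the hypothesized eigenvector structure.

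The cleanest route is therefore to reduce the whole statement to the eigenvector computation $\mathcal{L}^{\boldsymbol{k}} G\mathds{1}_n = \mathrm{diag}\{\boldsymbol{k}-\boldsymbol{\delta}\}\mathds{1}_n$, which simultaneously handles necessity in all non-equality cases, and to borrow positive semidefiniteness for the equality case from the Gauge-transformed unsigned Laplacian. I expect the main obstacle to be the subtle point that positive semidefiniteness together with a \emph{simple} zero eigenvalue is strictly stronger than mere bipartite consensus convergence, so I must take care that the necessity argument rules out the possibility of a zero eigenvalue whose eigenvector is some $G^{*}\mathds{1}_n$ with $G^{*}\neq G$; this is precisely where I would reuse the reasoning from part~3 of Theorem~\ref{thm:SB-epsilon-gap-sufficient-necessary}, noting that $G^{*}\mathcal{L}^{\boldsymbol{\delta}}G^{*}\mathds{1}_n\ge\boldsymbol{0}$ and that $\mathrm{diag}\{\boldsymbol{k}-\boldsymbol{\delta}\}\mathds{1}_n$ cannot cancel it to yield the zero vector while keeping $\mathcal{L}^{\boldsymbol{k}}\succeq0$. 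Tying these observations together gives the biconditional, with the eigenvector identity doing the heavy lifting and Corollary~\ref{cor:1-1} providing the conceptual scaffold.
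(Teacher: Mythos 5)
Your proof is correct, and it takes a more direct route than the paper does. The paper gives no standalone argument for this corollary: it is presented as a consequence of Theorem~\ref{thm:SB-epsilon-gap-sufficient-necessary}, so the implicit proof is the trichotomy ($\boldsymbol{k}\le\boldsymbol{\delta}$ gives a negative eigenvalue, $\boldsymbol{k}\ge\boldsymbol{\delta}$ gives positive definiteness hence no zero eigenvalue, $\boldsymbol{k}\veebar\boldsymbol{\delta}$ excludes the bipartite null space), which in turn rests on the Perron--Frobenius/contradiction machinery in that theorem's proof. Your argument bypasses all of that: sufficiency follows from the Gauge conjugation $G\mathcal{L}^{\boldsymbol{\delta}}G=L(|W|)\succeq0$ (simple zero eigenvalue by connectivity of the unsigned graph), and necessity follows from the single identity
\begin{equation*}
\mathcal{L}^{\boldsymbol{k}}G\mathds{1}_{n}
=\bigl(\mathcal{L}^{\boldsymbol{\delta}}+\text{{\bf diag}}\{\boldsymbol{k}-\boldsymbol{\delta}\}\bigr)G\mathds{1}_{n}
=G(\boldsymbol{k}-\boldsymbol{\delta}),
\end{equation*}
which is nonzero whenever $\boldsymbol{k}\neq\boldsymbol{\delta}$, so $G\mathds{1}_{n}$ simply fails to be a null vector and the conjunction in the statement is falsified in every case at once. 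This is cleaner than the paper's route in two respects: it needs no case analysis on how $\boldsymbol{k}$ compares to $\boldsymbol{\delta}$, and it proves something slightly stronger (the eigenvector condition alone, without positive semidefiniteness or simplicity, already forces $\boldsymbol{k}=\boldsymbol{\delta}$). What the paper's route buys instead is conceptual unity with the dynamical results surrounding the corollary. One small remark: the middle portion of your write-up (the trichotomy with parts~1 and~3 of Theorem~\ref{thm:SB-epsilon-gap-sufficient-necessary}, and the worry about eigenvectors $G^{*}\mathds{1}_{n}$ with $G^{*}\neq G$) is redundant given your final reduction --- the corollary pins the eigenvector to $G\mathds{1}_{n}$ for the specific Gauge matrix $G$ of $\mathcal{G}$, so the identity above settles every $\boldsymbol{k}\neq\boldsymbol{\delta}$ and you can delete that detour entirely.
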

In fact, for structurally balanced signed networks, the compensation
vector $\boldsymbol{k}=\boldsymbol{\delta}$ is optimal, as stated
in the following result.
\begin{thm}[Optimality]
\label{thm:SB-minma} Let $\mathcal{G}=(\mathcal{V},\mathcal{E},W)$
be a structurally balanced, connected, undirected signed network.
 Let $\boldsymbol{k}=[k_{1},\thinspace k_{2},\ldots,\thinspace k_{n}]^{T}\in\mathbb{R}^{n}$
be a compensation vector such that $\mathcal{L}^{\boldsymbol{k}}\succeq0$.
Then $\text{\ensuremath{\|}}\boldsymbol{k}\|_{1}\geq\|\boldsymbol{\delta}\|_{1}$.
\end{thm}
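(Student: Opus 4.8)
The plan is to find a single, cleverly chosen test vector on which the quadratic form of $\mathcal{L}^{\boldsymbol{k}}$ collapses to the difference $\|\boldsymbol{k}\|_1-\|\boldsymbol{\delta}\|_1$. Since $\mathcal{G}$ is structurally balanced, let $G=\text{{\bf diag}}\{\sigma_1,\ldots,\sigma_n\}$ with $\sigma_i\in\{1,-1\}$ be the Gauge transformation satisfying $GWG\geq0$, and set $\boldsymbol{v}=G\mathds{1}_n$, so that $[\boldsymbol{v}]_i=\sigma_i$ and hence $v_i^2=1$ for every $i\in\underline{n}$. First I would record that $\mathcal{L}^{\boldsymbol{\delta}}$ has diagonal entries $l_{ii}^{\boldsymbol{\delta}}=l_{ii}+\delta_i=\sum_j|w_{ij}|$ and off-diagonal entries $-w_{ij}$, so that $G\mathcal{L}^{\boldsymbol{\delta}}G$ is precisely the ordinary (unsigned) Laplacian of the graph with nonnegative weights $|w_{ij}|$; consequently $\mathcal{L}^{\boldsymbol{\delta}}\succeq0$ with $\mathcal{L}^{\boldsymbol{\delta}}\boldsymbol{v}=\boldsymbol{0}$. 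This is exactly the statement of Corollary \ref{cor:1-2}, which I would simply invoke to certify that $\boldsymbol{v}$ lies in the kernel of $\mathcal{L}^{\boldsymbol{\delta}}$.

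Next I would split off the compensation term by writing $\mathcal{L}^{\boldsymbol{k}}=\mathcal{L}^{\boldsymbol{\delta}}+\text{{\bf diag}}\{\boldsymbol{k}-\boldsymbol{\delta}\}$ and evaluating the associated quadratic form at $\boldsymbol{v}$. Using $\mathcal{L}^{\boldsymbol{\delta}}\boldsymbol{v}=\boldsymbol{0}$ and $v_i^2=1$, this gives
\begin{equation}
\boldsymbol{v}^{T}\mathcal{L}^{\boldsymbol{k}}\boldsymbol{v}=\boldsymbol{v}^{T}\mathcal{L}^{\boldsymbol{\delta}}\boldsymbol{v}+\sum_{i=1}^{n}(k_i-\delta_i)v_i^2=\sum_{i=1}^{n}(k_i-\delta_i).
\end{equation}
The point of choosing $\boldsymbol{v}=G\mathds{1}_n$ is that its squared entries are all $1$, so the otherwise indefinite correction $\text{{\bf diag}}\{\boldsymbol{k}-\boldsymbol{\delta}\}$ contributes exactly the signed sum of its diagonal, with all cross terms annihilated by the kernel property.

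To conclude, I would invoke the hypothesis $\mathcal{L}^{\boldsymbol{k}}\succeq0$, which forces $\boldsymbol{v}^{T}\mathcal{L}^{\boldsymbol{k}}\boldsymbol{v}\geq0$, hence $\sum_i k_i\geq\sum_i\delta_i$. Since a compensation vector satisfies $k_i\geq0$ and since $\delta_i=\sum_j(|w_{ij}|-w_{ij})\geq0$ by construction, both sides coincide with the respective $\ell_1$ norms, yielding $\|\boldsymbol{k}\|_1\geq\|\boldsymbol{\delta}\|_1$. The only genuinely substantive step is the first one, namely recognizing that structural balance makes $\boldsymbol{v}=G\mathds{1}_n$ simultaneously a null vector of $\mathcal{L}^{\boldsymbol{\delta}}$ and a vector with unit-squared entries; once that test vector is identified, the remainder is a one-line quadratic-form evaluation and a sign argument. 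I expect the main obstacle to be purely expository: justifying $\mathcal{L}^{\boldsymbol{\delta}}\boldsymbol{v}=\boldsymbol{0}$ cleanly from the balance condition $GWG\geq0$, which the earlier corollaries already supply.
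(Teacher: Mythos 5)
Your proof is correct and is essentially the paper's own argument: both evaluate the quadratic form of $\mathcal{L}^{\boldsymbol{k}}$ at the gauge vector $\boldsymbol{v}=G\mathds{1}_{n}$, use structural balance to get $\boldsymbol{v}^{T}\mathcal{L}^{\boldsymbol{\delta}}\boldsymbol{v}=0$, and conclude that the form equals $\sum_{i}(k_{i}-\delta_{i})$, which must be nonnegative. The only cosmetic difference is that you argue directly from $\mathcal{L}^{\boldsymbol{k}}\succeq0$, whereas the paper runs the identical computation as a contradiction starting from the supposition $\|\boldsymbol{k}\|_{1}<\|\boldsymbol{\delta}\|_{1}$.
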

\begin{proof}
Suppose that $\text{\ensuremath{\|}}\boldsymbol{k}\|_{1}<\|\boldsymbol{\delta}\|_{1}$,
namely, 
\begin{equation}
\sum_{i=1}^{n}k_{i}<\sum_{i=1}^{n}\delta_{i}.
\end{equation}
Note that there exists a Gauge transformation $G$ such that $\mathds{1}_{n}^{T}G(\mathcal{L}+\text{{\bf diag}}\{\boldsymbol{\delta}\})G\mathds{1}_{n}\boldsymbol{=}0$.
Therefore,
\begin{align}
 & \mathds{1}_{n}^{T}G(\mathcal{L}+\text{{\bf diag}}\{\boldsymbol{k}\})G\mathds{1}_{n}\nonumber \\
 & =-\sum_{i=1}^{n}\sum_{j=1,j\neq i}^{n}|w_{ij}|\nonumber \\
 & +\sum_{i=1}^{n}\left(\sum_{j=1,j\neq i}^{n}w_{ij}+k_{i}\right)<0
\end{align}
which implies that $\mathcal{L}+\text{{\bf diag}}\{\boldsymbol{k}\}$
has a negative eigenvalue and, therefore, is not positive semidefinite.
This establishes a contradiction.
\end{proof}
In fact, Theorem \ref{thm:SB-minma} implies that the selection of
compensation vector stated in Theorem \ref{thm:SB-epsilon-gap-sufficient-necessary}
is optimal in term of the magnitude of the compensation vector, characterized
by $1$-norm of vectors. 

The above discussions are mainly concentrate on the structurally balanced
signed networks, and it is shown that the sufficient and necessary
condition for the multi-agent system (\ref{eq:feedback-overall})
achieving bipartite consensus is $\boldsymbol{k}=\boldsymbol{\delta}$;
for the structurally imbalanced signed network, it is shown that the
multi-agent system (\ref{eq:feedback-overall}) achieves trivial consensus
if $\boldsymbol{k}=\boldsymbol{\delta}$ \cite{altafini2013consensus}.
However, for a connected, structurally imbalanced signed network,
if the associated signed Laplacian matrix has negative eigenvalues,
a natural question is whether or not the compensation vector $\boldsymbol{k}$
has to be up to $\boldsymbol{\delta}$ so as to stabilize the system
(\ref{eq:feedback-overall})? We provide the following result to address
this question.
\begin{thm}
\label{prop:SB-2}Let $\mathcal{G}=(\mathcal{V},\mathcal{E},W)$ be
a structurally imbalanced, undirected signed network. Then, there
exists a compensation vector $\boldsymbol{k}^{\prime}=[k_{1}^{\prime},\thinspace k_{2}^{\prime},\ldots,\thinspace k_{n}^{\prime}]^{T}\in\mathbb{R}^{n}$
and $\boldsymbol{k}^{\prime}\le\boldsymbol{\delta}$ such that $-\mathcal{L}^{\boldsymbol{k}^{\prime}}$
is stable.
\end{thm}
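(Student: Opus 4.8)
The plan is to exploit the gap between mere positive semidefiniteness and strict positive definiteness of the fully compensated Laplacian $\mathcal{L}^{\boldsymbol{\delta}}$. The decisive fact I would establish first is that, for a connected \emph{imbalanced} network, $\mathcal{L}^{\boldsymbol{\delta}}$ is positive \emph{definite}, in sharp contrast to the balanced case (Corollary \ref{cor:1-2}), where $\mathcal{L}^{\boldsymbol{\delta}}$ is only positive semidefinite with a zero eigenvalue along $G\mathds{1}_{n}$. To see this I would compute the quadratic form directly from $l_{ii}^{\boldsymbol{\delta}}=\sum_{j\neq i}|w_{ij}|$ and $l_{ij}^{\boldsymbol{\delta}}=-w_{ij}$, obtaining for every $\boldsymbol{x}\in\mathbb{R}^{n}$
\[
\boldsymbol{x}^{T}\mathcal{L}^{\boldsymbol{\delta}}\boldsymbol{x}=\tfrac{1}{2}\sum_{(i,j)\in\mathcal{E}}|w_{ij}|\bigl(x_{i}-\text{{\bf sgn}}(w_{ij})x_{j}\bigr)^{2}\ge 0,
\]
so that $\mathcal{L}^{\boldsymbol{\delta}}\succeq 0$ holds unconditionally.

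The second step is to argue that the vanishing $\boldsymbol{x}^{T}\mathcal{L}^{\boldsymbol{\delta}}\boldsymbol{x}=0$ forces $\boldsymbol{x}=\boldsymbol{0}$ precisely when $\mathcal{G}$ is structurally imbalanced. Indeed, the form vanishes if and only if $x_{i}=\text{{\bf sgn}}(w_{ij})x_{j}$ on every edge; by connectivity this propagates a consistent labeling $x_{i}=\sigma_{i}\alpha$ with $\sigma_{i}\in\{1,-1\}$, which is realizable only if the sign pattern is cycle-consistent, i.e. only if $\mathcal{G}$ is structurally balanced. Hence for an imbalanced $\mathcal{G}$ the sole solution is $\alpha=0$, giving $\mathcal{L}^{\boldsymbol{\delta}}\succ 0$ and a strict spectral margin $\mu:=\lambda_{\min}(\mathcal{L}^{\boldsymbol{\delta}})>0$. (If $\mathcal{G}$ is disconnected the same conclusion holds on any imbalanced component, which suffices for what follows.)

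With $\mu>0$ in hand, the third step is a routine perturbation argument. Writing $\mathcal{L}^{\boldsymbol{k}'}=\mathcal{L}^{\boldsymbol{\delta}}-\text{{\bf diag}}\{\boldsymbol{\delta}-\boldsymbol{k}'\}$, I would pick any node $i_{0}\in\mathcal{V}^{-}$ (nonempty, since an imbalanced network has a negative edge, whence $\delta_{i_{0}}>0$) and set $k_{i_{0}}'=\delta_{i_{0}}-\epsilon$ with $0<\epsilon<\min\{\delta_{i_{0}},\mu\}$, leaving $k_{j}'=\delta_{j}$ for $j\neq i_{0}$. Then $\boldsymbol{0}\le\boldsymbol{k}'\le\boldsymbol{\delta}$ with $\boldsymbol{k}'\neq\boldsymbol{\delta}$, and $\text{{\bf diag}}\{\boldsymbol{\delta}-\boldsymbol{k}'\}$ is a diagonal positive semidefinite matrix with a single nonzero entry $\epsilon$, hence of spectral norm $\epsilon$. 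Weyl's inequality then yields $\lambda_{\min}(\mathcal{L}^{\boldsymbol{k}'})\ge\mu-\epsilon>0$, so $\mathcal{L}^{\boldsymbol{k}'}\succ 0$, i.e. $i_{-}(\mathcal{L}^{\boldsymbol{k}'})=0$ and $-\mathcal{L}^{\boldsymbol{k}'}$ is stable with strictly less compensation than $\boldsymbol{\delta}$.

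The main obstacle is Steps one and two, namely pinning down that structural imbalance upgrades $\mathcal{L}^{\boldsymbol{\delta}}$ from semidefinite to \emph{strictly} definite; once the margin $\mu>0$ is secured, the remainder is just continuity of eigenvalues. This argument also explains the numerics in Example \ref{exa:CV-stability}: the strict margin $\mu$ is exactly the slack that lets imbalanced networks tolerate $\boldsymbol{k}'<\boldsymbol{\delta}$, whereas in the balanced case $\mu=0$ leaves no room, recovering the instability asserted in part~1 of Theorem \ref{thm:SB-epsilon-gap-sufficient-necessary}.
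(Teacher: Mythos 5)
Your proof is correct and follows essentially the same route as the paper's: both hinge on the positive definiteness of $\mathcal{L}^{\boldsymbol{\delta}}$ for a structurally imbalanced connected network, which supplies a spectral margin $\lambda_{1}(\mathcal{L}^{\boldsymbol{\delta}})>0$ within which the compensation may be reduced below $\boldsymbol{\delta}$ --- the paper bounds the perturbation $\boldsymbol{k}'-\boldsymbol{\delta}$ in sup-norm by $\lambda_{1}$ and derives a contradiction via Rayleigh quotients, which is exactly your Weyl-inequality step in different clothing. The only differences are cosmetic: you prove the positive-definiteness fact from first principles (and flag the disconnected corner case, where the compensation reduction must be applied inside an imbalanced component), whereas the paper simply asserts it, relying on the known result for the Altafini-type signed Laplacian.
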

\begin{proof}
As $\mathcal{G}$ is structurally imbalanced, the corresponding signed
Laplacian matrix $\mathcal{L}^{s}$ is positive definite with eigenvalues
ordered as $0<\lambda_{1}(\mathcal{L}^{s})\le\lambda_{2}(\mathcal{L}^{s})\le\cdots\le\lambda_{n}(\mathcal{L}^{s})$.
Chooses $\boldsymbol{k}^{\prime}$ such that $\boldsymbol{k}^{\prime}\le\boldsymbol{\delta}$
and 
\begin{equation}
|k_{1}^{\prime}-\delta_{1}|=\max_{i\in\underline{n}}|k_{i}^{\prime}-\delta_{i}|<\lambda_{1}(\mathcal{L}^{s}).
\end{equation}
Let

\begin{eqnarray}
\boldsymbol{\varDelta} & = & \boldsymbol{k}^{\prime}-\boldsymbol{\delta}\nonumber \\
 & = & [k_{1}^{\prime}-\delta_{1},k_{2}^{\prime}-\delta_{2},\cdots,k_{n}^{\prime}-\delta_{n}]^{T},
\end{eqnarray}
then we shall show that $\mathcal{L}^{s}+\text{{\bf diag}}\{\boldsymbol{\varDelta}\}$
is positive stable, which implies that $\mathcal{L}^{\boldsymbol{k}^{\prime}}$
is positive stable. To see this, suppose that there exists a  vector
$\boldsymbol{v}\in\mathbb{R}^{n}$ satisfying $\boldsymbol{v}^{T}\boldsymbol{v}=1$
and $\boldsymbol{v}^{T}(\mathcal{L}^{s}+\text{{\bf diag}}\{\boldsymbol{\varDelta}\})\boldsymbol{v}\leq0$;
then we have $\boldsymbol{v}^{T}\mathcal{L}^{s}\boldsymbol{v}\leq-\boldsymbol{v}^{T}\text{{\bf diag}}\{\boldsymbol{\varDelta}\}\boldsymbol{v}$.
Note that $\boldsymbol{v}^{T}\mathcal{L}^{s}\boldsymbol{v}\geq\lambda_{1}(\mathcal{L}^{s})$
and $-\boldsymbol{v}^{T}\text{{\bf diag}}\{\boldsymbol{\varDelta}\}\boldsymbol{v}\leq|k_{1}^{\prime}-\delta_{1}|$,
hence $\lambda_{1}(\mathcal{L}^{s})\leq|k_{1}^{\prime}-\delta_{1}|$
establishing a contradiction. 
\end{proof}
\begin{thm}
\label{prop:imbalanced-undirected} Let $\mathcal{G}=(\mathcal{V},\mathcal{E},W)$
be a structurally imbalanced, connected, undirected signed network.
Let $\lambda_{1}(\mathcal{L}^{\boldsymbol{\delta}})=\cdots=\lambda_{p}(\mathcal{L}^{\boldsymbol{\delta}})$
denote $1\le p\le n$ smallest eigenvalues of the matrix $\mathcal{L}^{\boldsymbol{\delta}}$
with corresponding normalized eigenvectors $\boldsymbol{v}_{1},\cdots,\boldsymbol{v}_{p}$.
Then, there exists a compensation vector $\boldsymbol{k}=\boldsymbol{\delta}-\lambda_{1}(\mathcal{L}^{\boldsymbol{\delta}})\mathds{1}_{n}$
such that $\mathcal{L}^{\boldsymbol{k}}$ is positive semidefinite
and has $p$ zero eigenvalues with corresponding eigenvectors $\boldsymbol{v}_{1},\cdots,\boldsymbol{v}_{p}$.
Moreover, the compensated multi-agent network (\ref{eq:feedback-overall})
achieves cluster consensus characterized by 
\begin{equation}
\underset{t\rightarrow\infty}{\text{{\bf lim}}}\boldsymbol{x}(t)={\displaystyle \sum_{j=1}^{p}}\boldsymbol{v}_{j}\boldsymbol{v}_{j}^{T}\boldsymbol{x}(0).
\end{equation}
\end{thm}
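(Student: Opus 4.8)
The plan is to recognize that the prescribed compensation vector induces a pure \emph{spectral shift} of $\mathcal{L}^{\boldsymbol{\delta}}$, and then to read off both the semidefiniteness claim and the steady state directly from the spectral decomposition. First I would substitute $\boldsymbol{k}=\boldsymbol{\delta}-\lambda_{1}(\mathcal{L}^{\boldsymbol{\delta}})\mathds{1}_{n}$ into $\mathcal{L}^{\boldsymbol{k}}=\mathcal{L}+\text{{\bf diag}}\{\boldsymbol{k}\}$, noting that $\text{{\bf diag}}\{\boldsymbol{k}\}=\text{{\bf diag}}\{\boldsymbol{\delta}\}-\lambda_{1}(\mathcal{L}^{\boldsymbol{\delta}})I$, which yields the key identity $\mathcal{L}^{\boldsymbol{k}}=\mathcal{L}^{\boldsymbol{\delta}}-\lambda_{1}(\mathcal{L}^{\boldsymbol{\delta}})I$. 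Because the two matrices differ only by a scalar multiple of the identity, they share the same orthonormal eigenbasis, and every eigenvalue of $\mathcal{L}^{\boldsymbol{\delta}}$ is shifted downward by exactly $\lambda_{1}(\mathcal{L}^{\boldsymbol{\delta}})$.

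Next I would invoke the fact, already used in the proof of Theorem \ref{prop:SB-2}, that for a connected, structurally imbalanced, undirected $\mathcal{G}$ the matrix $\mathcal{L}^{\boldsymbol{\delta}}$ is symmetric positive definite, so its eigenvalues satisfy $0<\lambda_{1}(\mathcal{L}^{\boldsymbol{\delta}})=\cdots=\lambda_{p}(\mathcal{L}^{\boldsymbol{\delta}})<\lambda_{p+1}(\mathcal{L}^{\boldsymbol{\delta}})\le\cdots\le\lambda_{n}(\mathcal{L}^{\boldsymbol{\delta}})$ with orthonormal eigenvectors $\boldsymbol{v}_{1},\ldots,\boldsymbol{v}_{n}$. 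Applying the shift, the eigenvalues of $\mathcal{L}^{\boldsymbol{k}}$ become $\lambda_{i}(\mathcal{L}^{\boldsymbol{\delta}})-\lambda_{1}(\mathcal{L}^{\boldsymbol{\delta}})$, so the first $p$ of them vanish while the remaining $n-p$ stay strictly positive; hence $\mathcal{L}^{\boldsymbol{k}}\succeq0$ with a kernel of dimension exactly $p$ spanned by $\boldsymbol{v}_{1},\ldots,\boldsymbol{v}_{p}$, as claimed.

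Finally, for the cluster consensus statement I would solve $\dot{\boldsymbol{x}}(t)=-\mathcal{L}^{\boldsymbol{k}}\boldsymbol{x}(t)$ explicitly as $\boldsymbol{x}(t)=e^{-\mathcal{L}^{\boldsymbol{k}}t}\boldsymbol{x}(0)$ and expand the exponential in the orthonormal eigenbasis, $e^{-\mathcal{L}^{\boldsymbol{k}}t}=\sum_{i=1}^{n}e^{-(\lambda_{i}(\mathcal{L}^{\boldsymbol{\delta}})-\lambda_{1}(\mathcal{L}^{\boldsymbol{\delta}}))t}\boldsymbol{v}_{i}\boldsymbol{v}_{i}^{T}$. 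Letting $t\to\infty$, the $n-p$ modes with strictly positive eigenvalue decay to zero while the $p$ zero modes are preserved, giving $\lim_{t\to\infty}e^{-\mathcal{L}^{\boldsymbol{k}}t}=\sum_{j=1}^{p}\boldsymbol{v}_{j}\boldsymbol{v}_{j}^{T}$ and hence the advertised steady state $\lim_{t\to\infty}\boldsymbol{x}(t)=\sum_{j=1}^{p}\boldsymbol{v}_{j}\boldsymbol{v}_{j}^{T}\boldsymbol{x}(0)$.

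The argument is essentially mechanical once the shift identity is in hand; the only points requiring care are the positive definiteness of $\mathcal{L}^{\boldsymbol{\delta}}$ under structural imbalance, which rules out a zero eigenvalue and thereby guarantees that $p$ is precisely the multiplicity of the smallest eigenvalue, and the symmetry of the undirected Laplacian, which is what makes the eigenbasis orthonormal and the limiting operator the orthogonal projector $\sum_{j=1}^{p}\boldsymbol{v}_{j}\boldsymbol{v}_{j}^{T}$ rather than a general (oblique) spectral projection. I expect no substantive obstacle beyond correctly bookkeeping the eigenvalue multiplicities; worth flagging, however, is that the prescribed $\boldsymbol{k}$ may contain negative entries, so this construction uses signed self-loop weights rather than the nonnegative damping of \eqref{eq:compensation-protocol}.
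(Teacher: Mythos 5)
Your proof is correct and takes essentially the same route as the paper's: the shift identity $\mathcal{L}^{\boldsymbol{k}}=\mathcal{L}^{\boldsymbol{\delta}}-\lambda_{1}(\mathcal{L}^{\boldsymbol{\delta}})I$ combined with positive definiteness of $\mathcal{L}^{\boldsymbol{\delta}}$ under structural imbalance, then reading the result off the shifted spectrum. If anything, yours is the more careful version: the paper's own proof ends with $\boldsymbol{v}_{1}\boldsymbol{v}_{1}^{T}\boldsymbol{x}(0)$ without accounting for the $p$-fold multiplicity, whereas you carry the full projector $\sum_{j=1}^{p}\boldsymbol{v}_{j}\boldsymbol{v}_{j}^{T}$ through the matrix exponential limit, and your caveat that the prescribed $\boldsymbol{k}$ may have negative entries (so it is not a nonnegative damping as in the original protocol) is a genuine point the paper leaves unaddressed.
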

\begin{proof}
As $\mathcal{G}$ is structurally imbalanced, the corresponding signed
Laplacian matrix $\mathcal{L}^{\boldsymbol{\delta}}$ is positive
definite with eigenvalues ordered as 
\begin{equation}
0<\lambda_{1}(\mathcal{L}^{\boldsymbol{\delta}})\le\lambda_{2}(\mathcal{L}^{\boldsymbol{\delta}})\le\cdots\le\lambda_{n}(\mathcal{L}^{\boldsymbol{\delta}})
\end{equation}
and the corresponding eigenvectors $\boldsymbol{v}_{i}\in\mathbb{R}^{n}$,
where $i\in\underline{n}$. Due to,
\begin{equation}
\mathcal{L}+\text{{\bf diag}}\{\boldsymbol{k}\}=\mathcal{L}^{\boldsymbol{\delta}}-\text{{\bf diag}}\{\lambda_{1}(\mathcal{L}^{\boldsymbol{\delta}})\mathds{1}_{n}\},
\end{equation}
then for any $i\in\underline{n}$, one has,
\begin{equation}
\left(\mathcal{L}^{\boldsymbol{\delta}}-\text{{\bf diag}}\{\lambda_{1}(\mathcal{L}^{\boldsymbol{\delta}})\mathds{1}_{n}\}\right)\boldsymbol{v}_{i}=\left(\lambda_{i}(\mathcal{L}^{\boldsymbol{\delta}})-\lambda_{1}(\mathcal{L}^{\boldsymbol{\delta}})\right)\boldsymbol{v}_{i}.
\end{equation}
Therefore, the eigenvalues and its corresponding eigenvectors of the
matrix $\mathcal{L}+\text{{\bf diag}}\{\boldsymbol{k}\}$ are the
pairs $\left\{ \left(\lambda_{i}(\mathcal{L}^{\boldsymbol{\delta}})-\lambda_{1}(\mathcal{L}^{\boldsymbol{\delta}})\right),\,\boldsymbol{v}_{i}\right\} $.
Hence, $\mathcal{L}+\text{{\bf diag}}\{\boldsymbol{k}\}$ is positive
semidefinite and $\underset{t\rightarrow\infty}{\text{{\bf lim}}}\boldsymbol{x}(t)=\boldsymbol{v}_{1}\boldsymbol{v}_{1}^{T}\boldsymbol{x}(0)$.
\end{proof}
\begin{rem}
As indicated in Theorem \ref{prop:imbalanced-undirected}, the compensation
vector $\boldsymbol{k}$ for structurally imbalanced networks is not
necessary to be up to $\boldsymbol{\delta}$ to render the compensated
signed network (\ref{eq:feedback-overall}) stable, in which case,
$\mathcal{L}^{\boldsymbol{k}}$ is not necessary diagonal dominant.
In fact, different from the case of structurally balanced networks,
the signed Laplacian matrix for the structurally imbalanced networks
may have no negative eigenvalues. However, by employing the eigenvalue
of $\mathcal{L}^{\boldsymbol{\delta}}$ and vector $\boldsymbol{\delta}$,
Theorem \ref{prop:imbalanced-undirected} provides an elegant manner
to select the compensation vector $\boldsymbol{k}$ that can predict
the steady-state of the compensated signed network (\ref{eq:feedback-overall}).

One can see from Theorem \ref{prop:imbalanced-undirected} that the
multi-agent system (\ref{eq:feedback-overall}) can achieve the cluster
consensus when we choose $\boldsymbol{k}=\boldsymbol{\delta}-\lambda_{1}(\mathcal{L}^{\boldsymbol{\delta}})\mathds{1}_{n}$.
Similarly, a direct question here is related to the steady-state of
the multi-agent system (\ref{eq:feedback-overall}) in the case of
$\boldsymbol{k}>\boldsymbol{\delta}-\lambda_{1}(\mathcal{L}^{\boldsymbol{\delta}})\mathds{1}_{n}$
and $\boldsymbol{k}<\boldsymbol{\delta}-\lambda_{1}(\mathcal{L}^{\boldsymbol{\delta}})\mathds{1}_{n}$,
respectively.
\end{rem}
\begin{cor}
\label{cor:imbalanced-undirected-two}Let $\mathcal{G}=(\mathcal{V},\mathcal{E},W)$
be a structurally imbalanced, connected, undirected signed network.
Then,

Case 1: If the compensation vector $\boldsymbol{k}>\boldsymbol{\delta}-\lambda_{1}(\mathcal{L}^{\boldsymbol{\delta}})\mathds{1}_{n}$,
then $\mathcal{L}+\text{{\bf diag}}\{\boldsymbol{k}\}$ is positive
definite, i.e., the multi-agent system (\ref{eq:feedback-overall})
achieve the trivial consensus.

Case 2: If the compensation vector $\boldsymbol{k}<\boldsymbol{\delta}-\lambda_{1}(\mathcal{L}^{\boldsymbol{\delta}})\mathds{1}_{n}$,
then the multi-agent system (\ref{eq:feedback-overall}) is unstable.
\end{cor}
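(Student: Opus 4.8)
The plan is to reduce both cases to the perturbation identity $\mathcal{L}^{\boldsymbol{k}}=\mathcal{L}^{\boldsymbol{\delta}}+\text{{\bf diag}}\{\boldsymbol{k}-\boldsymbol{\delta}\}$ already used in the proof of Theorem \ref{prop:imbalanced-undirected}, and then to control the smallest eigenvalue of the symmetric matrix $\mathcal{L}^{\boldsymbol{k}}$ by comparing the componentwise gap $\boldsymbol{k}-\boldsymbol{\delta}$ against the threshold $-\lambda_{1}(\mathcal{L}^{\boldsymbol{\delta}})\mathds{1}_{n}$. Since $\mathcal{G}$ is connected and structurally imbalanced, $\mathcal{L}^{\boldsymbol{\delta}}$ is symmetric positive definite with $\lambda_{1}(\mathcal{L}^{\boldsymbol{\delta}})>0$, so every Rayleigh quotient $\boldsymbol{v}^{T}\mathcal{L}^{\boldsymbol{\delta}}\boldsymbol{v}$ on unit vectors is bounded below by $\lambda_{1}(\mathcal{L}^{\boldsymbol{\delta}})$; this lower bound is the only spectral fact I would need.

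For Case 1, I would write the hypothesis as $\boldsymbol{k}-\boldsymbol{\delta}>-\lambda_{1}(\mathcal{L}^{\boldsymbol{\delta}})\mathds{1}_{n}$, so that $m:=\min_{i\in\underline{n}}(k_{i}-\delta_{i})>-\lambda_{1}(\mathcal{L}^{\boldsymbol{\delta}})$ (the minimum over the finite index set keeps the inequality strict). For any unit vector $\boldsymbol{v}$ one has $\boldsymbol{v}^{T}\text{{\bf diag}}\{\boldsymbol{k}-\boldsymbol{\delta}\}\boldsymbol{v}\ge m$ and $\boldsymbol{v}^{T}\mathcal{L}^{\boldsymbol{\delta}}\boldsymbol{v}\ge\lambda_{1}(\mathcal{L}^{\boldsymbol{\delta}})$, hence $\boldsymbol{v}^{T}\mathcal{L}^{\boldsymbol{k}}\boldsymbol{v}\ge\lambda_{1}(\mathcal{L}^{\boldsymbol{\delta}})+m>0$. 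Thus $\mathcal{L}^{\boldsymbol{k}}=\mathcal{L}+\text{{\bf diag}}\{\boldsymbol{k}\}$ is positive definite, $-\mathcal{L}^{\boldsymbol{k}}$ is Hurwitz, and every trajectory of (\ref{eq:feedback-overall}) converges to the origin, i.e., trivial consensus.

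For Case 2, I would instead test $\mathcal{L}^{\boldsymbol{k}}$ against the normalized eigenvector $\boldsymbol{v}_{1}$ of $\mathcal{L}^{\boldsymbol{\delta}}$ associated with $\lambda_{1}(\mathcal{L}^{\boldsymbol{\delta}})$. The hypothesis $\boldsymbol{k}-\boldsymbol{\delta}<-\lambda_{1}(\mathcal{L}^{\boldsymbol{\delta}})\mathds{1}_{n}$ gives $\max_{i\in\underline{n}}(k_{i}-\delta_{i})<-\lambda_{1}(\mathcal{L}^{\boldsymbol{\delta}})$, so $\boldsymbol{v}_{1}^{T}\text{{\bf diag}}\{\boldsymbol{k}-\boldsymbol{\delta}\}\boldsymbol{v}_{1}\le\max_{i\in\underline{n}}(k_{i}-\delta_{i})<-\lambda_{1}(\mathcal{L}^{\boldsymbol{\delta}})$; combining this with $\boldsymbol{v}_{1}^{T}\mathcal{L}^{\boldsymbol{\delta}}\boldsymbol{v}_{1}=\lambda_{1}(\mathcal{L}^{\boldsymbol{\delta}})$ yields $\boldsymbol{v}_{1}^{T}\mathcal{L}^{\boldsymbol{k}}\boldsymbol{v}_{1}<0$. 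By the Rayleigh characterization, $\mathcal{L}^{\boldsymbol{k}}$ then has a negative eigenvalue, so $-\mathcal{L}^{\boldsymbol{k}}$ has a positive one and (\ref{eq:feedback-overall}) is unstable.

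I expect no serious obstacle here: the decomposition inherited from Theorem \ref{prop:imbalanced-undirected} already diagonalizes the unperturbed part, and the entire argument is a pair of Rayleigh-quotient estimates. The only point that deserves care is the bookkeeping of strict versus non-strict inequalities---namely that taking $\min$ (Case 1) or $\max$ (Case 2) over the finite set $\underline{n}$ preserves the strictness of the componentwise hypotheses, which is exactly what makes the positive definiteness in Case 1 and the strictly negative Rayleigh quotient in Case 2 come out strict.
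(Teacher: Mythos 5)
Your proposal is correct and follows essentially the same route as the paper: both proofs write $\mathcal{L}^{\boldsymbol{k}}$ as $\mathcal{L}^{\boldsymbol{\delta}}$ plus a diagonal perturbation (the paper splits off $\lambda_{1}(\mathcal{L}^{\boldsymbol{\delta}})I$ explicitly, you absorb it into the Rayleigh-quotient bound, which is the same thing), prove Case 1 by a pointwise positivity estimate of the quadratic form, and prove Case 2 by evaluating the quadratic form at the $\lambda_{1}$-eigenvector of $\mathcal{L}^{\boldsymbol{\delta}}$. If anything, your version is slightly more careful than the paper's about restricting to unit vectors and tracking strictness of the inequalities.
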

\begin{proof}
Case 1: Denote by 
\begin{equation}
\mathcal{L}+\text{{\bf diag}}\{\boldsymbol{k}\}=\mathcal{L}^{\boldsymbol{\delta}}-\text{{\bf diag}}\{\lambda_{1}(\mathcal{L}^{\boldsymbol{\delta}})\boldsymbol{1}_{n}\}+\varDelta
\end{equation}
where $\varDelta\in\mathbb{R}^{n\times n}$ is diagonal with $[\varDelta]_{ii}>0$
for all $i\in\underline{n}$. For any $\boldsymbol{\eta}\in\mathbb{R}^{n}$,
one has,
\begin{equation}
\boldsymbol{\eta}^{T}(\mathcal{L}^{\boldsymbol{\delta}}-\text{{\bf diag}}\{\lambda_{1}(\mathcal{L}^{\boldsymbol{\delta}})\mathds{1}_{n}\}+\varDelta)\boldsymbol{\eta}>0,
\end{equation}
therefore, $\mathcal{L}+\text{{\bf diag}}\{\boldsymbol{k}\}$ is positive
definite and the multi-agent system (\ref{eq:feedback-overall}) achieve
the trivial consensus.

Case 2: Denote by 
\begin{equation}
\mathcal{L}+\text{{\bf diag}}\{\boldsymbol{k}\}=\mathcal{L}^{\boldsymbol{\delta}}-\text{{\bf diag}}\{\lambda_{1}(\mathcal{L}^{\boldsymbol{\delta}})\mathds{1}_{n}\}+\varDelta,
\end{equation}
where $\varDelta\in\mathbb{R}^{n\times n}$ is diagonal and $[\varDelta]_{ii}<0$
for all $i\in\underline{n}$. Let $\boldsymbol{\eta}\in\mathbb{R}^{n}$
be such that $(\mathcal{L}^{\boldsymbol{\delta}}-\text{{\bf diag}}\{\lambda_{1}(\mathcal{L}^{\boldsymbol{\delta}})\boldsymbol{1}_{n}\})\boldsymbol{\eta}=\boldsymbol{0}$,
then one has,
\begin{equation}
\boldsymbol{\eta}^{T}(\mathcal{L}^{\boldsymbol{\delta}}-\text{{\bf diag}}\{\lambda_{1}(\mathcal{L}^{\boldsymbol{\delta}})\boldsymbol{1}_{n}\}+\varDelta)\boldsymbol{\eta}<0,
\end{equation}
therefore, there exists a negative eigenvalue for the matrix $\mathcal{L}+\text{{\bf diag}}\{\boldsymbol{k}\}$
and the multi-agent system (\ref{eq:feedback-overall}) is unstable.
\end{proof}

To sum up, there is no gap between $\boldsymbol{k}$ and $\boldsymbol{\delta}$
to render the multi-agent system (\ref{eq:feedback-overall}) achieving
the bipartite consensus if the underlying signed network is structurally
balanced according to Corollary \ref{cor:1-1}. Furthermore, Theorem
\ref{thm:SB-minma} indicates that the minimum total magnitude of
the self-loop compensation is $\boldsymbol{1}_{n}^{T}\boldsymbol{\delta}$
for structurally balanced signed networks, whereas, the magnitude
of compensation can be less for structurally imbalanced signed networks
(as it is shown in Example 1) and Theorem \ref{prop:imbalanced-undirected}
provides an approach to choose this compensation vector for predictable
behavior of the resultant compensated network.

\subsection{Directed Signed Networks}

We now proceed to examine directed signed networks. First recall the
following basic facts for directed networks. A signed network $\mathcal{G}=(\mathcal{V},\mathcal{E},W)$
is weight balanced if $\sum_{j=1}^{n}|w_{ij}|=\sum_{j=1}^{n}|w_{ji}|$
for all $i\in\mathcal{V}$. A matrix $M=[m_{ij}]\in\mathbb{R}^{n\times n}$
is irreducible if the indices $\left\{ 1,2,\cdots,n\right\} $ cannot
be decomposed into two disjoint non-empty subsets $\left\{ i_{1},i_{2},\cdots,i_{n_{1}}\right\} $
and $\left\{ j_{1},j_{2},\cdots,j_{n_{2}}\right\} $ where $n_{1}+n_{2}=n$
such that $m_{i_{\alpha}j_{\text{\ensuremath{\beta}}}}=0$ for all
$\alpha\in\left\{ 1,2,\cdots,n_{1}\right\} $ and $\beta\in\left\{ 1,2,\cdots,n_{2}\right\} $.
The graph of a matrix $M=[m_{ij}]\in\mathbb{R}^{n\times n}$, denoted
by $\mathcal{G}(M)$, is such that $(i,j)\in\mathcal{E}$ if and only
if $m_{ij}\not=0$ for all $i,j\in\mathcal{V}$. A matrix $M$ is
irreducible if and only if $\mathcal{G}(M)$ is strongly connected.

Similar to the undirected signed network, we shall first discuss the
condition that the compensation vector should satisfy to render the
multi-agent system (\ref{eq:feedback-overall}) stable.
\begin{thm}
\label{thm:SB-epsilon-gap-sufficient-necessary-directed}Let $\mathcal{G}=(\mathcal{V},\mathcal{E},W)$
be a structurally balanced, strongly connected, directed signed network.
Then, the following statements hold.

1) If $\boldsymbol{k}\leq\boldsymbol{\delta}$ and $\boldsymbol{k}\not=\boldsymbol{\delta}$,
then the compensated network (\ref{eq:feedback-overall}) is unstable;

2) If $\boldsymbol{k}\ge\boldsymbol{\delta}$ and $\boldsymbol{k}\not=\boldsymbol{\delta}$,
then the compensated network (\ref{eq:feedback-overall}) achieves
trivial consensus;

3) If $\boldsymbol{k}\veebar\boldsymbol{\delta}$, then the compensated
network (\ref{eq:feedback-overall}) cannot achieve bipartite consensus.
\end{thm}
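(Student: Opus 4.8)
The plan is to exploit the Gauge transformation $G=\text{{\bf diag}}\{\sigma_{1},\dots,\sigma_{n}\}$ guaranteed by structural balance, for which $GWG\ge0$. Writing $\mathcal{M}:=G\mathcal{L}^{\boldsymbol{\delta}}G$, a direct entrywise computation shows that $\mathcal{M}$ has diagonal entries $\sum_{j}|w_{ij}|$ and off-diagonal entries $-|w_{ij}|$; that is, $\mathcal{M}$ is the Laplacian of the unsigned, strongly connected digraph carrying weights $|w_{ij}|$ (the Gauge transform preserves the edge structure). Consequently $\mathcal{M}$ has a simple zero eigenvalue with right eigenvector $\mathds{1}_{n}$ and all remaining eigenvalues in the open right half-plane, so $-\mathcal{M}$ is an irreducible Metzler matrix whose spectral abscissa $\mu(\cdot)$ (the largest real part among the eigenvalues) satisfies $\mu(-\mathcal{M})=0$, attained at the simple eigenvalue $0$ with positive eigenvector $\mathds{1}_{n}$. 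Since $G^{2}=I$ and $G\,\text{{\bf diag}}\{\boldsymbol{a}\}\,G=\text{{\bf diag}}\{\boldsymbol{a}\}$ for every $\boldsymbol{a}$, conjugation by $G$ carries $-\mathcal{L}^{\boldsymbol{k}}$ to $-\mathcal{M}-\text{{\bf diag}}\{\boldsymbol{k}-\boldsymbol{\delta}\}$, which is similar to $-\mathcal{L}^{\boldsymbol{k}}$ and hence shares its spectrum. Because $\mathcal{L}^{\boldsymbol{k}}$ is now asymmetric, the quadratic-form argument of Theorem \ref{thm:SB-epsilon-gap-sufficient-necessary}(1) is unavailable, so I will instead invoke the strict Perron--Frobenius monotonicity of $\mu$ for irreducible Metzler matrices.

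For statement 1, set $\Delta:=\text{{\bf diag}}\{\boldsymbol{\delta}-\boldsymbol{k}\}\ge0$ with $\Delta\neq0$, so that $-\mathcal{L}^{\boldsymbol{k}}$ is similar to $\Delta-\mathcal{M}$, which dominates $-\mathcal{M}$ entrywise with a strict increase in at least one diagonal entry. Both are irreducible Metzler (irreducibility depends only on the unchanged off-diagonal pattern), so strict monotonicity gives $\mu(\Delta-\mathcal{M})>\mu(-\mathcal{M})=0$; thus $-\mathcal{L}^{\boldsymbol{k}}$ has an eigenvalue with positive real part and the system is unstable. Statement 2 is the mirror image: with $\Delta':=\text{{\bf diag}}\{\boldsymbol{k}-\boldsymbol{\delta}\}\ge0$, $\Delta'\neq0$, the matrix $-\mathcal{L}^{\boldsymbol{k}}$ is similar to $-\mathcal{M}-\Delta'$, which is dominated by $-\mathcal{M}$, so the same monotonicity yields $\mu(-\mathcal{M}-\Delta')<0$; hence $-\mathcal{L}^{\boldsymbol{k}}$ is Hurwitz and $\boldsymbol{x}(t)\to\boldsymbol{0}$, i.e. trivial consensus. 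Equivalently, one may observe that $\mathcal{L}^{\boldsymbol{k}}=\mathcal{L}^{\boldsymbol{\delta}}+\Delta'$ is an irreducible, strictly diagonally augmented (hence nonsingular $M$-) matrix, which is a directed analogue of the leader-following argument cited for the undirected case.

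For statement 3 I would argue by contradiction, transferring the template of Theorem \ref{thm:SB-epsilon-gap-sufficient-necessary}(3). If bipartite consensus were achieved there would exist a Gauge matrix $G^{*}$ with $\mathcal{L}^{\boldsymbol{k}}G^{*}\mathds{1}_{n}=\boldsymbol{0}_{n}$. If $G^{*}=G$, then using $\mathcal{L}^{\boldsymbol{\delta}}G\mathds{1}_{n}=\boldsymbol{0}_{n}$ gives $\mathcal{L}^{\boldsymbol{k}}G\mathds{1}_{n}=\text{{\bf diag}}\{\boldsymbol{k}-\boldsymbol{\delta}\}G\mathds{1}_{n}$, which vanishes only if $\boldsymbol{k}=\boldsymbol{\delta}$, contradicting $\boldsymbol{k}\veebar\boldsymbol{\delta}$. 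If $G^{*}\neq G$, expand $G^{*}\mathcal{L}^{\boldsymbol{k}}G^{*}\mathds{1}_{n}=G^{*}\mathcal{L}^{\boldsymbol{\delta}}G^{*}\mathds{1}_{n}+(\boldsymbol{k}-\boldsymbol{\delta})$; the $i$-th entry of the first term equals the row sum $\sum_{j\neq i}(|w_{ij}|-\sigma_{i}^{*}\sigma_{j}^{*}w_{ij})\ge0$, and at any coordinate with $k_{i}>\delta_{i}$ (which exists since $\boldsymbol{k}\veebar\boldsymbol{\delta}$) the full entry is strictly positive. Hence $G^{*}\mathcal{L}^{\boldsymbol{k}}G^{*}\mathds{1}_{n}\neq\boldsymbol{0}_{n}$, so $\mathcal{L}^{\boldsymbol{k}}G^{*}\mathds{1}_{n}\neq\boldsymbol{0}_{n}$, a contradiction. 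This part uses only componentwise row-sum nonnegativity rather than symmetry, so it carries over from the undirected proof essentially unchanged.

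The main obstacle is statement 1: in the directed setting the loss of a diagonal-dominance/positive-semidefiniteness certificate no longer guarantees a real negative eigenvalue, since the eigenvalues of the asymmetric $\mathcal{L}^{\boldsymbol{k}}$ can be complex. The crux is to show that shrinking the diagonal strictly below $\boldsymbol{\delta}$ moves the spectral abscissa strictly into the open right half-plane, which is precisely where strong connectivity (irreducibility) and the \emph{strict} Perron--Frobenius monotonicity of $\mu$ for Metzler matrices are indispensable, and where one must verify that the perturbed matrix remains irreducible Metzler so that the strict inequality, rather than a merely weak one, is available.
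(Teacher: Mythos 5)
Your proposal is correct, and for parts 1 and 3 it is essentially the paper's own route. For part 1 the paper also Gauge-conjugates $\mathcal{L}^{\boldsymbol{k}}$ to $\mathcal{L}^{\prime}-\Delta$ (your $\mathcal{M}+\text{{\bf diag}}\{\boldsymbol{k}-\boldsymbol{\delta}\}$) and then proves exactly your strict-monotonicity step by hand: it shifts to the nonnegative irreducible matrix $T=-\mathcal{L}^{\prime}+u_{0}I$, identifies $\rho(T)=u_{0}$ via Gershgorin discs and Perron--Frobenius, and invokes Wielandt's theorem to force $\rho(T+\Delta)>\rho(T)$, which is precisely the strict Perron--Frobenius monotonicity of the spectral abscissa of irreducible Metzler matrices that you cite as a packaged lemma; the two arguments are the same mathematics at different levels of encapsulation. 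Your part 3 likewise mirrors the paper's transfer of Case 3 of Theorem \ref{thm:SB-epsilon-gap-sufficient-necessary}, though your version is stated more carefully --- you pinpoint a coordinate with $k_{i}>\delta_{i}$ at which the entry of $G^{*}\mathcal{L}^{\boldsymbol{k}}G^{*}\mathds{1}_{n}$ is strictly positive, where the paper only asserts the sum is nonzero. The genuine difference is part 2: the paper omits that proof entirely, deferring to ``leader-following consensus via Gauge transformation,'' whereas you derive it from the same Metzler monotonicity applied in the opposite direction ($-\mathcal{M}-\Delta^{\prime}\le-\mathcal{M}$ with $\Delta^{\prime}\neq0$ and the dominating matrix irreducible, hence $\mu(-\mathcal{M}-\Delta^{\prime})<0$ and $-\mathcal{L}^{\boldsymbol{k}}$ is Hurwitz). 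This buys a unified, self-contained treatment of all three cases with a single spectral tool and fills a gap the paper leaves open; the paper's explicit shift-and-Wielandt computation buys the same conclusion for part 1 without appealing to the Metzler-matrix literature, but proves strictly less.
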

\begin{proof}
1) Without loss of generality, let 
\begin{equation}
l_{11}^{\boldsymbol{k}}=\sum_{j=1}^{n}|w_{1j}|-\epsilon,
\end{equation}
where $\epsilon>0$, and $l_{ii}^{\boldsymbol{k}}=\sum_{j=1}^{n}|w_{ij}|$,
where $i\in\underline{n}$, and $i\neq1$. Then $\mathcal{L}^{\boldsymbol{k}}=\mathcal{L}^{\delta}-\Delta$,
where $\Delta=\text{{\bf diag}}\left\{ \epsilon,0,\ldots,0\right\} .$ 

Note that $\mathcal{G}$ is structurally balanced, thus there exists
a Gauge transformation $G$ satisfying $\mathcal{L}^{\delta}=G\mathcal{L}^{\prime}G$,
where $\mathcal{L}^{\prime}=[l_{ij}^{\prime}]\in\mathbb{R}^{n\times n}$
such that $l_{ij}^{\prime}=-|l_{ij}^{\delta}|$ for all $i\not=j\in\underline{n}$
and $l_{ii}^{\prime}=l_{ii}^{\delta}$ for all $i\in\underline{n}$.
Hence 
\begin{align}
\mathcal{L}^{\boldsymbol{k}} & =\mathcal{L}^{\delta}-\Delta\nonumber \\
 & =G\mathcal{L}^{\prime}G-\Delta\nonumber \\
 & =G(\mathcal{L}^{\prime}-\Delta)G,
\end{align}
which implies that $\mathcal{L}^{\boldsymbol{k}}$ is similar to $\mathcal{L}^{\prime}-\Delta$
and as such share the same eigenvalues. Since $\mathcal{G}$ is strongly
connected, $\mathcal{L}^{\prime}$ is irreducible. Also note that
$l_{ii}^{\prime}>0$ for all $i\in\underline{n}$. Let $u_{0}=\underset{i\in\underline{n}}{\max}\left\{ l_{ii}^{\prime}\right\} $
and denote $T=-\mathcal{L}^{\prime}+u_{0}I$. Then $T$ is an irreducible
non-negative matrix. 

Let $\boldsymbol{v}$ be the eigenvector of $\mathcal{L}^{\prime}$
corresponding to the zero eigenvalue, namely, $\mathcal{L}^{\prime}\boldsymbol{v}=0$.
Then ${\color{black}T\boldsymbol{v}=-\mathcal{L}^{\prime}\boldsymbol{v}+u_{0}I\boldsymbol{v}=u_{0}\boldsymbol{v}}$
and $\boldsymbol{v}$ is the eigenvector of $T$ corresponding to
the eigenvalue $u_{0}$. Since $[T]_{ii}=u_{0}-l_{ii}^{\prime}$ for
all $i\in\underline{n}$ and $[T]_{ij}=-l_{ij}^{\prime}$ for all
$i\not=j\in\underline{n}$, every eigenvalue of $T$ lies within at
least one of the following Gershgorin discs 
\begin{equation}
\left\{ z\in\mathbb{C}\,:\thinspace\mid z-u_{0}+l_{ii}^{\prime}\mid\leq\sum_{j\neq i}\mid l_{ij}^{\prime}\mid\right\} 
\end{equation}
where $i\in\underline{n}$. Hence, if $\lambda\in\boldsymbol{\lambda}(T)$,
then $|\lambda|\leq u_{0}$. Thus, $\rho(T)=u_{0}$. Moreover, according
to Perron--Frobenius theorem (\cite[Theorem 8.4.4,  p. 534]{horn1990matrix}),
$u_{0}$ is a simple eigenvalue of $T$. We now have 
\begin{equation}
T+\Delta=-(\mathcal{L}^{\prime}-\Delta)+u_{0}I\geq T,
\end{equation}
which implies that $\rho(T+\Delta)\geq\rho(T)$. 

If $\rho(T+\Delta)=\rho(T)$, then for every $j\in\underline{n}$,
$[T+\Delta]_{jj}$ and $[T]_{jj}$ must have the same modulus according
to Wielandt's theorem (\cite[Theorem 8.4.5,  p. 534]{horn1990matrix}).
However $[T+\Delta]_{11}>[T]_{11},$ thus $\rho(T+\Delta)>\rho(T)=u_{0}.$
Let 
\begin{equation}
\rho(T+\Delta)=\rho(T)+\tau=u_{0}+\tau.
\end{equation}
Since $T+\Delta$ is an irreducible non-negative matrix, $u_{0}+\tau$
is an eigenvalue of $T+\Delta$; and denote its corresponding eigenvector
as $\boldsymbol{w}$. Then 
\begin{equation}
(u_{0}I-(\mathcal{L}^{\prime}-\Delta))\boldsymbol{w}=(u_{0}+\tau)\boldsymbol{w},
\end{equation}
implying that $(\mathcal{L}^{\prime}-\Delta)\boldsymbol{w}=-\tau\boldsymbol{w}.$
As such $-\tau$ is an eigenvalue of both $\mathcal{L}^{\prime}-\Delta$
and $\mathcal{L}^{\boldsymbol{k}}$, which is a contradiction. 

2) In spirit, the proof is similar to the case of structurally balanced,
connected, undirected signed networks, we shall omit it for space.

3) The proof is similar to the Case 3 in the proof of Theorem \ref{thm:SB-epsilon-gap-sufficient-necessary}.
\end{proof}
In parallel, we also have the following corollary.
\begin{cor}
\label{cor:SB-epsilon-gap-sufficient-necessary-directed}Let $\mathcal{G}=(\mathcal{V},\mathcal{E},W)$
be a structurally balanced, strongly connected, directed signed network.
Then, the multi-agent system (\ref{eq:feedback-overall}) achieve
the bipartite consensus for any initial state $\boldsymbol{x}(0)\in\mathbb{R}^{n}$
if and only if $\boldsymbol{k}=\boldsymbol{\delta}$. Moreover, $\underset{t\rightarrow\infty}{\text{{\bf lim}}}\boldsymbol{x}(t)=G\mathds{1}_{n}\boldsymbol{p}^{T}G\boldsymbol{x}(0)$,
where $G$ is the Gauge transformation associated with $\mathcal{G}$
and $\boldsymbol{p}^{T}GL^{\boldsymbol{\delta}}G=\boldsymbol{0}$
and $\boldsymbol{p}^{T}\mathds{1}_{n}=1$.
\end{cor}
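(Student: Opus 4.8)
The plan is to prove the two implications separately, placing the bulk of the effort in the sufficiency direction (which also yields the explicit steady state) and deducing necessity immediately from the trichotomy already established in Theorem~\ref{thm:SB-epsilon-gap-sufficient-necessary-directed}. For sufficiency, first I would exploit structural balance to produce a Gauge transformation $G=\text{{\bf diag}}\{\sigma_1,\ldots,\sigma_n\}$ with $\sigma_i\in\{1,-1\}$ satisfying $\mathcal{L}^{\boldsymbol{\delta}}=G\mathcal{L}^{\prime}G$, where $\mathcal{L}^{\prime}$ is the Laplacian of the nonnegatively weighted digraph obtained after the gauge change. Because $G$ is a signature matrix, $\mathcal{G}(\mathcal{L}^{\prime})$ carries exactly the adjacency pattern of $\mathcal{G}$, so $\mathcal{L}^{\prime}$ is the Laplacian of a \emph{strongly connected} digraph; equivalently, $\mathcal{L}^{\prime}$ is an irreducible singular M-matrix with zero row sums. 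Standard directed-consensus theory then supplies a simple zero eigenvalue of $\mathcal{L}^{\prime}$ with right eigenvector $\mathds{1}_n$ and positive left eigenvector $\boldsymbol{p}$ normalized by $\boldsymbol{p}^T\mathds{1}_n=1$, all remaining eigenvalues having strictly positive real part.

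Next I would change variables via $\boldsymbol{y}(t)=G\boldsymbol{x}(t)$. Using $G^2=I$, the compensated dynamics (\ref{eq:feedback-overall}) with $\boldsymbol{k}=\boldsymbol{\delta}$ becomes $\dot{\boldsymbol{y}}(t)=-\mathcal{L}^{\prime}\boldsymbol{y}(t)$, i.e., ordinary consensus on the strongly connected digraph $\mathcal{G}(\mathcal{L}^{\prime})$. Hence $e^{-\mathcal{L}^{\prime}t}\to\mathds{1}_n\boldsymbol{p}^T$ and $\lim_{t\rightarrow\infty}\boldsymbol{y}(t)=\mathds{1}_n\boldsymbol{p}^T G\boldsymbol{x}(0)$. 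Transforming back through $\boldsymbol{x}=G\boldsymbol{y}$ yields the claimed limit $\lim_{t\rightarrow\infty}\boldsymbol{x}(t)=G\mathds{1}_n\boldsymbol{p}^T G\boldsymbol{x}(0)$. Since this limiting vector is a scalar multiple of $G\mathds{1}_n$, each coordinate obeys $x_i(t)\to\sigma_i\alpha$ with $\alpha=\boldsymbol{p}^T G\boldsymbol{x}(0)$, so $\lim_{t\rightarrow\infty}|x_i(t)|=|\alpha|$ for every $i$, which is precisely bipartite consensus. I would also record that $\boldsymbol{p}^T G\mathcal{L}^{\boldsymbol{\delta}}G=\boldsymbol{p}^T\mathcal{L}^{\prime}=\boldsymbol{0}$, matching the normalization stated in the corollary.

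For necessity I would argue by contraposition using the trichotomy already proved. Any compensation vector $\boldsymbol{k}\geq\boldsymbol{0}$ with $\boldsymbol{k}\neq\boldsymbol{\delta}$ lies in exactly one of the three mutually exclusive, exhaustive cases $\boldsymbol{k}\leq\boldsymbol{\delta}$, $\boldsymbol{k}\geq\boldsymbol{\delta}$, or $\boldsymbol{k}\veebar\boldsymbol{\delta}$ (componentwise below, above, or incomparable). By parts~1)--3) of Theorem~\ref{thm:SB-epsilon-gap-sufficient-necessary-directed} the compensated network is, respectively, unstable, convergent to the trivial consensus at the origin, or unable to achieve bipartite consensus; in none of these can bipartite consensus hold, so bipartite consensus forces $\boldsymbol{k}=\boldsymbol{\delta}$.

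The hard part will be the convergence step in the directed setting: unlike the symmetric case of Corollary~\ref{cor:1-1}, the averaging weight is the generally nonuniform left Perron eigenvector $\boldsymbol{p}$ rather than $\tfrac{1}{n}\mathds{1}_n$. Guaranteeing that $\boldsymbol{p}$ is well defined, strictly positive, and unique rests on strong connectivity of $\mathcal{G}(\mathcal{L}^{\prime})$, which I secure from the invariance of the adjacency pattern under the signature similarity $G$; the simplicity of the zero eigenvalue and the location of the rest of the spectrum then follow from the irreducible M-matrix structure via Perron--Frobenius, exactly as already exploited in the proof of Theorem~\ref{thm:SB-epsilon-gap-sufficient-necessary-directed}.
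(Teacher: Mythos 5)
Your proof is correct and takes essentially the route the paper intends: the paper states this corollary without a separate proof, treating it as an immediate consequence of Theorem~\ref{thm:SB-epsilon-gap-sufficient-necessary-directed}, whose own proof already contains the Gauge similarity $\mathcal{L}^{\boldsymbol{\delta}}=G\mathcal{L}^{\prime}G$ and the Perron--Frobenius/irreducibility facts you invoke. Your necessity step (exhaustiveness of the cases $\boldsymbol{k}\leq\boldsymbol{\delta}$, $\boldsymbol{k}\geq\boldsymbol{\delta}$, $\boldsymbol{k}\veebar\boldsymbol{\delta}$ combined with parts 1)--3) of that theorem) and your sufficiency step (change of variables $\boldsymbol{y}=G\boldsymbol{x}$ reducing to ordinary consensus on a strongly connected digraph, so $e^{-\mathcal{L}^{\prime}t}\to\mathds{1}_{n}\boldsymbol{p}^{T}$) simply make explicit what the paper leaves implicit.
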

\begin{rem}
The proof of Theorem \ref{cor:SB-epsilon-gap-sufficient-necessary-directed}
implies the steady-state of the multi-agent system (\ref{eq:feedback-overall})
from the respect of the choice of the compensation $\boldsymbol{k}$,
i.e., $\boldsymbol{k}\leq\boldsymbol{\delta}$, $\boldsymbol{k}=\boldsymbol{\delta}$
and $\boldsymbol{k}\geq\boldsymbol{\delta}$. Similar to the undirected
case, it is shown that if $\boldsymbol{k}\leq\boldsymbol{\delta}$,
the multi-agent system (\ref{eq:feedback-overall}) is unstable; if
$\boldsymbol{k}\geq\boldsymbol{\delta}$, the multi-agent system (\ref{eq:feedback-overall})
achieve the trivial consensus; if $\boldsymbol{k}=\boldsymbol{\delta}$,
the multi-agent system (\ref{eq:feedback-overall}) achieve the bipartite
consensus. 
\end{rem}
\begin{thm}
Let $\mathcal{G}=(\mathcal{V},\mathcal{E},W)$ be a strongly connected,
structurally imbalanced, weight balanced, directed signed network.
Then, there exists a compensation vector $\boldsymbol{k}^{\prime}\in\mathbb{R}^{n}$
and $\boldsymbol{k}^{\prime}\le\boldsymbol{\delta}$ such that $-\mathcal{L}^{\boldsymbol{k}^{\prime}}$
is stable.
\end{thm}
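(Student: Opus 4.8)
The plan is to reduce the stability of $-\mathcal{L}^{\boldsymbol{k}'}$ to the positive stability of $\mathcal{L}^{\boldsymbol{k}'}$ (all eigenvalues with strictly positive real part), and to control the real parts of the eigenvalues of the \emph{non-symmetric} matrix $\mathcal{L}^{\boldsymbol{k}'}$ through its symmetric part. For any real matrix $A$ and any unit eigenvector $\boldsymbol{v}\in\mathbb{C}^{n}$ with $A\boldsymbol{v}=\lambda\boldsymbol{v}$, one has $\mathrm{Re}(\lambda)=\boldsymbol{v}^{*}\mathrm{sym}(A)\boldsymbol{v}\ge\lambda_{\min}(\mathrm{sym}(A))$, where $\mathrm{sym}(A)=\tfrac{1}{2}(A+A^{T})$. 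Hence it suffices to exhibit a $\boldsymbol{k}'\le\boldsymbol{\delta}$ for which $\mathrm{sym}(\mathcal{L}^{\boldsymbol{k}'})\succ0$. Since the self-loop compensation only perturbs the diagonal, $\mathrm{sym}(\mathcal{L}^{\boldsymbol{k}'})=\mathrm{sym}(\mathcal{L}^{\boldsymbol{\delta}})-\text{{\bf diag}}\{\boldsymbol{\delta}-\boldsymbol{k}'\}$, so the crux is to establish $\mathrm{sym}(\mathcal{L}^{\boldsymbol{\delta}})\succ0$ and then take $\boldsymbol{\delta}-\boldsymbol{k}'$ small in norm.

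The hard part is proving $\mathrm{sym}(\mathcal{L}^{\boldsymbol{\delta}})\succ0$; this is where both weight balance and structural imbalance enter. Recalling $[\mathcal{L}^{\boldsymbol{\delta}}]_{ii}=\sum_{j}|w_{ij}|$ and $[\mathcal{L}^{\boldsymbol{\delta}}]_{ij}=-w_{ij}$, I would first derive, for every real $\boldsymbol{v}$, the edge-wise identity
\[
\boldsymbol{v}^{T}\mathcal{L}^{\boldsymbol{\delta}}\boldsymbol{v}=\frac{1}{2}\sum_{i,j}|w_{ij}|\left(v_{i}-\text{{\bf sgn}}(w_{ij})v_{j}\right)^{2},
\]
whose proof uses precisely the weight-balance identity $\sum_{i}|w_{ij}|=\sum_{i}|w_{ji}|$ to replace $\tfrac{1}{2}\sum_{i,j}|w_{ij}|v_{j}^{2}$ by $\tfrac{1}{2}\sum_{i,j}|w_{ij}|v_{i}^{2}$. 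Because $\boldsymbol{v}^{T}\mathcal{L}^{\boldsymbol{\delta}}\boldsymbol{v}=\boldsymbol{v}^{T}\mathrm{sym}(\mathcal{L}^{\boldsymbol{\delta}})\boldsymbol{v}$, this identity already yields $\mathrm{sym}(\mathcal{L}^{\boldsymbol{\delta}})\succeq0$.

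To upgrade semidefiniteness to definiteness I would characterize the null space. The quadratic form vanishes exactly when $v_{i}=\text{{\bf sgn}}(w_{ij})v_{j}$ for every edge $(i,j)$. Since $\mathcal{G}$ is strongly connected, its underlying undirected graph is connected, so these constraints force $|v_{i}|$ to be constant over all nodes; writing $v_{i}=\sigma_{i}|v_{r}|$ with $\sigma_{i}\in\{+1,-1\}$ for a root $r$, consistency along every edge gives $\text{{\bf sgn}}(w_{ij})=\sigma_{i}\sigma_{j}$, i.e.\ $GWG\ge0$ for $G=\text{{\bf diag}}\{\sigma_{1},\ldots,\sigma_{n}\}$. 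This exhibits a structural-balance certificate, contradicting the hypothesis that $\mathcal{G}$ is structurally imbalanced. Hence $\boldsymbol{v}=\boldsymbol{0}$ is the only solution and $\mu:=\lambda_{\min}(\mathrm{sym}(\mathcal{L}^{\boldsymbol{\delta}}))>0$.

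Finally I would close the argument by perturbation. Choose $\boldsymbol{k}'$ with $k_{i}'=\delta_{i}$ whenever $\delta_{i}=0$ and $0\le k_{i}'<\delta_{i}$ otherwise, arranged so that $\max_{i}(\delta_{i}-k_{i}')<\mu$; this gives $\boldsymbol{0}\le\boldsymbol{k}'\le\boldsymbol{\delta}$ with $\boldsymbol{k}'\neq\boldsymbol{\delta}$. Then $\text{{\bf diag}}\{\boldsymbol{\delta}-\boldsymbol{k}'\}\preceq\max_{i}(\delta_{i}-k_{i}')\,I\prec\mu I$, so that for every unit $\boldsymbol{v}$,
\[
\boldsymbol{v}^{*}\mathrm{sym}(\mathcal{L}^{\boldsymbol{k}'})\boldsymbol{v}\ge\mu-\max_{i}(\delta_{i}-k_{i}')>0,
\]
whence $\mathrm{sym}(\mathcal{L}^{\boldsymbol{k}'})\succ0$. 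By the eigenvalue bound of the first paragraph, every eigenvalue of $\mathcal{L}^{\boldsymbol{k}'}$ then has strictly positive real part, so $-\mathcal{L}^{\boldsymbol{k}'}$ is stable, which is the desired conclusion. The one step demanding care is the weight-balance computation giving the edge-wise identity, since it is what makes the diagonal of $\mathrm{sym}(\mathcal{L}^{\boldsymbol{\delta}})$ reduce cleanly to $\sum_{j}|w_{ij}|$ and thereby drives the whole argument.
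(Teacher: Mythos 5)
Your proof is correct, and its skeleton coincides with the paper's: both pass to the symmetric part, establish positive definiteness of a diagonally perturbed $\mathrm{sym}(\mathcal{L}^{\boldsymbol{\delta}})$, and transfer this to the eigenvalues of the nonsymmetric $\mathcal{L}^{\boldsymbol{k}'}$ via the Bendixson-type bound $\mathrm{Re}\,\lambda_i(\mathcal{L}^{\boldsymbol{k}'})\ge\lambda_{\min}(\mathrm{sym}(\mathcal{L}^{\boldsymbol{k}'}))$. Where you genuinely diverge is in how the key positive-definiteness fact is obtained. The paper treats $\mathcal{L}'=\frac{1}{2}(\mathcal{L}+\mathcal{L}^T)$ as the signed Laplacian of an undirected network with $\boldsymbol{\delta}(\mathcal{L}')=\boldsymbol{\delta}(\mathcal{L})$ and invokes its undirected result (Theorem \ref{prop:SB-2}); this reduction silently assumes that the symmetrization is a bona fide signed Laplacian (zero row sums) of a connected, structurally imbalanced undirected graph with the same $\boldsymbol{\delta}$ --- assumptions that are not automatic under the paper's absolute-value definition of weight balance (for instance, if $w_{ij}>0>w_{ji}$ the symmetrized weight can vanish, $\mathcal{L}'$ can fail to have zero row sums, and $\boldsymbol{\delta}(\mathcal{L}')$ can differ from $\boldsymbol{\delta}(\mathcal{L})$). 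You instead prove positive definiteness of $\mathrm{sym}(\mathcal{L}^{\boldsymbol{\delta}})$ from scratch: the identity $\boldsymbol{v}^T\mathcal{L}^{\boldsymbol{\delta}}\boldsymbol{v}=\frac{1}{2}\sum_{i,j}|w_{ij}|\left(v_i-\mathrm{sgn}(w_{ij})v_j\right)^2$, which uses exactly the hypothesis $\sum_i|w_{ij}|=\sum_i|w_{ji}|$, yields semidefiniteness, and your null-space argument (constant $|v_i|$ by strong connectivity, then a gauge certificate $GWG\ge0$ contradicting structural imbalance) upgrades it to definiteness; sign-inconsistent digons cause no trouble in your argument, since they simply force the null vector to vanish. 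So your route is longer but self-contained and strictly more careful: it establishes the statement exactly under the paper's definitions, whereas the paper outsources the crux to an undirected theorem whose applicability to the symmetrized matrix is left unverified.
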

\begin{proof}
Note that as $\mathcal{G}$ is weight balanced, $\boldsymbol{\delta}(\mathcal{L}^{\prime})=\boldsymbol{\delta}(\mathcal{L})$.
For $\mathcal{L}^{\prime}=\frac{1}{2}(\mathcal{L}+\mathcal{L}^{T})$,
according to Theorem \ref{prop:SB-2}, there exists a compensation
vector $\boldsymbol{k}^{\prime}\le\boldsymbol{\delta}(\mathcal{L}^{\prime})$
such that $\mathcal{L}^{\prime}+\text{{\bf diag}}\{\boldsymbol{k}^{\prime}\}$
is positive stable. Let $\lambda_{n}$ and $\lambda_{1}$ be the largest
and smallest eigenvalues of $\mathcal{L}^{\prime}+\text{{\bf diag}}\{\boldsymbol{k}^{\prime}\}$,
respectively. Note that
\begin{align}
\mathcal{L}^{\prime}+\text{{\bf diag}}\{\boldsymbol{k}^{\prime}\} & =\frac{1}{2}((\mathcal{L}+\text{{\bf diag}}\{\boldsymbol{k}^{\prime}\})\nonumber \\
 & +(\mathcal{L}+\text{{\bf diag}}\{\boldsymbol{k}^{\prime}\})^{T}).
\end{align}
Then, $\text{{\bf Re}}(\lambda_{i}(\mathcal{L}+\text{{\bf diag}}\{\boldsymbol{k}^{\prime}\}))\in[\lambda_{1},\lambda_{n}]$
where $i\in\underline{n}$. Therefore, there exists a compensation
vector $\boldsymbol{k}^{\prime}\in\mathbb{R}^{n}$ with $\boldsymbol{k}^{\prime}\le\boldsymbol{\delta}$,
such that $-\mathcal{L}^{\boldsymbol{k}^{\prime}}$ is stable.
\end{proof}
For the structurally imbalanced strongly connected signed network,
similar to the undirected network case, it is shown that the multi-agent
system (\ref{eq:feedback-overall}) achieves trivial consensus if
$\boldsymbol{k}=\boldsymbol{\delta}$ \cite{altafini2013consensus}.
Similarly, the compensation vector can be less than $\boldsymbol{\delta}$
in order to make the system (\ref{eq:feedback-overall}) stable, as
show in the following example. 
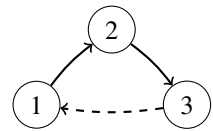
\begin{figure}[tbh]
\begin{centering}
\begin{tikzpicture}[scale=1]
	
	\node (n1) at (0,0) [circle,draw] {1};
    \node (n2) at (1,1) [circle,draw] {2};
	\node (n3) at (2,0) [circle,draw] {3};

	\path[]

	(n2) [->,thick,  bend left=8] edge node[right] {} (n3);

	\path[]
	(n1) [->,thick,  bend left=8] edge node[left] {} (n2);

	\path[]
	(n3) [->,thick, dashed,  bend left=8] edge node[left] {} (n1);

\end{tikzpicture}
\par\end{centering}
\caption{A strongly connected, directed structurally unbalanced signed network.}
\label{fig:example2}
\end{figure}

\begin{example}
Consider the signed Laplacian of a strongly connected, directed structurally
unbalanced signed network in Figure \ref{fig:example2},
\[
L=\left[\begin{array}{ccc}
-1 & 0 & 1\\
-1 & 1 & 0\\
0 & -1 & 1
\end{array}\right].
\]
Choose $\boldsymbol{k}_{0}=[1.9,0,0]^{T}<[2,0,0]^{T}=\boldsymbol{\delta}$
and compute the eigenvalues of $\mathcal{L}^{\boldsymbol{k}_{0}}$
yields $\lambda_{1}(\mathcal{L}^{\boldsymbol{k}_{0}})=0.3210+0.8392i$,
$\lambda_{2}(\mathcal{L}^{\boldsymbol{k}_{0}})=0.3210-0.8392i$, $\lambda_{3}(\mathcal{L}^{\boldsymbol{k}_{0}})=1.8581$,
namely, $-\mathcal{L}^{\boldsymbol{k}_{0}}$ is stable and $\boldsymbol{k}_{0}<\boldsymbol{\delta}$.
\end{example}
\begin{thm}
\label{thm:thm5}Let $\mathcal{G}=(\mathcal{V},\mathcal{E},W)$ be
a structurally imbalanced, strongly connected signed network. If the
smallest eigenvalue of $\mathcal{L}^{\boldsymbol{\delta}}$ (ordered
by real parts) is real and simple, denoted by $\lambda_{1}(\mathcal{L}^{\boldsymbol{\delta}})$
with left and right eigenvectors $\boldsymbol{v}_{1}^{l}$ and $\boldsymbol{v}_{1}^{r}$
satisfying $(\boldsymbol{v}_{1}^{l})^{T}\boldsymbol{v}_{1}^{r}=1$.
Then, there exists a compensation vector $\boldsymbol{k}=\boldsymbol{\delta}-\lambda_{1}(\mathcal{L}^{\boldsymbol{\delta}})\mathds{1}_{n}$
such that eigenvalues of $\mathcal{L}^{\boldsymbol{k}}$ have non-negative
real parts and $\mathcal{L}^{\boldsymbol{k}}$ has simple zero eigenvalue
with left and right eigenvectors $\boldsymbol{v}_{1}^{l}$ and $\boldsymbol{v}_{1}^{r}$.
Moreover, the compensated multi-agent network (\ref{eq:feedback-overall})
achieve cluster consensus characterized by 
\[
\underset{t\rightarrow\infty}{\text{{\bf lim}}}\boldsymbol{x}(t)=\boldsymbol{v}_{1}^{r}(\boldsymbol{v}_{1}^{l})^{T}\boldsymbol{x}(0).
\]
\end{thm}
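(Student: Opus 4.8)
The plan is to reduce this directed, structurally imbalanced case to a single spectral shift of $\mathcal{L}^{\boldsymbol{\delta}}$, exactly paralleling the mechanism of Theorem~\ref{prop:imbalanced-undirected}. First I would insert $\boldsymbol{k}=\boldsymbol{\delta}-\lambda_{1}(\mathcal{L}^{\boldsymbol{\delta}})\mathds{1}_{n}$ into $\mathcal{L}^{\boldsymbol{k}}=\mathcal{L}+\text{{\bf diag}}\{\boldsymbol{k}\}$ and use $\text{{\bf diag}}\{\lambda_{1}(\mathcal{L}^{\boldsymbol{\delta}})\mathds{1}_{n}\}=\lambda_{1}(\mathcal{L}^{\boldsymbol{\delta}})I$ to obtain the key identity
\[
\mathcal{L}^{\boldsymbol{k}}=\mathcal{L}^{\boldsymbol{\delta}}-\lambda_{1}(\mathcal{L}^{\boldsymbol{\delta}})I.
\]
A scalar shift leaves every eigenvector (left and right) unchanged and merely translates the spectrum, so the eigenvalues of $\mathcal{L}^{\boldsymbol{k}}$ are exactly $\lambda_{i}(\mathcal{L}^{\boldsymbol{\delta}})-\lambda_{1}(\mathcal{L}^{\boldsymbol{\delta}})$ with the same eigenvectors as $\mathcal{L}^{\boldsymbol{\delta}}$.

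From here the spectral conclusions read off immediately. Since $\lambda_{1}(\mathcal{L}^{\boldsymbol{\delta}})$ is real and has the smallest real part, each shifted eigenvalue satisfies $\text{{\bf Re}}(\lambda_{i}(\mathcal{L}^{\boldsymbol{\delta}}))-\lambda_{1}(\mathcal{L}^{\boldsymbol{\delta}})\ge0$, establishing non-negative real parts. The eigenvalue $\lambda_{1}(\mathcal{L}^{\boldsymbol{\delta}})$ maps to $0$; being simple by hypothesis, it yields a simple zero eigenvalue of $\mathcal{L}^{\boldsymbol{k}}$, whose left and right eigenvectors are the unchanged $\boldsymbol{v}_{1}^{l}$ and $\boldsymbol{v}_{1}^{r}$, confirmed directly by $\mathcal{L}^{\boldsymbol{k}}\boldsymbol{v}_{1}^{r}=(\mathcal{L}^{\boldsymbol{\delta}}-\lambda_{1}(\mathcal{L}^{\boldsymbol{\delta}})I)\boldsymbol{v}_{1}^{r}=\boldsymbol{0}$ and its left-sided analogue.

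For the steady state of $\dot{\boldsymbol{x}}(t)=-\mathcal{L}^{\boldsymbol{k}}\boldsymbol{x}(t)$ I would introduce the spectral projector onto the zero eigenspace, $P_{0}=\boldsymbol{v}_{1}^{r}(\boldsymbol{v}_{1}^{l})^{T}$, which is idempotent precisely because of the normalization $(\boldsymbol{v}_{1}^{l})^{T}\boldsymbol{v}_{1}^{r}=1$. Decomposing $e^{-\mathcal{L}^{\boldsymbol{k}}t}$ along the eigenprojections, the zero mode contributes $P_{0}$ for all $t$, while every other mode carries a factor $e^{-(\text{{\bf Re}}(\lambda_{i}(\mathcal{L}^{\boldsymbol{\delta}}))-\lambda_{1}(\mathcal{L}^{\boldsymbol{\delta}}))t}$; letting $t\rightarrow\infty$ gives $e^{-\mathcal{L}^{\boldsymbol{k}}t}\rightarrow P_{0}$ and hence $\underset{t\rightarrow\infty}{\text{{\bf lim}}}\,\boldsymbol{x}(t)=\boldsymbol{v}_{1}^{r}(\boldsymbol{v}_{1}^{l})^{T}\boldsymbol{x}(0)$.

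The delicate point is that the shift only delivers $\text{{\bf Re}}(\cdot)\ge0$ for the non-zero modes, whereas convergence of $e^{-\mathcal{L}^{\boldsymbol{k}}t}$ to $P_{0}$ requires every non-zero eigenvalue to have \emph{strictly} positive real part; a shifted eigenvalue on the imaginary axis would cause persistent oscillation and destroy the limit. Hence the crux is to argue that $\lambda_{1}(\mathcal{L}^{\boldsymbol{\delta}})$ is the unique eigenvalue of minimal real part, which is what makes the ordering ``by real parts'' meaningful in the first place. I expect to obtain this from strong connectivity (irreducibility of $\mathcal{L}^{\boldsymbol{\delta}}$) together with the simplicity hypothesis, leveraging the eventual-positivity / Perron--Frobenius structure developed in \S 6, so that a suitable shifted matrix $u_{0}I-\mathcal{L}^{\boldsymbol{\delta}}$ acquires a simple, strictly dominant eigenvalue in the spirit of the argument in Theorem~\ref{thm:SB-epsilon-gap-sufficient-necessary-directed}. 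Any nontrivial Jordan blocks attached to the non-zero eigenvalues only introduce polynomial-in-$t$ factors that are dominated by the strictly decaying exponentials, so they do not affect the limit.
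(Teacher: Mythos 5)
Your core argument coincides with the paper's own (the paper's proof is literally ``similar to the proof of Theorem~\ref{prop:imbalanced-undirected}, omitted for brevity''): the identity $\mathcal{L}^{\boldsymbol{k}}=\mathcal{L}^{\boldsymbol{\delta}}-\lambda_{1}(\mathcal{L}^{\boldsymbol{\delta}})I$, the rigid translation of the spectrum with left and right eigenvectors preserved, and the limit $e^{-\mathcal{L}^{\boldsymbol{k}}t}\rightarrow\boldsymbol{v}_{1}^{r}(\boldsymbol{v}_{1}^{l})^{T}$ via the spectral projector of the simple zero eigenvalue. In fact you are more careful than the paper: you make explicit the two points that the directed case adds over the undirected one, namely possible Jordan blocks (harmless once a strict spectral gap holds) and the requirement that every non-zero eigenvalue of $\mathcal{L}^{\boldsymbol{k}}$ have \emph{strictly} positive real part, without which a shifted eigenvalue on the imaginary axis would destroy the limit.

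The gap is in your last paragraph. You defer the strict-gap claim --- that $\lambda_{1}(\mathcal{L}^{\boldsymbol{\delta}})$ is the \emph{unique} eigenvalue of minimal real part --- to a Perron--Frobenius / eventual-positivity argument in the spirit of Theorem~\ref{thm:SB-epsilon-gap-sufficient-necessary-directed}, and that route is not available here. The Perron--Frobenius step in that theorem works only because structural balance supplies a Gauge transformation $G$ under which $G\mathcal{L}^{\boldsymbol{\delta}}G$ has non-positive off-diagonal entries, so that $u_{0}I-G\mathcal{L}^{\boldsymbol{\delta}}G$ is an irreducible non-negative matrix. For a structurally \emph{imbalanced} network no such $G$ exists --- that is precisely what imbalance means --- so $u_{0}I-\mathcal{L}^{\boldsymbol{\delta}}$ retains genuinely negative entries and Perron--Frobenius/Wielandt give no control over the bottom of the spectrum. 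The all-negative directed $3$-cycle illustrates the loss of structure: there $\mathcal{L}^{\boldsymbol{\delta}}=I+P$ with $P$ a cyclic permutation, and the minimal real part is attained by the complex pair $1+e^{\pm2\pi i/3}$, not by a real Perron-type eigenvalue. Nor do the results of \S 6 help: Lemma~\ref{lem:eventually-positivity} is for undirected networks, and the directed lemma both assumes weight balance (not hypothesized in this theorem) and is an \emph{equivalence} with the very spectral property you are trying to establish, so invoking it would be circular. The only tenable repair is to read the theorem's hypothesis ``the smallest eigenvalue of $\mathcal{L}^{\boldsymbol{\delta}}$ ordered by real parts is real and simple'' as asserting that $\lambda_{1}(\mathcal{L}^{\boldsymbol{\delta}})$ is the unique eigenvalue of minimal real part, i.e., $\text{{\bf Re}}(\lambda_{i}(\mathcal{L}^{\boldsymbol{\delta}}))>\lambda_{1}(\mathcal{L}^{\boldsymbol{\delta}})$ for $i\geq2$ --- this is what the paper implicitly does, and indeed without it the steady-state claim itself can fail. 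Under that reading your first three paragraphs already constitute a complete proof and the entire final paragraph can be deleted.
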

\begin{proof}
The proof is similar to the proof of Theorem \ref{prop:imbalanced-undirected},
we omitted here for brevity.
\end{proof}
We provide an example to illustrate the steady-state of compensated
dynamics (\ref{eq:feedback-overall}) on directed, structurally imbalanced
signed networks. 
\begin{figure}[tbh]
\begin{centering}
\includegraphics[width=9cm]{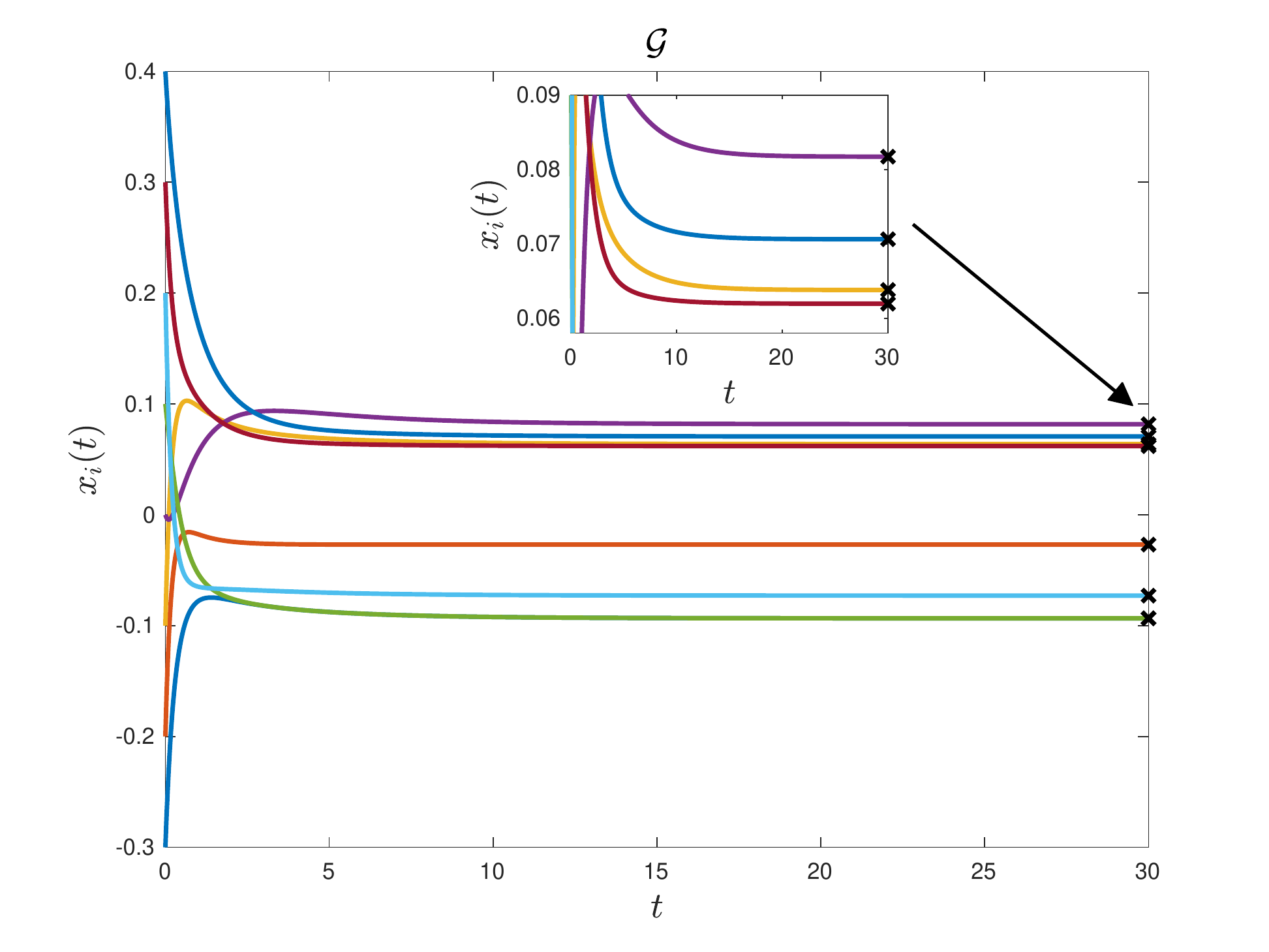}
\par\end{centering}
\caption{The state trajectory of compensated dynamics for the structurally
imbalanced network $\mathcal{G}_{2}$ in Figure \ref{fig:8-node-signed-network}.}
\label{fig:x-f-G2}
\end{figure}

Consider the directed, structurally imbalanced, signed network $\mathcal{G}_{2}$
in Figure \ref{fig:8-node-signed-network}. The steady-state can be
predicted via $\boldsymbol{v}_{1}^{l}$ and $\boldsymbol{v}_{1}^{r}$,
the left and right normalized eigenvector associated with $\lambda_{1}(\mathcal{L}^{\boldsymbol{\delta}}(\mathcal{G}_{2}))$,
namely,
\begin{equation}
\underset{t\rightarrow\infty}{\text{{\bf lim}}}\boldsymbol{x}(t)=\boldsymbol{v}_{1}^{r}(\boldsymbol{v}_{1}^{l})^{T}\boldsymbol{x}(0).\label{eq:x_f_G2_prediction}
\end{equation}
The state trajectory of the compensated dynamics which is shown in
Figure \ref{fig:x-f-G2}, where the black crosses indicated the steady-state
predicted by (\ref{eq:x_f_G2_prediction}). In this example, the right
and left normalized eigenvector associated with $\lambda_{1}(\mathcal{L}^{\boldsymbol{\delta}}(\mathcal{G}_{2}))$
are
\begin{align}
\boldsymbol{v}_{1}^{r} & =(-0.45,\,-0.13,\,0.31,\,0.39,\nonumber \\
 & -0.45,\,-0.35,\,0.30,\,0.34)^{T},
\end{align}
and
\begin{align}
\boldsymbol{v}_{1}^{l} & =(-0.50,\,-0.31,\,0.29,\,0.38,\nonumber \\
 & -0.50,\,-0.39,\,0.17,\,0.26)^{T},
\end{align}
respectively. The initial state of agents is
\begin{equation}
\boldsymbol{x}(0)=(-0.4,\,-0.3,\,-0.2,\,-0.1,\,0,\,0.1,\,0.2,\,0.3)^{T}.
\end{equation}
Therefore, one can get that
\begin{align}
\underset{t\rightarrow\infty}{\text{{\bf lim}}}\boldsymbol{x}(t) & =(-0.12,\,-0.03,\,0.08,\,0.11,\nonumber \\
 & \,-0.12,\,-0.09,\,0.08,\,0.09)^{T}.
\end{align}

Finally, we discuss whether or not the compensation vector $\boldsymbol{k}$
can be less than $\boldsymbol{\delta}$ so as to render the system
(\ref{eq:feedback-overall}) achieving the trivial consensus.
\begin{table*}[t]
\begin{centering}
\begin{tabular}{|c|c|c|c|c|c|c|}
\hline 
$\boldsymbol{k}>\boldsymbol{\delta}$ & \multicolumn{6}{c|}{Trivial consensus}\tabularnewline
\hline 
\multirow{2}{*}{$\boldsymbol{k}=\boldsymbol{\delta}$} & \multicolumn{3}{c|}{Structurally balanced} & \multicolumn{3}{c|}{Structurally imbalanced}\tabularnewline
\cline{2-7} \cline{3-7} \cline{4-7} \cline{5-7} \cline{6-7} \cline{7-7} 
 & \multicolumn{3}{c|}{${\displaystyle \frac{1}{n}G\boldsymbol{1}_{n}(\boldsymbol{v}_{1}^{l})^{T}G\boldsymbol{x}(0)}$} & \multicolumn{3}{c|}{Trivial consensus}\tabularnewline
\hline 
\multirow{3}{*}{$\boldsymbol{k}<\boldsymbol{\delta}$} & \multicolumn{3}{c|}{Undirected networks} & \multicolumn{3}{c|}{Directed networks}\tabularnewline
\cline{2-7} \cline{3-7} \cline{4-7} \cline{5-7} \cline{6-7} \cline{7-7} 
 & Structurally balanced & \multicolumn{2}{c|}{Structurally imbalanced} & Structurally balanced & \multicolumn{2}{c|}{Structurally imbalanced}\tabularnewline
\cline{2-7} \cline{3-7} \cline{4-7} \cline{5-7} \cline{6-7} \cline{7-7} 
 & Unstable & $\boldsymbol{k}=\boldsymbol{\delta}-\lambda_{1}(\mathcal{L}^{\boldsymbol{\delta}})\mathds{1}_{n}$ & ${\displaystyle \sum_{j=1}^{p}}\boldsymbol{v}_{j}\boldsymbol{v}_{j}^{T}\boldsymbol{x}(0)$ & Unstable & $\boldsymbol{k}=\boldsymbol{\delta}-\lambda_{1}(\mathcal{L}^{\boldsymbol{\delta}})\mathds{1}_{n}$ & $\boldsymbol{v}_{1}^{r}(\boldsymbol{v}_{1}^{l})^{T}\boldsymbol{x}(0)$\tabularnewline
\hline 
$\boldsymbol{k}\veebar\boldsymbol{\delta}$ & \multicolumn{6}{c|}{No bipartite consensus}\tabularnewline
\hline 
\end{tabular}
\par\end{centering}
\caption{Selection of compensation vector $\boldsymbol{k}$ and the collective
behavior of compensated dynamics (\ref{eq:feedback-overall}) ($\boldsymbol{x}(\infty)$).}
\label{tbl}
\end{table*}

\begin{cor}
Let $\mathcal{G}=(\mathcal{V},\mathcal{E},W)$ be a structurally imbalanced,
weight balanced, connected signed network. If the compensation vector
satisfies
\begin{equation}
\boldsymbol{k}>\boldsymbol{\delta}-\lambda_{1}\left(\frac{1}{2}(\mathcal{L}^{\boldsymbol{\delta}}+(\mathcal{L}^{\boldsymbol{\delta}})^{T})\right)\mathds{1}_{n},
\end{equation}
then, the multi-agent system (\ref{eq:feedback-overall}) achieve
the trivial consensus.
\end{cor}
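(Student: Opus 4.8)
The plan is to reduce this directed, non-normal stability problem to a symmetric one, using the same device as in the proof of the weight-balanced stabilization theorem above: for a real matrix $M$ and any eigenpair $(\lambda,\boldsymbol{v})$ with $\|\boldsymbol{v}\|=1$, one has $\mathrm{Re}(\lambda)=\boldsymbol{v}^{*}\tfrac{1}{2}(M+M^{T})\boldsymbol{v}\geq\lambda_{\min}\!\big(\tfrac{1}{2}(M+M^{T})\big)$, so it suffices to show that the symmetric part of $\mathcal{L}^{\boldsymbol{k}}$ is positive definite. Write $S:=\tfrac{1}{2}(\mathcal{L}^{\boldsymbol{\delta}}+(\mathcal{L}^{\boldsymbol{\delta}})^{T})$, which is real symmetric, so $\lambda_{1}(S)$ in the statement is its smallest eigenvalue.

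First I would rewrite $\mathcal{L}^{\boldsymbol{k}}=\mathcal{L}^{\boldsymbol{\delta}}+\mathrm{diag}\{\boldsymbol{k}-\boldsymbol{\delta}\}$ and pass to symmetric parts; since the diagonal compensation is symmetric this gives $\tfrac{1}{2}(\mathcal{L}^{\boldsymbol{k}}+(\mathcal{L}^{\boldsymbol{k}})^{T})=S+\mathrm{diag}\{\boldsymbol{k}-\boldsymbol{\delta}\}$. The hypothesis $\boldsymbol{k}>\boldsymbol{\delta}-\lambda_{1}(S)\mathds{1}_{n}$ reads entrywise as $k_{i}-\delta_{i}>-\lambda_{1}(S)$, i.e. $\mathrm{diag}\{\boldsymbol{k}-\boldsymbol{\delta}\}\succ-\lambda_{1}(S)I$. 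Hence for every unit vector $\boldsymbol{\eta}\in\mathbb{R}^{n}$, using $\boldsymbol{\eta}^{T}S\boldsymbol{\eta}\geq\lambda_{1}(S)$, we get $\boldsymbol{\eta}^{T}\big(S+\mathrm{diag}\{\boldsymbol{k}-\boldsymbol{\delta}\}\big)\boldsymbol{\eta}>\lambda_{1}(S)-\lambda_{1}(S)=0$, so the symmetric part of $\mathcal{L}^{\boldsymbol{k}}$ is positive definite. By the bound above, $\mathrm{Re}(\lambda_{i}(\mathcal{L}^{\boldsymbol{k}}))>0$ for all $i$, so $-\mathcal{L}^{\boldsymbol{k}}$ is Hurwitz and $\boldsymbol{x}(t)\to\boldsymbol{0}$; the network achieves trivial consensus.

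The algebra above is immediate; the one point requiring care -- and where weight balance enters -- is confirming that $\lambda_{1}(S)>0$, so that the threshold $\boldsymbol{\delta}-\lambda_{1}(S)\mathds{1}_{n}$ genuinely lies below $\boldsymbol{\delta}$ and the corollary relaxes the condition $\boldsymbol{k}>\boldsymbol{\delta}$ of the $\boldsymbol{k}\geq\boldsymbol{\delta}$ case. To see this I would note $S=\mathcal{L}^{\mathrm{sym}}+\mathrm{diag}\{\boldsymbol{\delta}\}$ with $\mathcal{L}^{\mathrm{sym}}=\tfrac{1}{2}(\mathcal{L}+\mathcal{L}^{T})$, and check that weight balance, $\sum_{j}|w_{ij}|=\sum_{j}|w_{ji}|$, makes the diagonal of $S$ equal $\sum_{j}|\tilde{w}_{ij}|$ for the symmetrized weights $\tilde{w}_{ij}=\tfrac{1}{2}(w_{ij}+w_{ji})$; thus $S$ is exactly the $\boldsymbol{\delta}$-compensated Laplacian of the symmetrized undirected network (cf. the proof of Theorem \ref{prop:SB-2}). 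Since that network inherits structural imbalance and connectivity from $\mathcal{G}$, the undirected result (positive definiteness of the $\boldsymbol{\delta}$-compensated Laplacian used in Theorem \ref{prop:SB-2} and Corollary \ref{cor:imbalanced-undirected-two}) yields $S\succ0$, hence $\lambda_{1}(S)>0$. Modulo this identification, the argument is the exact directed, weight-balanced analogue of Case 1 of Corollary \ref{cor:imbalanced-undirected-two}.
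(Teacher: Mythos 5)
Your main argument is correct and is essentially the paper's own proof: both reduce the claim to showing that the symmetric part $\tfrac{1}{2}\bigl(\mathcal{L}^{\boldsymbol{k}}+(\mathcal{L}^{\boldsymbol{k}})^{T}\bigr)=S+\mathrm{diag}\{\boldsymbol{k}-\boldsymbol{\delta}\}$ is positive definite under the hypothesis, and then conclude via the bound $\mathrm{Re}\bigl(\lambda_{i}(\mathcal{L}^{\boldsymbol{k}})\bigr)\geq\lambda_{\min}\bigl(\tfrac{1}{2}(\mathcal{L}^{\boldsymbol{k}}+(\mathcal{L}^{\boldsymbol{k}})^{T})\bigr)$. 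The only difference is bookkeeping: you verify the positive definiteness directly by a Rayleigh-quotient estimate, whereas the paper invokes Case 1 of Corollary \ref{cor:imbalanced-undirected-two} applied to the symmetrized network, whose proof is that same estimate.

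One caution about your final paragraph: it is not needed for the statement (the corollary never asserts $\lambda_{1}(S)>0$, and your first two paragraphs already establish trivial consensus without it), and its justification is not exact. When a reciprocal pair carries opposite signs, which the paper allows (cf.\ the pair between nodes $2$ and $7$ in $\mathcal{G}_{2}$ of Figure \ref{fig:8-node-signed-network}), one has $\sum_{j}|\tilde{w}_{ij}|<\sum_{j}|w_{ij}|$, so $S$ is the symmetrized network's $\boldsymbol{\delta}$-compensated Laplacian \emph{plus} a nonzero nonnegative diagonal, not that Laplacian itself; moreover the symmetrized network need not inherit connectivity or structural imbalance from $\mathcal{G}$, since cancelling pairs delete edges. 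Weight balance does give $S\succeq0$ by diagonal dominance (so the threshold indeed lies at or below $\boldsymbol{\delta}$), and the strict inequality $\lambda_{1}(S)>0$ can in fact be recovered under the stated hypotheses by accounting for that diagonal surplus, but this requires an argument you have not supplied. Since the corollary itself does not rely on this side claim, the flaw is immaterial to the correctness of your proof of the statement.
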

\begin{proof}
Note that,
\begin{align}
\frac{1}{2}((\mathcal{L}+\text{{\bf diag}}\{\boldsymbol{k}\}) & +(\mathcal{L}+\text{{\bf diag}}\{\boldsymbol{k}\})^{T})\nonumber \\
 & =\frac{1}{2}(\mathcal{L}+\mathcal{L}^{T})+\text{{\bf diag}}\{\boldsymbol{k}\},
\end{align}
since $\mathcal{G}$ is weight balanced, then $\frac{1}{2}(\mathcal{L}+\mathcal{L}^{T})$
is the Laplacian matrix of an undirected network corresponding to
directed network $\mathcal{G}$. According to Corollary \ref{cor:imbalanced-undirected-two},
if 
\begin{equation}
\boldsymbol{k}>\boldsymbol{\delta}-\lambda_{1}\left(\frac{1}{2}(\mathcal{L}^{\boldsymbol{\delta}}+(\mathcal{L}^{\boldsymbol{\delta}})^{T})\right)\mathds{1}_{n},
\end{equation}
then $\frac{1}{2}(\mathcal{L}+\mathcal{L}^{T})+\text{{\bf diag}}\{\boldsymbol{k}\}$
is positive definite, i.e., all the eigenvalues of $\frac{1}{2}(\mathcal{L}+\mathcal{L}^{T})+\text{{\bf diag}}\{\boldsymbol{k}\})$
are positive. Due to the fact
\begin{equation}
\text{{\bf Re}}(\lambda_{1}(\mathcal{L}+\text{{\bf diag}}\{\boldsymbol{k}\}))\geq\lambda_{1}\left(\frac{1}{2}(\mathcal{L}+\mathcal{L}^{T})+\text{{\bf diag}}\{\boldsymbol{k}\}\right),
\end{equation}
therefore, the multi-agent system (\ref{eq:feedback-overall}) achieve
the trivial consensus.
\end{proof}
Now, one can summarize the collective behavior of compensated dynamics
(\ref{eq:feedback-overall}) in terms of compensation vector $\boldsymbol{k}$
in Table \ref{tbl}.

\section{Compensated Dynamics and Eventually Positivity }

In literature, eventually positivity of a matrix $M\in\mathbb{R}^{n\times n}$
turns out to be relevant in the characterization of the positive semidefiniteness
of signed Laplacian from an algebraic perspective \cite{altafini2019investigating,chen2020spectral,altafini2014predictable}.
In this section, we shall further discuss the correlation between
the stability of the compensated dynamics obtained by self-loop compensation
and eventually positivity. We first recall some basic concepts regarding
the eventually positivity of matrices.
\begin{defn}
A matrix $M\in\mathbb{R}^{n\times n}$ is said to be eventually positive
(exponentially positive) if there exists a $k_{0}\in\mathbb{N}$,
such that $M^{k}>0$ ($e^{Mk}>0$) for all $k\geq k_{0}$. 
\end{defn}
The relationship between eventual positivity and eventual exponential
positivity is characterized by the following lemma. 
\begin{lem}
A matrix $M\in\mathbb{R}^{n\times n}$ is eventually exponentially
positive if and only if there exists $s\geq0$, such that $sI+M$
is eventually positive. 
\end{lem}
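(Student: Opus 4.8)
The plan is to prove both directions through the standard Perron--Frobenius characterization of eventual positivity: a real matrix $A$ is eventually positive precisely when $A$ and $A^{T}$ each have $\rho(A)$ as a \emph{simple}, positive eigenvalue that is strictly dominant in modulus (i.e. $\rho(A)>|\lambda|$ for every other $\lambda\in\boldsymbol{\lambda}(A)$) and possesses an entrywise positive eigenvector. The elementary bridge between the discrete notion (integer powers) and the continuous flow $e^{tM}$ is the commutation identity $e^{t(sI+M)}=e^{ts}e^{tM}$, valid because $sI$ and $M$ commute; since $e^{ts}>0$, the matrices $e^{t(sI+M)}$ and $e^{tM}$ have the same sign pattern for every $t\ge 0$, so I may shift freely.

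First I would treat the ``if'' direction. Assuming $B:=sI+M$ is eventually positive, let $\rho=\rho(B)$ with positive right and left Perron eigenvectors $\boldsymbol{u},\boldsymbol{w}>0$ normalized by $\boldsymbol{w}^{T}\boldsymbol{u}=1$. Because eventual positivity makes $\rho$ simple and strictly dominant in modulus, the spectral decomposition of the exponential yields
\[
\lim_{t\rightarrow\infty}e^{-\rho t}e^{tB}=\boldsymbol{u}\boldsymbol{w}^{T}>0,
\]
since every remaining mode carries a factor $e^{(\lambda-\rho)t}$ with $\mathrm{Re}(\lambda-\rho)\le|\lambda|-\rho<0$ and hence decays. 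Entrywise convergence to a positive limit forces $e^{tB}>0$ for all large $t$, and the commutation identity then gives $e^{tM}>0$ for all large $t$, i.e. $M$ is eventually exponentially positive.

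For the ``only if'' direction I would start from $e^{tM}>0$ for $t\ge t_{0}$ and apply classical Perron--Frobenius to the genuinely positive matrix $e^{t_{0}M}$. Its eigenvalues are $e^{t_{0}\lambda_{i}(M)}$, so the unique dominant eigenvalue $\rho(e^{t_{0}M})=e^{t_{0}\mu}$, with $\mu:=\max_{i}\mathrm{Re}(\lambda_{i}(M))$, must be simple and positive; this pins $\mu$ down as a \emph{real}, simple eigenvalue of $M$ that is strictly dominant in real part, with a positive right eigenvector (a complex $\lambda$ of real part $\mu$ would pair with $\bar\lambda$ to give two eigenvalues of $e^{t_{0}M}$ of maximal modulus, contradicting simplicity). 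Applying the same argument to $(e^{t_{0}M})^{T}=e^{t_{0}M^{T}}$ supplies a positive left eigenvector. It then remains to choose the shift: $M$ and $sI+M$ share eigenvectors while eigenvalues map by $\lambda\mapsto s+\lambda$, so for a non-dominant $\lambda=a+\mathrm{i}b$ the condition $(s+\mu)^{2}>(s+a)^{2}+b^{2}$ rearranges to $(\mu-a)(2s+\mu+a)>b^{2}$, whose left side tends to $+\infty$ as $s\rightarrow\infty$ because $\mu-a>0$. Hence for $s\ge 0$ large enough, $s+\mu$ becomes the simple, positive, strictly dominant Perron root of both $sI+M$ and its transpose, carrying the inherited positive eigenvectors, and the characterization above certifies that $sI+M$ is eventually positive.

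The step I expect to be the main obstacle is the spectral extraction inside the ``only if'' direction, namely converting the continuous-time positivity of $e^{tM}$ into the strong Perron--Frobenius structure of $M$ itself. One must argue that the spectral abscissa $\mu$ is actually \emph{attained} by a genuine real, simple eigenvalue rather than being a mere supremum of real parts, and that the resulting dominance in real part can be quantitatively upgraded to dominance in modulus by a finite shift; both hinge on transporting the classical Perron--Frobenius conclusions for the positive matrix $e^{t_{0}M}$ back through the exponential map to $M$, which is precisely where the simplicity of the Perron root of $e^{t_{0}M}$ does the essential work.
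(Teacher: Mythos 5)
The paper never proves this lemma: it is stated as a background fact recalled from the eventual\-/positivity literature (it is precisely the Noutsos--Tsatsomeros equivalence between eventual exponential positivity and eventual positivity of a shifted matrix), so there is no in-paper argument to compare yours against; what you have written is an actual proof of the recalled result, and it is correct. Your ``if'' direction is the standard one: the strong Perron--Frobenius structure that eventual positivity confers on $B=sI+M$ gives $e^{-\rho t}e^{tB}\rightarrow\boldsymbol{u}\boldsymbol{w}^{T}>0$ (Jordan blocks only add polynomial factors to decaying exponentials), hence $e^{tB}>0$ for all large $t$, and the commuting-shift identity $e^{t(sI+M)}=e^{ts}e^{tM}$ transfers positivity to $e^{tM}$. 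Your ``only if'' direction soundly extracts, from classical Perron--Frobenius applied to the positive matrix $e^{t_{0}M}$, a real simple eigenvalue $\mu$ of $M$ equal to its spectral abscissa with positive right and left eigenvectors, and the computation $(s+\mu)^{2}-(s+a)^{2}=(\mu-a)(2s+\mu+a)$ correctly shows a finite shift makes $s+\mu$ the strictly dominant, positive, simple Perron root of $sI+M$ and of its transpose, after which the Noutsos characterization certifies eventual positivity. Three points to tighten if this were written out in full: (i) the argument is not self-contained, since both directions lean on the characterization ``eventually positive $\Leftrightarrow$ $A$ and $A^{T}$ have the strong Perron--Frobenius property,'' itself a nontrivial theorem --- acceptable here, since the paper's Section 6 works at the same level of citation; (ii) in your parenthetical, if $\lambda=\mu+\mathrm{i}b$ with $t_{0}b\in\pi\mathbb{Z}$, then $e^{t_{0}\lambda}$ and $e^{t_{0}\bar{\lambda}}$ coincide, producing one eigenvalue of $e^{t_{0}M}$ of algebraic multiplicity at least two (contradicting simplicity) rather than two distinct eigenvalues of maximal modulus (contradicting dominance); either way classical Perron--Frobenius is violated, but the two cases should be distinguished; (iii) the claim that the Perron vector of $e^{t_{0}M}$ is an eigenvector of $M$ itself deserves the one-line justification that $\ker(M-\mu I)\subseteq\ker\!\left(e^{t_{0}M}-e^{t_{0}\mu}I\right)$ and both subspaces are one-dimensional by simplicity, so they coincide. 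None of these is a genuine gap; the proof stands.
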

\begin{lem}
\label{lem:eventually-positivity} Let $\mathcal{G}=(\mathcal{V},\mathcal{E},W)$
be a connected, undirected signed network. Then the signed Laplacian
$\mathcal{L}(\mathcal{G})$ is positive semi-definite with only simple
zero eigenvalue if and only if $-\mathcal{L}(\mathcal{G})$ is eventually
exponentially positive.
\end{lem}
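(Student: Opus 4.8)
The plan is to reduce the statement, via the preceding lemma, to the \emph{eventual positivity} of a shifted matrix, and then to apply the Perron--Frobenius characterization of eventually positive symmetric matrices. Concretely, by that lemma $-\mathcal{L}(\mathcal{G})$ is eventually exponentially positive if and only if there is some $s\ge 0$ with $A:=sI-\mathcal{L}(\mathcal{G})$ eventually positive. Since $\mathcal{G}$ is undirected, $\mathcal{L}(\mathcal{G})$ is symmetric, hence so is $A$; and for a real symmetric matrix the known equivalence is that $A$ is eventually positive if and only if $A$ possesses the \emph{strong Perron--Frobenius property}, i.e.\ $\rho(A)$ is a simple eigenvalue of $A$, strictly dominant in modulus over all other eigenvalues, and admits an entrywise positive eigenvector. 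Writing the eigenvalues of $\mathcal{L}(\mathcal{G})$ as $\lambda_1\le\lambda_2\le\cdots\le\lambda_n$ with orthonormal eigenvectors $\boldsymbol{u}_1,\ldots,\boldsymbol{u}_n$, the eigenvalues of $A$ are $s-\lambda_i$, so the strong Perron--Frobenius property of $A$ translates into: $\lambda_1$ is simple, $\lambda_1<\lambda_i$ for all $i\ge 2$, and $\boldsymbol{u}_1$ can be taken entrywise positive.

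For the direction $(\Rightarrow)$: if $\mathcal{L}(\mathcal{G})\succeq 0$ with a simple zero eigenvalue, then $0=\lambda_1<\lambda_2\le\cdots\le\lambda_n$, and the zero eigenspace is spanned by $\mathds{1}_n$ (using $\mathcal{L}(\mathcal{G})\mathds{1}_n=\mathbf{0}_n$), which is entrywise positive. Choosing $s>\lambda_n$ makes every $s-\lambda_i$ positive with $s=s-\lambda_1$ strictly dominant and simple, so $A=sI-\mathcal{L}(\mathcal{G})$ has the strong Perron--Frobenius property and is therefore eventually positive; the preceding lemma then yields eventual exponential positivity of $-\mathcal{L}(\mathcal{G})$.

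For the converse $(\Leftarrow)$: eventual exponential positivity of $-\mathcal{L}(\mathcal{G})$ furnishes an $s\ge 0$ with $A$ eventually positive, hence with the strong Perron--Frobenius property; thus $\lambda_1$ is simple, strictly smaller than every other eigenvalue, and its eigenvector $\boldsymbol{u}_1$ is entrywise positive. Here is the crux: I must show $\lambda_1=0$. Since $\mathcal{L}(\mathcal{G})\mathds{1}_n=\mathbf{0}_n$, the value $0$ is always an eigenvalue, with eigenvector $\mathds{1}_n$. If $\lambda_1\neq 0$, then $\boldsymbol{u}_1$ and $\mathds{1}_n$ are eigenvectors of the symmetric matrix $\mathcal{L}(\mathcal{G})$ for distinct eigenvalues and hence orthogonal, giving $\boldsymbol{u}_1^T\mathds{1}_n=0$; but $\boldsymbol{u}_1>0$ forces $\boldsymbol{u}_1^T\mathds{1}_n>0$, a contradiction. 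Therefore $\lambda_1=0$, and since $\lambda_1$ is simple and strictly smaller than all other eigenvalues, we conclude $0=\lambda_1<\lambda_2\le\cdots\le\lambda_n$, i.e.\ $\mathcal{L}(\mathcal{G})\succeq 0$ with a simple zero eigenvalue.

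The main obstacle is twofold: first, securing the symmetric Perron--Frobenius equivalence as the right bridge between the analytic notion (eventual positivity) and the spectral data of $\mathcal{L}(\mathcal{G})$, including the observation that eventual positivity forces the dominant eigenvalue $\rho(A)=s-\lambda_1$ to be positive rather than some negative $s-\lambda_i$ of larger modulus; and second, the orthogonality argument in the converse that pins the Perron eigenvector of the shift to the known eigenvector $\mathds{1}_n$, thereby identifying $\lambda_1$ with the guaranteed zero eigenvalue. Everything else (the eigenvalue bookkeeping of the shift $sI-\mathcal{L}(\mathcal{G})$ and the choice of $s$) is routine.
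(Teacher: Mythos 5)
Your proof is correct. A point of comparison: the paper does not actually prove this lemma---it is stated without proof, imported as a known result from the eventual-positivity literature cited in that section (Altafini, Chen et al.)---so there is no in-paper argument to measure yours against; what you supply is a self-contained derivation that fills a gap the paper leaves to citation. Your argument rests on two standard ingredients: the paper's preceding shift lemma (so that $-\mathcal{L}(\mathcal{G})$ is eventually exponentially positive iff $A=sI-\mathcal{L}(\mathcal{G})$ is eventually positive for some $s\ge 0$) and the Noutsos characterization that a real matrix is eventually positive iff both it and its transpose have the strong Perron--Frobenius property, which symmetry collapses to a single condition---you invoke exactly this symmetric specialization, and it is the right bridge. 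Your handling of the two delicate points is sound: (i) when translating the strong Perron--Frobenius property of $A$ into spectral data for $\mathcal{L}(\mathcal{G})$, you correctly insist that $\rho(A)$ must itself be an eigenvalue of $A$, hence equal to $s-\lambda_1$ rather than the modulus $|s-\lambda_n|$ of the bottom eigenvalue, which is what makes the dominance condition read as $\lambda_1<\lambda_i$ for all $i\ge 2$; and (ii) in the converse, the orthogonality argument---the Perron eigenvector $\boldsymbol{u}_1>0$ cannot be orthogonal to $\mathds{1}_n$, which always lies in the kernel of $\mathcal{L}(\mathcal{G})$---is exactly the right device to pin $\lambda_1=0$ and conclude positive semidefiniteness with a simple zero eigenvalue. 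The only dependence worth flagging is that the Perron--Frobenius equivalence you use is external to the paper (as is the shift lemma's proof), but both are classical, so the argument stands.
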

\begin{lem}
Let $\mathcal{G}=(\mathcal{V},\mathcal{E},W)$ be a weight balanced,
connected directed signed network with the corresponding Laplacian
$\mathcal{L}(\mathcal{G})$. Then $-\mathcal{L}(\mathcal{G})$ is
eventually exponentially positive if and only if $-\mathcal{L}(\mathcal{G})$
is marginally stable.
\end{lem}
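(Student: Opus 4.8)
The plan is to establish both implications through the strong Perron--Frobenius characterization of eventual positivity, reducing to the symmetric part of the Laplacian exactly as in the proof of Lemma \ref{lem:eventually-positivity}. Write $\mathcal{L}=\mathcal{L}(\mathcal{G})$ and set $\mathcal{L}_{s}=\tfrac{1}{2}(\mathcal{L}+\mathcal{L}^{T})$; because $\mathcal{G}$ is weight balanced, $\mathcal{L}_{s}$ is the signed Laplacian of the associated connected, undirected mirror network, so Lemma \ref{lem:eventually-positivity} is available for $\mathcal{L}_{s}$. The engine of the argument is the characterization contained in the preceding shift lemma: $-\mathcal{L}$ is eventually exponentially positive precisely when its spectral abscissa is attained by a real, simple eigenvalue that strictly dominates the real parts of all remaining eigenvalues and whose left and right eigenvectors are both positive. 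Since $\mathcal{L}\mathds{1}_{n}=\mathbf{0}_{n}$, the scalar $0$ is always an eigenvalue of $-\mathcal{L}$ with positive right eigenvector $\mathds{1}_{n}$, so the question reduces to locating $0$ within the spectrum and to the sign of the associated left eigenvector.

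For the forward implication, weight balance is not needed. If $-\mathcal{L}$ is eventually exponentially positive, the shift lemma makes $sI-\mathcal{L}$ eventually positive for some $s\ge0$, so its spectral radius is a simple, strictly dominant eigenvalue whose eigenvector spans the only positive eigendirection. As $\mathds{1}_{n}>0$ is an eigenvector of $sI-\mathcal{L}$ for the eigenvalue $s$, this forces $s$ to equal the spectral radius; translating back, $0$ is the simple, real-part-dominant eigenvalue of $-\mathcal{L}$ and every other eigenvalue has strictly negative real part. This is exactly marginal stability.

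For the converse I would start from marginal stability, which already supplies $0$ as a simple eigenvalue that strictly dominates the real parts of the remaining spectrum, together with the positive right eigenvector $\mathds{1}_{n}$. The one remaining requirement for eventual exponential positivity is positivity of the \emph{left} eigenvector $\boldsymbol{p}$ solving $\boldsymbol{p}^{T}\mathcal{L}=\mathbf{0}$, and I expect this to be the main obstacle: unlike the undirected case of Lemma \ref{lem:eventually-positivity}, where the Laplacian is symmetric and its Perron null vector is automatically positive, here symmetry is lost and $\mathds{1}_{n}$ need not be a left null vector. The plan is to recover positivity through the mirror Laplacian $\mathcal{L}_{s}$: I would use weight balance to transfer marginal stability of $-\mathcal{L}$ to its symmetric part, placing $\mathcal{L}_{s}$ in the regime $\mathcal{L}_{s}\succeq0$ with a simple zero eigenvalue, so that Lemma \ref{lem:eventually-positivity} applies and the symmetric null vector is positive; aligning the zero eigenspaces of $\mathcal{L}$ and $\mathcal{L}_{s}$ then yields a positive left null vector $\boldsymbol{p}$ of $\mathcal{L}$.

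With $0$ simple and strictly dominant in real part, right eigenvector $\mathds{1}_{n}>0$, and left eigenvector $\boldsymbol{p}>0$ in hand, the Perron--Frobenius characterization certifies that $-\mathcal{L}$ is eventually exponentially positive, closing the equivalence. The delicate step to execute carefully is precisely this weight-balanced transfer between $\mathcal{L}$ and $\mathcal{L}_{s}$ in the converse, together with ruling out any additional spectrum on the imaginary axis, since the elementary field-of-values bound on $\mathrm{Re}\,\boldsymbol{\lambda}(\mathcal{L})$ controls only one direction; making this rigorous is where the directed problem genuinely departs from the symmetric setting of Lemma \ref{lem:eventually-positivity}.
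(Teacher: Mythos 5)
The paper never proves this lemma---it is quoted as a known result from the cited eventual-positivity literature---so there is no in-paper argument to measure you against; your proposal has to stand on its own. Your forward implication does: the shift lemma gives an eventually positive $sI-\mathcal{L}$, Perron's theorem applied to a positive power forces $s=\rho(sI-\mathcal{L})$ because $\mathds{1}>0$ is an eigenvector, and translating back shows $0$ is a simple eigenvalue of $-\mathcal{L}$ strictly dominating all other real parts; you are also right that weight balance is irrelevant there. The converse, however, contains a genuine gap, and it is exactly the step you assert at the outset and only concede at the end to be unresolved: marginal stability in the standard Lyapunov sense (spectrum in the closed left half-plane, imaginary-axis eigenvalues semisimple) does \emph{not} supply ``$0$ simple and strictly dominant in real part,'' and no argument can recover it. Concretely, take
\[
\mathcal{L}=\begin{pmatrix}0 & 1 & -1\\ -1 & 0 & 1\\ 1 & -1 & 0\end{pmatrix},
\]
the Laplacian of the strongly connected, weight-balanced signed digraph with $w_{12}=w_{23}=w_{31}=-1$ and $w_{21}=w_{32}=w_{13}=1$. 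It is skew-symmetric, so $-\mathcal{L}$ has simple spectrum $\{0,\pm i\sqrt{3}\}$ and is marginally stable; yet $e^{-\mathcal{L}t}$ is orthogonal for every $t$, and an orthogonal matrix can never be entrywise positive (two positive rows cannot be orthogonal), so $-\mathcal{L}$ is not eventually exponentially positive. The same example destroys your proposed repair via the mirror network: here $\mathcal{L}_{s}=\frac{1}{2}(\mathcal{L}+\mathcal{L}^{T})=0$, so the mirror graph is edgeless---not connected---Lemma \ref{lem:eventually-positivity} does not apply, and marginal stability of $-\mathcal{L}$ visibly fails to transfer to ``$\mathcal{L}_{s}\succeq0$ with simple zero eigenvalue.'' The field-of-values bound you cite controls only one direction precisely because non-normal (and here skew) parts permit this.

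The upshot is that the lemma is true only when ``marginally stable'' is read in the strong sense ($0$ a simple eigenvalue, every other eigenvalue with strictly negative real part), and under that reading your converse needs none of the mirror machinery: weight balance means the columns of $\mathcal{L}$ sum to zero as well, so $\mathds{1}^{T}\mathcal{L}=\mathbf{0}_{n}^{T}$, i.e.\ $\mathds{1}>0$ is simultaneously the right \emph{and} left eigenvector of the dominant simple eigenvalue $0$ of $-\mathcal{L}$, and the Perron--Frobenius characterization you quote immediately certifies eventual exponential positivity. Your route both misses this one-line source of the positive left eigenvector and rests on a spectral transfer (marginal stability of $-\mathcal{L}$ implies $\mathcal{L}_{s}\succeq0$) that is false; so as written the converse is a plan with a hole where the main difficulty sits, not a proof.
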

With the above statements, we have the following result characterizing
$\mathcal{L}^{\boldsymbol{k}}$ for undirected signed networks with
eventually (exponentially) positivity.
\begin{thm}
\label{thm:EP-undirected}Let $\mathcal{G}=(\mathcal{V},\mathcal{E},W)$
be a structurally balanced, connected, undirected signed network.
Then, the following statements hold.

1) If $\boldsymbol{k}\leq\boldsymbol{\delta}$ and $\boldsymbol{k}\not=\boldsymbol{\delta}$,
then $-\mathcal{L}^{\boldsymbol{k}}$ is not eventually exponentially
positive;

2) If $\boldsymbol{k}\ge\boldsymbol{\delta}$ and $\boldsymbol{k}\not=\boldsymbol{\delta}$,
then then $-\mathcal{L}^{\boldsymbol{k}}$ is not eventually exponentially
positive.
\end{thm}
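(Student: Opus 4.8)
The plan is to prove both parts at once by a Gauge (signature) transformation, because the obstruction to eventual exponential positivity turns out to be structural rather than dependent on the regime of $\boldsymbol{k}$. Since $\mathcal{G}$ is structurally balanced and connected, there is a Gauge matrix $G=\text{{\bf diag}}\{\sigma_{1},\dots,\sigma_{n}\}$ with $\sigma_{i}\in\{1,-1\}$, $GWG\ge 0$, and $G=G^{-1}$. The first step is the key algebraic identity: conjugating the compensated Laplacian by $G$ strips all signs from the off-diagonal part. A direct computation gives $G\mathcal{L}^{\boldsymbol{k}}G = L_{0}+\text{{\bf diag}}\{\boldsymbol{k}-\boldsymbol{\delta}\}$, where $L_{0}$ is the ordinary (unsigned) graph Laplacian of the underlying graph with adjacency $|W|$; this uses $\sigma_{i}\sigma_{j}w_{ij}=|w_{ij}|$ from structural balance together with the definition of $\boldsymbol{\delta}$ in (\ref{eq:compensation-vector}).

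Write $H=L_{0}+\text{{\bf diag}}\{\boldsymbol{k}-\boldsymbol{\delta}\}$. The second step is to observe that $-H$ is a Metzler matrix: its off-diagonal entries equal $|w_{ij}|\ge 0$, and the diagonal perturbation $\text{{\bf diag}}\{\boldsymbol{\delta}-\boldsymbol{k}\}$ is irrelevant to the Metzler property no matter the sign pattern of $\boldsymbol{k}-\boldsymbol{\delta}$ --- this is precisely why cases 1) and 2) (indeed any $\boldsymbol{k}$) can be handled uniformly. Since $\mathcal{G}$ is connected, $-H$ is irreducible, so for $c$ large enough $cI-H\ge 0$ is irreducible and hence $e^{-Ht}=e^{-ct}e^{(cI-H)t}>0$ entrywise for every $t>0$.

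The third step assembles the conclusion. Because $\mathcal{L}^{\boldsymbol{k}}=GHG$, we have $e^{-\mathcal{L}^{\boldsymbol{k}}t}=G\,e^{-Ht}\,G$, so its $(i,j)$ entry is $\sigma_{i}\sigma_{j}\,[e^{-Ht}]_{ij}$, which carries the fixed sign $\sigma_{i}\sigma_{j}$ for all $t>0$ since $[e^{-Ht}]_{ij}>0$. As $\mathcal{G}$ has at least one negative edge $(i,j)$, structural balance forces $\sigma_{i}\sigma_{j}=-1$, so $[e^{-\mathcal{L}^{\boldsymbol{k}}t}]_{ij}<0$ for every $t>0$, and in particular at the integer times used in the Definition. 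Hence $e^{-\mathcal{L}^{\boldsymbol{k}}t}$ is never entrywise positive, so $-\mathcal{L}^{\boldsymbol{k}}$ is not eventually exponentially positive, proving 1) and 2) simultaneously.

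I expect the only real subtlety to be the bookkeeping in the conjugation identity of the first step and being explicit that the Metzler property (hence $e^{-Ht}>0$) is insensitive to $\boldsymbol{k}-\boldsymbol{\delta}$; once these are settled, the negative edge forces a persistently negative entry and the result is immediate. As a cross-check, one could instead invoke the lemma relating eventual exponential positivity of $-\mathcal{L}^{\boldsymbol{k}}$ to eventual positivity of some $sI-\mathcal{L}^{\boldsymbol{k}}$ and then use the strong Perron--Frobenius property of symmetric matrices: the smallest eigenvalue of $\mathcal{L}^{\boldsymbol{k}}=GHG$ is simple with eigenvector $G\boldsymbol{u}$, where $\boldsymbol{u}>0$ is the Perron vector of the irreducible $Z$-matrix $H$; since $G\boldsymbol{u}$ has mixed signs, the strong Perron--Frobenius property fails and the same conclusion follows.
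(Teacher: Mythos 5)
Your proof is correct, and it takes a genuinely different route from the paper's. The paper argues spectrally and indirectly: it cites Theorem \ref{thm:SB-epsilon-gap-sufficient-necessary} to classify $\mathcal{L}^{\boldsymbol{k}}$ (at least one negative eigenvalue in case 1, positive definite in case 2), and then invokes the equivalence of Lemma \ref{lem:eventually-positivity} to rule out eventual exponential positivity in either case. You instead work directly with the semigroup: Gauge conjugation reduces $\mathcal{L}^{\boldsymbol{k}}$ to the irreducible symmetric $Z$-matrix $H=L_{0}+\text{{\bf diag}}\{\boldsymbol{k}-\boldsymbol{\delta}\}$, for which $e^{-Ht}>0$ entrywise for all $t>0$, and conjugating back shows every entry of $e^{-\mathcal{L}^{\boldsymbol{k}}t}$ carries the fixed sign $\sigma_{i}\sigma_{j}$, so any negative edge pins a strictly negative entry for all time. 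Your route buys three things. First, it is self-contained, using neither Theorem \ref{thm:SB-epsilon-gap-sufficient-necessary} nor Lemma \ref{lem:eventually-positivity}. Second, it proves strictly more than the statement asks: the conclusion holds for \emph{every} compensation vector $\boldsymbol{k}\in\mathfrak{D}_{\ge0}$, including $\boldsymbol{k}=\boldsymbol{\delta}$ and the incomparable case $\boldsymbol{k}\veebar\boldsymbol{\delta}$. Third---and this is worth flagging---it supplies a justification the paper's proof actually lacks: Lemma \ref{lem:eventually-positivity} is stated for signed Laplacians, whose property $\mathcal{L}\mathds{1}_{n}=\mathbf{0}_{n}$ is precisely what forces the dominant eigenvector of $-\mathcal{L}$ to be $\mathds{1}_{n}$ and hence the smallest eigenvalue to be zero; the compensated matrix $\mathcal{L}^{\boldsymbol{k}}$ does not satisfy this, and the direction ``eventually exponentially positive $\Rightarrow$ PSD with simple zero eigenvalue'' is false for general symmetric matrices (e.g., $M=2I-A$, with $A$ the adjacency matrix of a connected unsigned graph of spectral radius less than $2$, is positive definite yet $-M$ is eventually exponentially positive). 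Indeed, your argument evaluated at $\boldsymbol{k}=\boldsymbol{\delta}$ shows the lemma's equivalence genuinely fails for compensated Laplacians: $\mathcal{L}^{\boldsymbol{\delta}}=GL_{0}G$ is PSD with a simple zero eigenvalue, yet $-\mathcal{L}^{\boldsymbol{\delta}}$ is not eventually exponentially positive, since its dominant eigenvector $G\mathds{1}_{n}$ has mixed signs---which is also exactly the content of your Perron--Frobenius cross-check. So your direct sign-pattern argument is not merely an alternative; it is the rigorous repair of the paper's shortcut, whose only advantage is brevity through reuse of earlier results.
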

\begin{proof}
According to Theorem \ref{thm:SB-epsilon-gap-sufficient-necessary},
one can see that for the case $\boldsymbol{k}\leq\boldsymbol{\delta}$($\boldsymbol{k}\not=\boldsymbol{\delta}$)
or $\boldsymbol{k}\ge\boldsymbol{\delta}$ ($\boldsymbol{k}\not=\boldsymbol{\delta}$),
$\mathcal{L}^{\boldsymbol{k}}$ has at least one negative eigenvalue
or is positive definite. However, from Lemma \ref{lem:eventually-positivity},
$-\mathcal{L}^{\boldsymbol{k}}$ is eventually exponentially positive
which is equivalent to that $\mathcal{L}^{\boldsymbol{k}}$ is positive
semi-definite with one zero eigenvalue. Then, the proof is finished.
\end{proof}
In parallel, the similar result for the case of directed signed networks
is as follows.
\begin{thm}
\label{thm:EP-directed}Let $\mathcal{G}=(\mathcal{V},\mathcal{E},W)$
be a weight balanced, connected directed signed network. Then, the
following statements hold.

1) If $\boldsymbol{k}\leq\boldsymbol{\delta}$ and $\boldsymbol{k}\not=\boldsymbol{\delta}$,
then $-\mathcal{L}^{\boldsymbol{k}}$ is not eventually exponentially
positive;

2) If $\boldsymbol{k}\ge\boldsymbol{\delta}$ and $\boldsymbol{k}\not=\boldsymbol{\delta}$,
then $-\mathcal{L}^{\boldsymbol{k}}$ is not eventually exponentially
positive.
\end{thm}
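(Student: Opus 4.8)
The plan is to reduce both statements to the weight-balanced characterization of eventual exponential positivity recalled just above, and then to exclude, in each case, the spectral configuration that this characterization demands. Since $\mathcal{G}$ is connected and weight balanced, adjoining the self-loop weights $k_i \geq 0$ (viewed as edges to a grounded node) keeps the network connected and weight balanced, so I would apply the lemma to $\mathcal{L}^{\boldsymbol{k}}$: the matrix $-\mathcal{L}^{\boldsymbol{k}}$ is eventually exponentially positive if and only if it is marginally stable, that is, if and only if $\mathcal{L}^{\boldsymbol{k}}$ is singular with a semisimple zero eigenvalue whose remaining eigenvalues all have strictly positive real part. It therefore suffices to show that, for every $\boldsymbol{k} \neq \boldsymbol{\delta}$ with $\boldsymbol{k} \leq \boldsymbol{\delta}$ or $\boldsymbol{k} \geq \boldsymbol{\delta}$, the matrix $\mathcal{L}^{\boldsymbol{k}}$ is either nonsingular or has an eigenvalue with negative real part, so that $-\mathcal{L}^{\boldsymbol{k}}$ fails to be marginally stable.

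The key device is symmetrization through weight balance. Writing $\mathcal{L}' = \tfrac{1}{2}(\mathcal{L}+\mathcal{L}^{T})$, weight balance gives $\boldsymbol{\delta}(\mathcal{L}') = \boldsymbol{\delta}(\mathcal{L}) = \boldsymbol{\delta}$ and the symmetric part of $\mathcal{L}^{\boldsymbol{k}}$ equals $(\mathcal{L}')^{\boldsymbol{k}} = \mathcal{L}' + \text{{\bf diag}}\{\boldsymbol{k}\}$; I would then transfer information to the directed spectrum through the Bendixson bound $\lambda_{\min}((\mathcal{L}')^{\boldsymbol{k}}) \leq \text{{\bf Re}}(\lambda_i(\mathcal{L}^{\boldsymbol{k}})) \leq \lambda_{\max}((\mathcal{L}')^{\boldsymbol{k}})$ for all $i \in \underline{n}$. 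For statement 2 this closes directly: $(\mathcal{L}')^{\boldsymbol{k}} = (\mathcal{L}')^{\boldsymbol{\delta}} + \text{{\bf diag}}\{\boldsymbol{k}-\boldsymbol{\delta}\}$, where $(\mathcal{L}')^{\boldsymbol{\delta}} \succeq 0$ (a $\boldsymbol{\delta}$-compensated, hence weakly diagonally dominant, undirected signed Laplacian) and $\text{{\bf diag}}\{\boldsymbol{k}-\boldsymbol{\delta}\}\succeq 0$ carries a strictly positive entry; since any null vector of $(\mathcal{L}')^{\boldsymbol{\delta}}$ is a gauge image of $\mathds{1}_n$ (structurally balanced symmetric part) or does not exist (imbalanced case), it is nowhere zero and the sum is positive definite. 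The Bendixson bound then forces $\text{{\bf Re}}(\lambda_i(\mathcal{L}^{\boldsymbol{k}})) > 0$, so $\mathcal{L}^{\boldsymbol{k}}$ is nonsingular and $-\mathcal{L}^{\boldsymbol{k}}$ is asymptotically stable, in particular not marginally stable, and the lemma yields the claim.

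For statement 1 the same reduction applies, but the conclusion is read off the symmetric part $(\mathcal{L}')^{\boldsymbol{k}} = (\mathcal{L}')^{\boldsymbol{\delta}} - \text{{\bf diag}}\{\boldsymbol{\delta}-\boldsymbol{k}\}$. If the symmetric part is structurally balanced, Theorem \ref{thm:SB-epsilon-gap-sufficient-necessary}(1) renders it indefinite, and the irreducible nonnegative-matrix and Perron--Frobenius argument of Theorem \ref{thm:SB-epsilon-gap-sufficient-necessary-directed}(1) then produces an eigenvalue of $\mathcal{L}^{\boldsymbol{k}}$ with negative real part, so $-\mathcal{L}^{\boldsymbol{k}}$ is unstable and not marginally stable. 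If the symmetric part is structurally imbalanced, $-\mathcal{L}^{\boldsymbol{k}}$ may instead be asymptotically stable (as in the three-node example with $\boldsymbol{k}_0 < \boldsymbol{\delta}$), so it again has no imaginary-axis eigenvalue and is not marginally stable. In either subcase the lemma gives the failure of eventual exponential positivity.

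I expect statement 1 in the structurally imbalanced regime to be the main obstacle: there the Bendixson bound only certifies $\lambda_{\min}((\mathcal{L}')^{\boldsymbol{k}}) < 0$, which does not by itself exhibit a right-half-plane eigenvalue of $\mathcal{L}^{\boldsymbol{k}}$, so one must rule out the accidental configuration in which $\mathcal{L}^{\boldsymbol{k}}$ is singular with all remaining eigenvalues in the open right half-plane, the lone configuration that would make $-\mathcal{L}^{\boldsymbol{k}}$ marginally stable. Excluding it reduces to showing that the simple-zero-eigenvalue-with-strictly-positive-Perron-eigenvector structure accompanying marginal stability is attained only at $\boldsymbol{k}=\boldsymbol{\delta}$; a secondary care point is that $\mathcal{L}^{\boldsymbol{k}}$ is a grounded rather than a zero-row-sum Laplacian, so the weight-balanced lemma must be invoked for the loopy network after checking that connectivity and weight balance persist.
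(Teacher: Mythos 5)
You have two fatal gaps here, and the one you dismissed as a ``secondary care point'' is the deeper of the two. The weight-balanced lemma (eventual exponential positivity $\Leftrightarrow$ marginal stability) is a statement about zero-row-sum Laplacians: there $0$ is automatically an eigenvalue with positive eigenvector $\mathds{1}_{n}$, and that is precisely what pins the Perron root of an eventually exponentially positive $-\mathcal{L}$ at the origin. The compensated matrix $\mathcal{L}^{\boldsymbol{k}}$ has row sums $k_{i}\ge0$, and for such grounded Laplacians the implication you need (not marginally stable $\Rightarrow$ not EEP) is simply false; checking that the loopy graph stays connected and weight balanced does not rescue it. Concretely, take the triangle with $w_{12}=w_{23}=1$, $w_{13}=-2/5$ (symmetric, hence weight balanced; structurally imbalanced; $\boldsymbol{\delta}=(4/5,0,4/5)^{T}$). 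The smallest eigenvalue of $\mathcal{L}^{\boldsymbol{\delta}}$ is simple ($\approx0.48$) with strictly positive eigenvector, so $-\mathcal{L}^{\boldsymbol{\delta}}$ is EEP; hence for $\boldsymbol{k}=\boldsymbol{\delta}+c\mathds{1}_{3}$, $c>0$ (so $\boldsymbol{k}\ge\boldsymbol{\delta}$, $\boldsymbol{k}\ne\boldsymbol{\delta}$), $e^{-\mathcal{L}^{\boldsymbol{k}}t}=e^{-ct}e^{-\mathcal{L}^{\boldsymbol{\delta}}t}>0$ for all large $t$, i.e., $-\mathcal{L}^{\boldsymbol{k}}$ is EEP even though it is Hurwitz. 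So the last arrow of your case-2 chain (``Hurwitz $\Rightarrow$ not marginally stable $\Rightarrow$ not EEP'') is a non sequitur, and this very example refutes statement 2 itself on structurally imbalanced networks.

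The case-1 obstacle you flagged is likewise unfillable, because statement 1 is also false without structural balance: on the same triangle take $\boldsymbol{k}=\mathbf{0}_{3}\le\boldsymbol{\delta}$, $\boldsymbol{k}\ne\boldsymbol{\delta}$; then $\mathcal{L}^{\boldsymbol{k}}=\mathcal{L}$ is positive semidefinite with eigenvalues $\left\{ 0,1/5,3\right\}$ and null vector $\mathds{1}_{3}$, so $-\mathcal{L}^{\boldsymbol{k}}$ is EEP by the paper's own Lemma \ref{lem:eventually-positivity}. (Both examples are bidirectional, but an asymmetric weight-balanced perturbation preserves them, since simplicity of the dominant eigenvalue and positivity of its eigenvectors are open conditions.) The theorem is really a statement about \emph{structurally balanced} networks: that hypothesis appears in Theorem \ref{thm:EP-undirected} and in Theorems \ref{thm:SB-epsilon-gap-sufficient-necessary} and \ref{thm:SB-epsilon-gap-sufficient-necessary-directed}, which is what the paper's one-line proof silently invokes, but it was dropped from this statement. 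Once it is assumed, you should abandon the marginal-stability detour (whose grounding defect the paper's own sketch shares) in favor of a direct sign-pattern argument: structural balance gives a Gauge matrix $G=\text{{\bf diag}}\left\{ \sigma_{1},\ldots,\sigma_{n}\right\}$ with $GWG=|W|$, so $M:=-G\mathcal{L}^{\boldsymbol{k}}G$ has off-diagonal entries $|w_{ij}|\ge0$, i.e., $M$ is Metzler and $e^{Mt}\ge0$ for all $t\ge0$; therefore $[e^{-\mathcal{L}^{\boldsymbol{k}}t}]_{ij}=\sigma_{i}\sigma_{j}[e^{Mt}]_{ij}\le0$ whenever $\sigma_{i}\ne\sigma_{j}$, and such pairs exist because the network has a negative edge. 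Thus $e^{-\mathcal{L}^{\boldsymbol{k}}t}$ is never a positive matrix and $-\mathcal{L}^{\boldsymbol{k}}$ is not EEP --- for \emph{every} $\boldsymbol{k}\in\mathfrak{D}_{\ge0}$, which settles both statements at once. A final, now moot, slip: your case-1 split conflates structural balance of the symmetrized network with that of $\mathcal{G}$ itself; these are different notions when digons carry opposite signs, and Theorem \ref{thm:SB-epsilon-gap-sufficient-necessary-directed} requires the latter.
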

\begin{proof}
The proof follows similar procedures as that in Theorem \ref{thm:SB-epsilon-gap-sufficient-necessary},
we shall omit it for space.
\end{proof}
According to Theorems \ref{thm:EP-undirected} and \ref{thm:EP-directed},
one can see that the eventually (exponentially) positivity cannot
characterize the stability of compensated dynamics (\ref{eq:feedback-overall})
under undirected/directed signed networks. This gap is eventually
filled by the results in this paper (Theorems \ref{thm:SB-epsilon-gap-sufficient-necessary}
and \ref{thm:SB-epsilon-gap-sufficient-necessary-directed}).

\section{Conclusion Remarks}

The negative edges in a signed network can often lead to an unstable
Laplacian matrix, essentially owing to the lost of its diagonal dominance.
In this paper, a graph-theoretic characterization of stability of
signed Laplacian is provided from the perspective of negative cut
set\emph{ }which not only provides insight into the design of stable
signed networks but also indicates that manipulation of edge weights
cannot change an unstable signed network containing negative cut set
into a stable one. A self-loop compensation mechanism is subsequently
introduced and the connection between self-loop compensation (for
re-establishment of diagonal dominance of signed Laplacian) and stability/consensus
of signed network is examined. The correlation between the selection
of compensation vector and collective behavior of the network is shown
to be closely related to the structural balance of the underlying
signed network. The correlation between the stability of the compensated
dynamics obtained by self-loop compensation and eventually positivity
is further discussed. The results in this work eventually provide
a novel perspective on the stability of signed Laplacian and its correspondence
with the graph-theoretic characterization of the underlying signed
network. 

\bibliographystyle{IEEEtran}
\phantomsection\addcontentsline{toc}{section}{\refname}\bibliography{mybib}

\end{document}